\newcommand{\cmark}{\ding{51}}
\newcommand{\xmark}{\ding{55}}
\definecolor{DarkGreen}{rgb}{0.1,0.5,0.1}
\renewcommand*{\backref}[1]{}
\renewcommand*{\backrefalt}[4]{%
    \ifcase #1 (Not cited.)%
    \or        (Cited on page~#2)%
    \else      (Cited on pages~#2)%
    \fi}
\Crefname{property}{Property}{Properties}
\Crefname{example}{Example}{Examples}
\Crefname{table}{Table}{Tables}
\tikzset{snake it/.style={decorate, decoration=snake}}
\colorlet{mygray}{gray!40}
\let\oldnl\nl%
\newcommand{\nonl}{\renewcommand{\nl}{\let\nl\oldnl}}%
\newtheorem{lemma}{Lemma}%
\newtheorem{proposition}{Proposition}%
\theoremstyle{definition}
\newenvironment{example}{\pushQED{\qed}\examplex}{\popQED\endexamplex}
\theoremstyle{remark}
\newtheorem{remark}{Remark}
\Crefname{claim}{Claim}{Claims}
\renewcommand{\>}{\succ}
\newcommand{\DA}{\texttt{DA}}
\newcommand{\NonProp}{\textup{\textrm{Non-Prop}}}
\renewcommand{\O}{\mathcal{O}}
\newcommand{\Prop}{\textup{\textrm{Prop}}}
\begin{document}

\title{Two for One \& One for All:\\ Two-Sided Manipulation in Matching Markets}
\date{}

\author[1]{Hadi Hosseini}
\author[2]{Fatima Umar}
\author[3]{Rohit Vaish}
\affil[1]{Pennsylvania State University\\
	{\small\texttt{hadi@psu.edu}}}
\affil[2]{Rochester Institute of Technology\\
	{\small\texttt{fu1476@rit.edu}}}
\affil[3]{Indian Institute of Technology Delhi\\
	{\small\texttt{rvaish@iitd.ac.in}}}

\maketitle

\begin{abstract}
Strategic behavior in two-sided matching markets has been traditionally studied in a ``one-sided'' manipulation setting where the agent who misreports is also the intended beneficiary. Our work investigates ``two-sided'' manipulation of the deferred acceptance algorithm where the misreporting agent and the manipulator (or beneficiary) are on different sides. Specifically, we generalize the recently proposed \emph{accomplice manipulation} model (where a man misreports on behalf of a woman) along two complementary dimensions: (a) the \emph{two for one} model, with a pair of misreporting agents (man and woman) and a single beneficiary (the misreporting woman), and (b) the \emph{one for all} model, with one misreporting agent (man) and a coalition of beneficiaries (all women).

Our main contribution is to develop polynomial-time algorithms for finding an optimal manipulation in both settings. We obtain these results despite the fact that an optimal \emph{one for all} strategy fails to be \emph{inconspicuous}, while it is unclear whether an optimal \emph{two for one} strategy satisfies the inconspicuousness property. We also study the conditions under which stability of the resulting matching is preserved. Experimentally, we show that two-sided manipulations are more frequently available and offer better quality matches than their one-sided counterparts.
\end{abstract}

\section{Introduction}

The \emph{deferred acceptance} algorithm~\citep{GS62college} 
is one of the biggest success stories of matching theory and market design. It has profoundly impacted numerous practical applications including \emph{school choice}~\citep{APR05new,APR+05boston} and \emph{entry-level labor markets}~\citep{RP99redesign} %
and has inspired a long line of work in economics and computer science~\citep{GI89stable,RS92two,R08deferred,M13algorithmics}.

The success of the deferred acceptance (or \DA{}) algorithm has been driven by its \emph{stability} property, which prevents pairs of agents %
from preferring each other over their assigned partners. Stability eliminates the incentives for agents to participate in secondary markets or `scrambles'~\citep{KPR13matching}, and has been a key predictor of
the long-term sustenance of many real-world matching markets~\citep{R02economist}.%

Unfortunately, \emph{any} stable matching algorithm is known to be vulnerable to strategic misreporting of preferences by the agents~\citep{R82economics}. For the \DA{} algorithm, in particular, it is known that truth-telling is a dominant strategy for the proposing side---colloquially, the \emph{men}---implying that any strategic behavior is confined to the proposed-to side---the \emph{women}~\citep{DF81machiavelli,R82economics}. 

\paragraph{One-sided vs. two-sided manipulation.} Given the strong practical appeal of \DA{} algorithm, significant research effort has been devoted towards understanding its incentive properties. Much of this work has focused on ``one-sided'' manipulation wherein the agent who misreports %
 is also the intended beneficiary; %
 that is, the misreporting agent and the beneficiary are on the \emph{same} side. For this \emph{self manipulation} problem, the structural and computational questions have been extensively studied~\citep{DF81machiavelli,GS85ms,GS85some,TS01gale,VG17manipulating}.

By contrast, there are many real-world settings that, in essence, resemble ``two-sided'' manipulations where the misreporting agent and the beneficiary are on \emph{different} sides. For example, in the student-proposing school choice, schools could influence the preferences of students they find ``undesirable'' (such as those from low-income backgrounds) by using indirect measures such as fee hike%
~\citep{HKN16improving}. In ridesharing platforms~\citep{banerjee2019ride}, %
a driver may influence the preferences of certain riders by strategically moving to a farther distance. Similarly, in a gig economy, freelancers' preferences over tasks may be affected by an employer's restrictive requirements~(e.g., demanding additional IDs or enforcing zoning).

Motivated by these examples, recent works have studied the \emph{accomplice manipulation} model wherein a man misreports his preferences in order to help a specific woman~\citep{BH19partners,HUV21accomplice}. It has been shown via simulations that accomplice manipulation strategies are \emph{more frequently available} than self manipulation and result in \emph{better matches} for the woman. Further, an optimal misreport for the accomplice %
is known to be \emph{inconspicuous} (i.e., the manipulated list can be derived from his true list by promoting exactly one woman), \emph{efficiently computable}, and \emph{stability-preserving} (i.e., the manipulated \DA{} matching is stable with respect to the true preferences).

\paragraph{Towards coalitional two-sided manipulation.} 
The aforementioned advantages of two-sided manipulation call for a deeper investigation into the topic. Our work takes a step in this direction by focusing on \emph{coalitional aspects} of the two-sided manipulation problem.

Our starting point is the accomplice manipulation model~\citep{HUV21accomplice}, which involves one misreporting agent and one beneficiary (i.e., a \emph{one for one} setting). We consider two coalitional generalizations of this model: %
(i)~The \emph{two for one} model, with a coalition of two misreporting agents (a man and a woman) and a single beneficiary (the woman), and (ii) the \emph{one for all} model, with one misreporting agent (man) and a coalition of beneficiaries (all women).

Coalitional manipulation of the \DA{} algorithm has received considerable theoretical interest over the years~\citep{DF81machiavelli,GS85ms,
GS85some,DGS87further,KM09successful,KM10cheating}, and recently its practical relevance has also been discussed. Indeed, in college admissions in China, universities have been known to form ``leagues'' for conducting independent recruitment exams allowing them to jointly manipulate admission results~\citep{SDT21coalitional}. Prior work on coalitional manipulation has focused exclusively on one-sided manipulation. %

\begin{table}[t]
\centering
\footnotesize
\begin{tabular}{|cc|ccc|c|}
\hline
\multirow{2}{*}{\textbf{Who misreports?}} & \multirow{2}{*}{\textbf{Who benefits?}}  & \multicolumn{3}{c|}{\textbf{Results for optimal manipulation}} &  \multirow{2}{*}{\textbf{Reference}} \\
&& Inconspicuous? & Poly-time? & Stability-preserving? &\\
\hline
Woman $w$ & Woman $w$ & \cmark & \cmark & \cmark & \citet{VG17manipulating} \\
Man $m$ & Woman $w$ & \cmark & \cmark & \cmark & \citet{HUV21accomplice} \\
Man $m$ and woman $w$ & Woman $w$ & Open & \cmark & \xmark & \textbf{This paper}\\
Man $m$ & All women & \xmark & \cmark & \cmark & \textbf{This paper}\\
\hline
\end{tabular}
\caption{Summary of previously known (top two rows) and new results (bottom two rows).%
}
\label{tab:Summary}
\end{table}

\paragraph{Our contributions.}
We study two coalitional generalizations of accomplice manipulation %
and make the following theoretical and experimental contributions (see \Cref{tab:Summary}):

\begin{itemize}

    \item \textbf{Two for one}: We show that when the accomplice and the beneficiary can jointly misreport, an optimal \emph{pair manipulation} strategy can be strictly better than either of the optimal individual (i.e., self or accomplice) manipulation strategies~(\Cref{eg:pair_manipulation}). 
    In contrast to both self and accomplice manipulation, an optimal pair manipulation may not be stability-preserving~(\Cref{rem:Unstable_Pair}) and it is unclear whether it is inconspicuous. Nevertheless, we provide a polynomial-time algorithm for computing an optimal pair manipulation~(\Cref{thm:PairManipulation}).

    \item \textbf{One for all}: We observe that optimal  manipulation by the accomplice for helping all women  %
    could fail to be inconspicuous~(\Cref{eg:inconspicuous_is_suboptimal}). By contrast, when helping a single woman, an optimal strategy for the accomplice is known to be inconspicuous~\citep{HUV21accomplice}.
    
    Despite losing this structural benefit, we develop a polynomial-time algorithm for computing an optimal one-for-all strategy~(\Cref{cor:PolyAlgorithm_OptimalStrategy}), and show that such a strategy is stability-preserving (\Cref{cor:Min_NoRegret_PushUp_StabilityPreserving}). In fact, we show a stronger result that a \emph{minimum} optimal misreport~(i.e., one that pushes up as few women as possible) can be efficiently computed and provide tight bounds on the size of the promoted set~(\Cref{sec:Smallest-size_One-for-all}).
    
    \item \textbf{Experiments}: We show via simulations on uniformly random preferences that two-sided strategies are more frequently available~(\Cref{fig:two_sided_frequencies}) and result in better quality matches~(\Cref{sec:Additional_Experiments}) than one-sided strategies.
\end{itemize}

\paragraph{Related Work.}
The literature on strategic aspects of stable matching procedures has classically focused on \emph{truncation} strategies where the misreported list is a prefix of the true list~\citep{DF81machiavelli,R82economics,RR99truncation}. %
Our work, on the other hand, focuses on \emph{permutation manipulation} where the manipulated list is a reordering of the true list.

\citet{TS01gale} initiated the study of permutation manipulation by a \emph{single} woman and provided a polynomial-time algorithm for finding an optimal misreport. \citet{VG17manipulating} showed that an optimal strategy for the woman is inconspicuous and stability-preserving. %
Permutation manipulation by a \emph{group} of agents has been studied for 
a coalition of men~\citep{H06cheating,H07cheating} and for a coalition of women~\citep{KM09successful,KM10cheating,SDT21coalitional}. %
In particular, \citet{SDT21coalitional} provided an algorithm for finding a strategy for a coalition of women that is Pareto optimal among all stability-preserving strategies, and showed that such a strategy is inconspicuous.

In the \emph{two-sided} setting, \citet{BH19partners} introduced the accomplice manipulation model and observed that it can be more beneficial for a woman than optimal self manipulation. Subsequently, \citet{HUV21accomplice} studied \emph{with-regret} and \emph{no-regret} accomplice manipulation, depending on whether the accomplice's match worsens or stays the same. They showed that an optimal \emph{no-regret} manipulation is stability-preserving while its \emph{with-regret} counterpart is not, and that optimal strategies under both models are inconspicuous and therefore efficiently computable. Our work will focus exclusively on no-regret strategies.

\section{Preliminaries}

\paragraph{Problem instance.} An instance of the \emph{stable marriage problem}~\citep{GS62college} is given by a tuple $\langle M, W, \> \rangle$, where $M$ is a set of $n$ men, $W$ is a set of $n$ women, and $\>$ is a \textit{preference profile} which specifies the preference lists of the agents. The preference list of a man $m \in M$, denoted by $\>_m$, is a strict total order over all women in $W$. The list $\>_w$ of a woman $w \in W$ is defined analogously. We will write $w_1 \succeq_m w_2$ to denote ``either $w_1 \>_m w_2$ or $w_1 = w_2$'', and write $\>_{-m}$ to denote the %
profile without the list of man $m$; thus, $\> = \{\>_{-m}, \>_m\}$.

\paragraph{Stable matching.} A \emph{matching} is a function $\mu: M \cup W \rightarrow M \cup W$ such that $\mu(m) \in W$ for all $m \in M$, $\mu(w) \in M$ for all $w \in W$, and $\mu(m) = w$ if and only if $\mu(w) = m$. Given a matching $\mu$, a \emph{blocking pair} with respect to the preference profile $\>$ is a man-woman pair $(m, w)$ who prefer each other over their assigned partners, i.e., $w \>_m \mu(m)$ and $m \>_w \mu(w)$. A matching is said to be \emph{stable} if it does not have any blocking pair. We will write $S_{\>}$ to denote the set of all stable matchings with respect to $\>$. Note that in the worst case, the size of $S_{\>}$ can be exponential in $n$~\citep{K97stable}.%

For any pair of matchings $\mu,\mu'$, we will write $\mu \succeq_M \mu'$ to denote that all men weakly prefer $\mu$ over $\mu'$, i.e., $\mu(m) \succeq_m \mu'(m)$ for all $m \in M$ (analogously $\mu \succeq_W \mu'$ for women).

\paragraph{Deferred acceptance algorithm.}
The deferred acceptance~(\DA{}) algorithm is a well-known procedure for finding stable matchings~\citep{GS62college}. Given as input a preference profile, the algorithm alternates between a \emph{proposal} phase, where each currently unmatched man proposes to his favorite woman among those who haven't rejected him yet, and a \emph{rejection} phase, where each woman tentatively accepts her favorite proposal and rejects the rest. The algorithm terminates when no further proposals can be made. %

\citet{GS62college} showed that given any preference profile~$\>$ as input, the matching computed by the \DA{} algorithm, which we will denote by $\DA(\>)$, is stable. Furthermore, this matching is \emph{men-optimal} as it assigns to each man his favorite partner among all stable matchings in $S_{\>}$. Subsequently, it was observed that the same matching is also \emph{women-pessimal}~\citep{MW71stable}.

\begin{restatable}[\protect\citealp{GS62college,MW71stable}]{proposition}{DAstableMenOptimalWomenPessimal}
Let $\>$ be a %
profile and let ${\mu \coloneqq \DA(\>)}$. Then, $\mu \in S_{\>}$. Furthermore, for any $\mu' \in S_{\>}$, $\mu \succeq_M \mu'$ and $\mu' \succeq_W \mu$.
\label{prop:DA_stable_MenOptimal_WomenPessimal}
\end{restatable}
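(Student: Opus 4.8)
The plan is to establish the three assertions in turn, each by an exchange argument, with the men-optimality claim doing the real work. For \emph{stability}, suppose some pair $(m,w)$ blocks $\mu \coloneqq \DA(\>)$, so $w \>_m \mu(m)$ and $m \>_w \mu(w)$. Since men propose in decreasing order of preference and $m$ ends matched to $\mu(m)$, which he ranks below $w$, he must have proposed to $w$ and been rejected at some point of the run. A woman rejects a proposal only when she is already holding one she strictly prefers, and the man a woman tentatively holds only improves over the course of the algorithm; hence $\mu(w) \succeq_w m$, and in fact $\mu(w) \>_w m$ because $\mu(w) \neq m$. This contradicts $m \>_w \mu(w)$.

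For \emph{men-optimality}, call a woman $w$ \emph{achievable} for a man $m$ if $\mu'(m) = w$ for some $\mu' \in S_{\>}$. The key lemma is that throughout the execution of \DA{}, no man is ever rejected by a woman achievable for him; I would prove it by induction on the sequence of rejection events. Consider the first rejection violating the claim: $w$ rejects $m$ although $w$ is achievable for $m$, witnessed by a stable $\mu'$ with $\mu'(m) = w$. At that moment $w$ holds some $m'$ with $m' \>_w m$. Since no earlier rejection violated the claim and $m'$ has, before proposing to $w$, been rejected by every woman he ranks strictly above $w$, his $\mu'$-partner $w' \coloneqq \mu'(m')$ — which is achievable for him — cannot lie strictly above $w$ in his list, so $w \succeq_{m'} w'$; and $w' = w$ is impossible as it would force $\mu'(m') = \mu'(m)$, so $w \>_{m'} w'$. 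Then $m'$ prefers $w$ to $\mu'(m')$ and $w$ prefers $m'$ to $\mu'(w) = m$, so $(m',w)$ blocks $\mu'$, contradicting its stability. Granting the lemma: each man proposes down a prefix of his list, is rejected only by women who are not achievable for him, and ends matched to $\mu(m)$, which is achievable for him because $\mu$ is stable; hence $\mu(m)$ is his most preferred achievable woman, so $\mu(m) \succeq_m \mu'(m)$ for every $\mu' \in S_{\>}$, i.e., $\mu \succeq_M \mu'$.

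For \emph{women-pessimality}, let $\mu' \in S_{\>}$ and suppose some woman $w$ has $\mu(w) \>_w \mu'(w)$; write $m \coloneqq \mu(w)$. By men-optimality, $w = \mu(m) \succeq_m \mu'(m)$, and $\mu'(m) = w$ is impossible since it would give $\mu'(w) = m = \mu(w)$; hence $w \>_m \mu'(m)$. Combined with $m \>_w \mu'(w)$, the pair $(m,w)$ blocks $\mu'$ — a contradiction. Therefore $\mu'(w) \succeq_w \mu(w)$ for every $w$, i.e., $\mu' \succeq_W \mu$.

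I expect the induction in the men-optimality step to be the delicate part: one must order the rejection events correctly, define the ``first violating rejection'' cleanly, and verify that the man $m'$ currently held by $w$ has indeed already been rejected by every woman he prefers to $w$, so that the inductive hypothesis can be applied to his partner $w'$ in the witnessing stable matching $\mu'$. Once stability is in hand, the other two parts are one-line blocking-pair arguments.
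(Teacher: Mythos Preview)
Your proof is correct and is precisely the classical argument: stability by tracing rejections, men-optimality via the ``no man is rejected by an achievable woman'' lemma proved by induction on rejection events, and women-pessimality by a blocking-pair argument leveraging men-optimality. Note, however, that the paper does not supply its own proof of this proposition; it is stated as a known result and attributed to \citet{GS62college} and \citet{MW71stable}, so there is no in-paper proof to compare against---your write-up simply reproduces the standard proof from those references.
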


\paragraph{Self manipulation.}
Given a profile $\>$ and the matching $\mu \coloneqq \DA(\>)$, we say that woman $w$ can \emph{self manipulate} if there exists a list $\>'_w$ (which is a \emph{permutation} of $w$'s true list $\>_w$) such that $\mu'(w) \>_w \mu(w)$, where $\mu' \coloneqq \DA(\>_{-w},\>'_w)$. 
An \emph{optimal} self manipulation $\>'_w$ is one for which there is no other list $\>''_w$ such that $\mu''(w) \>_w \mu'(w)$, where $\mu'' \coloneqq \DA(\>_{-w},\>''_w)$.

\paragraph{Accomplice manipulation.} A different model of strategic behavior is \emph{accomplice manipulation}, wherein a woman $w$, instead of misreporting herself, asks a man $m$ to misreport his preference in order to improve $w$'s match~\citep{BH19partners,HUV21accomplice}. Formally, given a profile $\>$ and a fixed man $m$, we say that woman $w$ can \emph{manipulate via accomplice} $m$ if there exists a list $\>'_m$ for man $m$ (which is a permutation of his true list $\>_m$) such that $\mu'(w) \>_w \mu(w)$, where $\mu \coloneqq \DA(\>)$ and $\mu' \coloneqq \DA(\>_{-m}, \>'_m)$. 
An \emph{optimal} accomplice manipulation $\>'_m$ is one for which there is no other list $\>''_m$ such that $\mu''(w) \>_w \mu'(w)$, where $\mu'' \coloneqq \DA(\>_{-m},\>''_m)$. 
We will call woman $w$ a `beneficiary' and man $m$ an `accomplice'.%

\paragraph{No-regret assumption.}
In this paper, we will consider two generalizations of accomplice manipulation: (a) the ``two for one'' problem with two misreporting agents (man $m$ and woman $w$) and a single beneficiary (woman $w$), and (a) the ``one for all'' problem with a single misreporting agent (man $m$) and a coalition of beneficiaries (all women). In both cases, we will assume \emph{no-regret} manipulation for the man which means that $m$'s match does not worsen upon misreporting, i.e.,  $\mu'(m) \succeq_m \mu(m)$.

Interestingly, for both generalizations mentioned above, the no-regret assumption implies that the accomplice's match stays the same, i.e., $\mu(m) = \mu'(m)$. Indeed, in the one-for-all problem, it follows from the strategyproofness of \DA{} algorithm for the proposing side that $\mu(m) \succeq_m \mu'(m)$. Along with the no-regret assumption, this implies $\mu(m) = \mu'(m)$. For the two-for-one problem where both man $m$ and woman $w$ can misreport, it is known that $m$ cannot be strictly better off unless $w$ is strictly worse off~\citep[Corollary 4]{H07cheating}. %
To prevent the beneficiary $w$ from being worse off, we must ensure that man $m$'s match does not improve, implying once again that $\mu(m) = \mu'(m)$.

\paragraph{Inconspicuous manipulation.} A misreported list (or \emph{strategy}) $\>'_m$ for an accomplice $m$ is said to be \emph{inconspicuous} if it can be derived from his true list $\>_m$ by promoting at most %
one woman and making no other changes. Similarly, when the misreporting agent is a woman $w$, inconspicuousness involves promoting at most %
one man in her true list $\>_w$.

\paragraph{Push up and push down operations}
For any man $m \in M$, let $\>^L_m$ and $\>^R_m$ denote the parts of $m$'s list above and below his \DA{} partner, respectively. That is, $\>_m = (\>^L_m,\mu(m),\>^R_m)$. We say that man $m$ \emph{pushes up} a set $X \subseteq W$ if the new list is $\>^{X\uparrow}_m \coloneqq (\>^L_m \cup X,\mu(m),\>^R_m \setminus X)$. Similarly, \emph{pushing down} a set $Y \subseteq W$ results in $\>^{Y\downarrow}_m \coloneqq (\>^L_m \setminus Y,\mu(m),\>^R_m \cup Y)$. Note that the exact positions at which agents in $X$ (or $Y$) are placed above (or below) $\mu(m)$ is not important, as long as the sets are appropriately pushed above~(or below) $\mu(m)$. \citet{H06cheating} has shown that the \DA{} outcome remains unchanged if each man $m$ arbitrarily permutes the part of his list above and below his \DA{}-partner $\mu(m)$.

\begin{restatable}[\protect\citealp{H06cheating}]{proposition}{PermutingFalsifiedLists}
Let $\>$ be a profile and let $\mu \coloneqq \DA(\>)$. For any man $m \in M$ with true list $\>_m = (\>^L_m,\mu(m),\>^R_m)$, 
let ${ \>'_m \coloneqq (\pi^L(\>^L_m),\mu(m),\pi^R(\>^R_m)) }$, where $\pi^L$ and $\pi^R$ are arbitrary permutations. Let %
$\mu' \coloneqq \DA(\>_{-m},\>'_m)$. Then, $\mu' = \mu$.
\label{prop:Permuting_Falsified_Lists}
\end{restatable}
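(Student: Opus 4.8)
Write $\>' \coloneqq (\>_{-m},\>'_m)$ for the modified profile and recall $\mu = \DA(\>)$. The only feature of $\>'_m$ that the argument will use is purely set-theoretic: the women appearing above $\mu(m)$ in $\>'_m$ form exactly the set $\>^L_m$, those below form exactly $\>^R_m$, and $\mu(m)$ sits in between; the internal orders imposed by $\pi^L,\pi^R$ will never matter. The plan is to first show that $\mu$ itself is stable with respect to $\>'$, and then to sandwich $\DA(\>')$ between $\mu$ from both sides using men-optimality and women-pessimality (\Cref{prop:DA_stable_MenOptimal_WomenPessimal}).

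\emph{Step 1: $\mu$ is stable under $\>'$.} Since every agent other than $m$ has the same list in $\>$ and $\>'$, any blocking pair $(m',w)$ of $\mu$ under $\>'$ with $m'\neq m$ would also block $\mu$ under $\>$, contradicting stability of $\mu=\DA(\>)$. So suppose $(m,w)$ blocks $\mu$ under $\>'$, i.e.\ $w\>'_m\mu(m)$ and $m\>_w\mu(w)$. But $w\>'_m\mu(m)$ forces $w\in\>^L_m$, hence $w\>_m\mu(m)$; together with $m\>_w\mu(w)$ this says $(m,w)$ blocks $\mu$ under $\>$ --- again a contradiction. Hence $\mu\in S_{\>'}$.

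\emph{Step 2: $\DA(\>')=\mu$.} Let $\mu'\coloneqq\DA(\>')$. By men-optimality of $\mu'$ with respect to $\>'$ and $\mu\in S_{\>'}$, we have $\mu'(m)\succeq'_m\mu(m)$; since $\>_m$ and $\>'_m$ have the same set of women weakly above $\mu(m)$, this is equivalent to $\mu'(m)\succeq_m\mu(m)$. On the other hand, strategyproofness of \DA{} for the proposing side~\citep{DF81machiavelli,R82economics} gives $\mu(m)\succeq_m\mu'(m)$, so $\mu'(m)=\mu(m)$. Now $\mu'$ is stable under the \emph{original} profile $\>$: a blocking pair must again involve $m$, and if $(m,w)$ blocks $\mu'$ under $\>$ then $w\>_m\mu'(m)=\mu(m)$, so $w\in\>^L_m$, so $w\>'_m\mu(m)=\mu'(m)$, making $(m,w)$ block $\mu'$ under $\>'$ --- impossible. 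Thus $\mu'\in S_{\>}$. Finally, women-pessimality of $\mu'$ w.r.t.\ $\>'$ (with $\mu\in S_{\>'}$, and women's lists identical in $\>$ and $\>'$) gives $\mu(w)\succeq_w\mu'(w)$ for every woman $w$, while women-pessimality of $\mu=\DA(\>)$ (with $\mu'\in S_{\>}$) gives $\mu'(w)\succeq_w\mu(w)$ for every $w$. Hence $\mu'(w)=\mu(w)$ for all $w$, i.e.\ $\mu'=\mu$.

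\emph{Where the difficulty lies.} The subtle point is that permuting $\>^L_m$ could a priori let $m$ ``trade up'' to some woman he originally ranked above $\mu(m)$, which is exactly what would break Step~2; this is precisely what strategyproofness of the proposing side rules out, and it also forces us to establish $\mu'(m)=\mu(m)$ \emph{before} transferring stability back to $\>$. A more self-contained alternative (essentially Huang's original argument) is to run man-proposing \DA{} on $\>$ and on $\>'$ in parallel and show by induction on the proposal sequence that the two executions visit identical states: in the run on $\>$, $m$ is rejected by every woman in $\>^L_m$ and then accepted forever by $\mu(m)$, so each such woman ends up matched to a partner she strictly prefers to $m$ and therefore rejects $m$ regardless of \emph{when} he proposes; meanwhile $m$ never proposes into $\>^R_m$ in either run. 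Consequently each woman receives the same set of proposals in both runs, so the final matchings agree. The stability-based argument above is simply shorter to formalize.
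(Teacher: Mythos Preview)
Your proof is correct. Note, however, that the paper does not actually supply its own proof of this proposition: it is stated as a known result attributed to \citet{H06cheating} and used as a black box throughout. So there is no ``paper's proof'' to compare against directly.

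That said, the approach you outline is genuinely different from the operational argument you sketch at the end (and which is closer in spirit to Huang's original reasoning). Huang's argument tracks the execution of \DA{} proposal-by-proposal and argues that the set of proposals is unchanged; your main argument instead works at the level of the stable matching lattice, using only stability, men-optimality/women-pessimality (\Cref{prop:DA_stable_MenOptimal_WomenPessimal}), and strategyproofness for the proposing side. The trade-off is that your route is shorter and avoids any induction on the \DA{} run, but it imports strategyproofness as an external fact, whereas the execution-based argument is fully self-contained. Both are clean; your remark that establishing $\mu'(m)=\mu(m)$ \emph{before} transferring stability back to $\>$ is essential is exactly right, and is the one place where the argument would otherwise break.
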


All omitted proofs are in the Appendix.

\section{Two for One: Helping a Single Woman Through Pair Manipulation} \label{sec:two-for-one}

In this section, we will consider the ``two for one'' generalization of accomplice manipulation where the accomplice and the strategic woman can jointly misreport in order to benefit the latter. %
To see how such a generalization can be useful, let us start with an example showing that \emph{pair} manipulation can be strictly more beneficial for the woman compared to either self or accomplice manipulation~(\Cref{eg:pair_manipulation}).

\begin{example}[\textbf{Pair manipulation can be strictly better than accomplice or self manipulation}] \label{eg:pair_manipulation}
Consider the following preference profile where the \DA{} outcome is underlined. The notation ``${m_1}: w_5 \ w_3 \ w_4 \ w_2$ \ $w_1$'' denotes that $m_1$'s top choice is $w_5$, second choice is $w_3$, and so on.

\begin{table}[H]
    \centering
    \begin{tabularx}{0.7\linewidth}{XXXXXXXXXXXXXXX}
            $\boldsymbol{\textcolor{blue}{{m_1}}}\colon$ & $w_5$ & $\underline{w_3^*}$ & $w_4$ & $w_2$ & $w_1$ && $\boldsymbol{\textcolor{blue}{{w_1}}}\colon$ & $m_3^*$ & $\underline{m_4}$ & $m_5$ & $m_1$ & $m_2$\\
            ${m_2}\colon$ & $\underline{w_4^*}$ & $w_1$ & $w_5$ & $w_3$ & $w_2$ && ${w_2}\colon$ & $m_1$ & $\underline{m_5^*}$ & $m_3$ & $m_2$ & $m_4$\\
            ${m_3}\colon$ & $\underline{w_5}$ & $w_4$ & $w_1^*$ & $w_2$ & $w_3$ && ${w_3}\colon$ & $m_5$ & $m_4$ & $m_3$ & $m_2$ & $\underline{m_1^*}$\\
            ${m_4}\colon$ & $\underline{w_1}$ & $w_4$ & $w_5^*$ & $w_2$ & $w_3$ && ${w_4}\colon$ & $\underline{m_2^*}$ & $m_5$ & $m_3$ & $m_1$ & $m_4$\\
            ${m_5}\colon$ & $\underline{w_2^*}$ & $w_4$ & $w_3$ & $w_1$ & $w_5$ && ${w_5}\colon$ & $m_5$ & $m_2$ & $m_4^*$ & $\underline{m_3}$ & $m_1$
        \end{tabularx}
\end{table}

Suppose the manipulating pair is $(m_1, w_1)$. Since $m_4$ is the only man who proposes to $w_1$ during the execution of \DA{} algorithm on this profile, it follows %
that there is no beneficial self manipulation strategy for $w_1$. 

To find an optimal accomplice manipulation for $m_1$, it suffices to focus on inconspicuous strategies~\citep{HUV21accomplice}. It is straightforward to verify that promoting any woman below $w_3$ in $m_1$'s list does not result in a better match for $w_1$. In fact, none of the other men can give $w_1$ a better partner via no-regret manipulation.

The \DA{} matching when $m_1$ and $w_1$ jointly misreport with %
$\>'_{m_1} \coloneqq w_1 \> w_5 \> w_3 \> w_4 \> w_2$ and $\>'_{w_1} \coloneqq m_3 \> m_5 \> m_1 \> m_4 \> m_2$ is marked by ``$*$''. Note that the strategic woman $w_1$ is now able to match with her top choice without worsening the match of the accomplice $m_1$. \qed
\end{example}

Since pair manipulation can be strictly more beneficial than either self or accomplice manipulation, it is natural to ask whether an \emph{optimal} pair manipulation can be efficiently computed. 
Our main result in this section is that an optimal pair manipulation can be computed in polynomial time.

\subsection{Computing an Optimal Joint Strategy} \label{sec:Computing_Pair}

A natural approach for finding an optimal joint strategy is to combine (or ``concatenate'') an optimal self manipulation for the woman and an optimal accomplice manipulation for the man.
However, as we saw in \Cref{eg:pair_manipulation}, an optimal pair manipulation may exist despite there being no beneficial accomplice nor self manipulations. Further, \Cref{eg:concatenate_pair} in \Cref{subsec:concatenate} shows that the woman's match could actually \emph{worsen} by naively combining the respective individual strategies.
Thus, pair manipulation appears to be ``more than just the sum of its parts''.

Another natural approach is to combine \emph{inconspicuous} (but not necessarily individually optimal) strategies of the accomplice and the strategic woman. However, as we discuss in \Cref{Proof_PairManipulation} 
in the Appendix, there are some subtleties that arise from this approach that become difficult to resolve. We leave the question of determining whether optimal pair manipulation is inconspicuous as an open problem.

Nevertheless, the idea of looking for \emph{structure} in the individual strategies turns out to be useful. To see why, consider a manipulating pair $(m,w)$. Let $\>^*_m$ and $\>^*_w$ denote the respective lists of $m$ and $w$ under an optimal pair manipulation, and let $\>^* \coloneqq (\>_{-\{m,w\}},\>^*_m,\>^*_w)$ denote the corresponding preference profile.

Since it is easier to think about single-agent misreports, let us break down the transition from the true profile $\>$ to the pair manipulation profile $\>^*$ in two steps: First, swap $w$'s list in $\>$ to obtain the intermediate profile $\>^w \coloneqq \{\>_{-w},\>^*_w\}$, and then swap $m$'s list in $\>^w$ to get $\>^*$; see \Cref{fig:PairManipulation}. We will show that this two-step approach allows us to impose additional structure on the individual strategies $\>^*_m$ and $\>^*_w$.

\begin{figure}[t]%
\centering
\begin{tikzpicture}
    \node[] at (0, 0) (l) {$\>$};
    \node[] at (8, 0)   (r) {$\>^*$};
    \node[] at (4,0)   (b) {$\>^w= \{\>_{-w},\>^*_w\}$};
    \draw[-latex,line width=1pt] (l)--(b) node [midway,above=0.5pt,sloped,fill=white] {{$\>_w \rightarrow \, \>^*_w$}};
    \draw[-latex,line width=1pt] (b)--(r) node [midway,above=0.5pt,sloped,fill=white] {{$\>_m \rightarrow \, \>^*_m$}};
\end{tikzpicture}
\caption{Preference profiles under pair manipulation.}
\label{fig:PairManipulation}
\end{figure}

Let us start by analyzing woman $w$'s strategy $\>^*_w$. Consider the transition $\> \rightarrow \, \>^w$ in %
\Cref{fig:PairManipulation}, where $w$ is the only misreporting agent. Let $S_w$  denote the set of preference lists that can be obtained from $w$'s true list $\>_w$ by moving some pair of men to the top two positions, i.e., 
$$S_w \coloneqq \{ (m_i, \, m_j, \, {\>_w \setminus \{m_i, m_j\}}) : m_i, m_j \in M \}.$$ 
In \Cref{prop:WomanListInconspicuous}, we show that for an \emph{arbitrary} misreport by $w$, there exists a list in $S_w$ that creates the same matching for \emph{all} agents. Thus, it follows that $\>^*_w \, \in S_w$.
Observe that the set $S_w$ is of polynomial size $\O(n^2)$ and can be efficiently enumerated.%

\begin{restatable}{lemma}{WomanListInconspicuous}
Let $\>$ be a profile and let $\>'_w$ be any misreport %
for a fixed woman $w$. Then, there exists a list ${\>''_w \, \in S_w}$ that achieves the same matching, i.e., $\mu'' = \mu'$, where $\mu' \coloneqq \DA(\>_{-w},\>'_w)$ and $\mu'' \coloneqq \DA(\>_{-w},\>''_w)$.
\label{prop:WomanListInconspicuous}
\end{restatable}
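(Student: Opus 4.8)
The claim is essentially that a woman's manipulation is without loss of generality "inconspicuous at the top" — only the top two positions of her list matter, and those can be filled with any two men. The natural way to prove this is to take an arbitrary misreport $\>'_w$, run $\DA$ on $(\>_{-w}, \>'_w)$ to get $\mu'$, and identify the two men who are "relevant" to the outcome, then show that pushing exactly those two to the top (in a suitable order) and leaving everyone else in their true relative order reproduces $\mu'$.

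Let me think about which two men. The woman $w$ is matched in $\mu'$ to some man, say $m^* := \mu'(w)$. The first instinct is to just promote $m^*$ to the top — but that's the *one-man* inconspicuous case, and the lemma explicitly allows *two* men, so one man isn't always enough. Why would two be needed? Because during the $\DA$ run on $(\>_{-w},\>'_w)$, the woman $w$ may receive proposals from several men and hold/reject them in an order dictated by $\>'_w$; her behavior affects the chain of rejections among the men. The key observation (this is the kind of thing that appears in the Teo–Sethuraman and Vaish–Garg analyses) is that the relevant information is: who is the *last* man $w$ ends up holding (that's $\mu'(w) = m^*$), and whether there's a man she *temporarily* held and rejected whose presence high in her list mattered. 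Actually, I think the cleanest route is:

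**Key steps.**
First, run $\DA$ on the true profile restricted appropriately, or better, run $\DA$ on $(\>_{-w},\>'_w)$ and record the sequence of proposals $w$ receives. Let $m^* = \mu'(w)$ be her final partner. Now consider what happens if we replace $\>'_w$ by the list $\>''_w$ in which we promote to the top *only* $m^*$, keeping everyone else in true order: call the outcome $\mu''$. If $\mu'' = \mu'$ we are done (with $m_i = m_j$ effectively, or pad with any second man placed just below $m^*$ — note the set $S_w$ as defined uses an ordered pair $(m_i,m_j)$ but allowing $m_i = m^*$ and $m_j$ arbitrary covers this). If not, the discrepancy must come from a man $m'$ who, in the $\>'_w$-run, proposed to $w$, was held for a while, and was later rejected — and whose being held delayed *his* onward proposals in a way that changed the outcome. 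The plan is to show that it suffices to additionally promote this one man $m'$ to the second position, above $m^*$, i.e., $\>''_w = (m', m^*, \>_w \setminus \{m', m^*\})$; then in the new run $w$ holds $m'$ as long as he's around, rejects him exactly when $m^*$ proposes (or when $m'$ gets a better option elsewhere and withdraws), and thereafter holds $m^*$. The effect on every man's proposal sequence is identical to the original run.

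**The main obstacle.**
The hard part is the bookkeeping that shows exactly two men suffice — i.e., that there is never a need to carefully position three or more men among $w$'s received proposals. The argument has to exploit the structure of $\DA$: because $w$ is the *only* misreporter, her true-order tail $\>_w \setminus \{m',m^*\}$ is consistent with what $\DA$ "expects," and one shows that with $m',m^*$ pinned at the top, any proposal $w$ receives from a tail man is immediately rejected (since she'd rather have $m^*$, whom she's holding or will hold), so those tail men's proposal sequences are unaffected — they behave as if $w$ simply rejected them outright. Formalizing "unaffected" requires an induction on $\DA$ steps, coupling the run on $(\>_{-w},\>'_w)$ with the run on $(\>_{-w},\>''_w)$ and arguing they stay in lockstep on the men's side. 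I would set up an invariant: at every point where a man other than $m'$ or $m^*$ is about to propose, the set of women who have rejected him is the same in both runs. The subtlety is handling the *timing* of $w$'s rejection of $m'$ — this is where picking $m'$ to be specifically the man whose holding-interval matters, and placing him just above $m^*$, does the work. I'd expect the write-up to mirror the known single-woman manipulation arguments (Teo–Sethuraman; Vaish–Garg), adapted to extract a two-element rather than one-element promoted set.
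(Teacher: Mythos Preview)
Your plan has the right shape but leaves the crucial step unspecified, and there is an internal inconsistency that would derail the argument as written.

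\textbf{The second man is not pinned down.} You describe $m'$ only as ``a man who \ldots\ was held for a while and later rejected,'' and later as ``the man whose holding-interval matters.'' There can be several such men, and your coupling invariant gives no way to select among them. The paper makes a specific choice: with $p \coloneqq \mu'(w) = \Prop(w,\>',1)$, it takes the second man to be $q \coloneqq \Prop(w,\>',2)$, the second-favorite \emph{proposer} to $w$ according to the misreport $\>'_w$. Note that $q$ need not ever have been ``held'' (if $p$ proposed before $q$, then $q$ is rejected on arrival), so your informal description does not even cover the right man in general.

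\textbf{The ordering is inconsistent.} You write $\>''_w = (m', m^*, \>_w \setminus \{m',m^*\})$ but then say ``$w$ holds $m'$ \ldots\ rejects him exactly when $m^*$ proposes.'' With $m'$ ranked above $m^*$, $w$ would reject $m^*$, not $m'$. The paper places $p = m^*$ first and $q$ second, i.e.\ $\>''_w = (p,q,\>_w\setminus\{p,q\})$; your intended list is presumably $(m^*,m',\ldots)$, and the argument only goes through with that order.

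\textbf{Method.} Rather than a direct step-by-step coupling of two $\DA$ runs (whose timings can differ nontrivially), the paper first reduces the goal to showing $\mu''(w)=\mu'(w)$ via a non-bossiness lemma for women (if a woman's match is unchanged by her misreport, the entire matching is unchanged), and then transforms $\>'_w$ into $\>''_w$ by a sequence of \emph{adjacent} swaps: move $q$ next to $p$, move the block $(p,q)$ to the top, then bubble-sort the tail into true order. Each swap is controlled by the Swapping Lemma of Vaish--Garg, which says that swapping two adjacent men leaves $w$'s match unchanged whenever neither of them is among her top two proposers; by construction every swap performed is of this harmless kind. Your proposed coupling could perhaps be made to work once $m'=q$ is fixed, but the adjacent-swap route avoids the timing bookkeeping entirely and is the cleaner argument here.
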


Next consider the transition $\>^w \rightarrow \, \>^*$ in %
\Cref{fig:PairManipulation}. For this step, man $m$ is the only misreporting agent. 
We define $\widehat{\>}_m$ as the list obtained by promoting $m$'s original match, namely $\mu(m)$, to the top of his original list $\>_m$, and define $S_m \coloneqq \{\widehat{\>}_m\} \cup \{\widehat{\>}_m^{w'\uparrow} : w' \neq \mu(m)\}$ 
as the set consisting of the list $\widehat{\>}_m$ as well as all preference lists that are obtained by individually pushing up each woman other than $\mu(m)$ to the top position in the list $\widehat{\>}_m$. %
Further, we say that an arbitrary misreport $\>'_m$ is \emph{feasible} if $m$ matches with $\mu(m)$ under $\>' \coloneqq \{\>_{-\{m,w\}}, \>'_m, \>^*_w\}$. In \Cref{lem:AccompliceMisreports_PairManipulation}, we show that for an \emph{arbitrary} feasible misreport by the accomplice,
there exists another feasible list in $S_m$ that results in the same partner for $w$. Thus, we can assume that $\>^*_m \, \in S_m$. Again, observe that the set $S_m$ is of polynomial size $\O(n)$.

\begin{restatable}{lemma}{AccompliceMisreportsPairManipulation}
Let $\>$ be a profile and let $\>'_w$ and $\>'_m$ be any misreports for a fixed pair $(m, w)$ such that $\mu'(m) = \mu(m)$, where $\mu \coloneqq \DA(\>)$ and $\mu' \coloneqq (\>_{-\{m,w\}}, \>'_m, \>'_w)$. 
Then, there exists a list $\>''_m \ \in S_m$ such that $\mu''(m) = \mu(m)$ and $\mu''(w) = \mu'(w)$, %
where $\mu'' \coloneqq \DA(\>_{-\{m,w\}}, \>''_m, \>'_w)$.
\label{lem:AccompliceMisreports_PairManipulation}
\end{restatable}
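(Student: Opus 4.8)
The plan is to normalize $m$'s misreport and then peel women off its top one at a time. Write $P := \{\>^w_{-m}, \>'_m\}$, so $\mu' = \DA(P)$ and $\mu'(m) = \mu(m)$. Since $\mu(m)$ is $m$'s \DA{}-partner in $P$, \Cref{prop:Permuting_Falsified_Lists} (with $P$ playing the role of the ``true'' profile) lets me permute the part of $\>'_m$ above $\mu(m)$ and the part below without changing $\DA(P)$; hence I may assume $\>'_m = (A, \mu(m), B)$ with $A$ the set of women $m$ ranks above $\mu(m)$, $B = W \setminus (A \cup \{\mu(m)\})$ in $m$'s true relative order, and $A$ in any order I like. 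If $A = \emptyset$ this list is $\widehat{\>}_m \in S_m$; if $A = \{w^\dagger\}$ it is $\widehat{\>}_m^{w^\dagger\uparrow} \in S_m$ (the order of $B$ being irrelevant by the same proposition). So assume $|A| \ge 2$; by induction on $|A|$ it then suffices to find $w^\dagger \in A$ such that pushing $w^\dagger$ below $\mu(m)$ — i.e.\ replacing $\>'_m$ with $(A\setminus\{w^\dagger\},\mu(m),B\cup\{w^\dagger\})$ — leaves $m$ matched to $\mu(m)$ and $w$ matched to $\mu'(w)$.

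A useful observation first: $\mu'$ is stable not only in $P$ but in $Q := \{\>^w_{-m}, \widehat{\>}_m\}$. Indeed, no pair $(m,w')$ blocks $\mu'$ in $Q$ because $\widehat{\>}_m$ puts $\mu'(m)=\mu(m)$ on top, and any other pair blocks $\mu'$ in $Q$ exactly when it does in $P$ (the profiles differ only in $m$'s list), which is impossible. Hence $\DA(Q) \succeq_M \mu'$, so $\DA(Q)(m) \succeq_m \mu(m)$ and therefore $\DA(Q)(m) = \mu(m)$ (it is the top of $\widehat{\>}_m$); thus $\widehat{\>}_m$ is feasible, $\DA(Q)$ is the women-pessimal outcome among all feasible misreports, and $\mu'(w) \succeq_w \DA(Q)(w)$. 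This disposes of the case $\DA(Q)(w) = \mu'(w)$ outright and bounds how far a single promotion must move $w$.

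For the peeling step, run \DA{} on $P$ in a McVitie--Wilson order: first let $M\setminus\{m\}$ propose to a tentative matching $\mu_0$ (leaving one woman unmatched), then let $m$ propose down $\>'_m = (w_1,\dots,w_k,\mu(m),\dots)$, resolving each rejection chain before $m$'s next proposal. Because $m$ ends at $\mu(m)$, every $w_i$ rejects $m$ eventually, and whenever $m$'s proposal to some $w_i$ is tentatively accepted the ensuing chain must return to $w_i$ and evict $m$ (it cannot be absorbed by the unmatched woman, else $m$ would stay at $w_i \neq \mu(m)$). Let $w^\dagger$ be the last woman of $A$ (in $m$'s list order) ever to hold $m$; the women of $A$ after $w^\dagger$ reject $m$ on contact and play no role. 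The runs on $P$ and on the peeled profile coincide until $m$ proposes to $w^\dagger$, and deleting $w^\dagger$ from $A$ suppresses exactly the rejection chain that $m \to w^\dagger$ would trigger.

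The crux, and the step I expect to be the main obstacle, is showing that excising this chain moves neither $m$'s nor $w$'s final partner. For $m$: in the original run he is evicted from $w^\dagger$, then meets only vacuous women, then settles at $\mu(m)$, so one must check the excised chain was not what kept $\mu(m)$ available for him (and, if it was, re-select $w^\dagger$ as the last woman whose chain avoids $\mu(m)$). For $w$: a chain that both starts and ends at $w^\dagger$ is a cyclic shift of agents, and one has to argue its effect on $w$ is cancelled by the proposals $m$ makes after his eviction from $w^\dagger$. All of this bookkeeping must be done purely in terms of \DA{} rejection chains in $\{\>^w_{-m},\cdot\}$, since $w$'s list has already been replaced by $\>'_w$ and none of the structure of her true list is available — which is precisely what makes this harder than the one-for-one accomplice case, and the same essential difficulty reappears if one instead tries to close the gap from $\DA(Q)(w)$ to $\mu'(w)$ via the rotation that $\mu'$ traverses above $\DA(Q)$ in the stable-matching lattice of $Q$.
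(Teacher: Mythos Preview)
Your stability argument showing that $\widehat{\>}_m$ is feasible is clean and correct: $\mu'$ is stable in $Q = \{\>^w_{-m},\widehat{\>}_m\}$, hence $\DA(Q) \succeq_M \mu'$, hence $\DA(Q)(m) = \mu(m)$. The paper reaches the same conclusion by a different route (observing that $\widehat{\>}_m$ is a push-down of an equivalent form of $\>'_m$ and invoking \Cref{prop:PushDown}), so up to this point you are fine.

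The gap is everything after. You explicitly flag the ``crux'' --- that excising the rejection chain triggered by $m \to w^\dagger$ preserves both $m$'s and $w$'s partners --- and then do not prove it; you only list what would have to be checked and note the obstacles. This is not a detail: the chain that starts and ends at $w^\dagger$ can pass through $w$ and through $\mu(m)$, and there is no a priori reason your particular choice of $w^\dagger$ (the last woman of $A$ to hold $m$) is the right one. Your inductive scheme needs, at every step, \emph{some} woman of $A$ whose removal leaves $w$'s partner unchanged, and you have not exhibited one.

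The paper sidesteps all of this chain bookkeeping. Once $\widehat{\>}_m$ is known to be feasible, the transition $\widehat{\>}_m \to \>'_m$ is exactly a no-regret push-up of your set $A$ in the profile $\>^w$. Two off-the-shelf results from \citet{HUV21accomplice} then finish the job in one stroke: \Cref{prop:PushUpOneWoman} says that whatever match $w$ gets by pushing up all of $A$ is already achieved by pushing up a \emph{single} woman $w' \in A$, and \Cref{lem:SingleAgentPushUpNoRegret} says that this single push-up is itself no-regret. The resulting list $\widehat{\>}_m^{w'\uparrow}$ lies in $S_m$ and satisfies both required equalities. No induction, no McVitie--Wilson order, no rejection-chain surgery is needed; the heavy lifting is already done by those two propositions, which you did not invoke.
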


Although the lists in sets $S_m$ and $S_w$ are not necessarily inconspicuous versions of the true lists $\>_m$ and $\>_w$, respectively, we have been able to identify nominally-sized sets of misreports $S_m$ and $S_w$ that are sufficient to check, leading to a simple algorithm for finding an optimal pair manipulation strategy: Enumerate the sets $S_m$ and $S_w$, evaluate the \DA{} outcome for each possible $\>'_m \in S_m, \>'_w \in S_w$ pair, and return the strategy that gives the best match for the woman $w$ without causing regret for the accomplice $m$; see Algorithm~\ref{alg:PairManipulation}.%

\begin{algorithm}[t]
\small
\caption{Computing an optimal pair manipulation}
\label{alg:PairManipulation}
\begin{algorithmic}[1]
    \Require{Profile $\succ$, accomplice $m$, beneficiary $w$}
    \Ensure{Optimal pair manipulations $\>^*_m$ and $\>^*_w$}
    
    \State Initialize $(\mu^*, \>^*_m, \>^*_w) \gets (\mu \coloneqq \DA(\>), \>_m, \>_w)$
    \State Compute $S_w$ %
    \For{each $\>'_w \in S_w$}
        \State Compute $S_m$ %
        \For{each $\>'_m \in S_m$} 
            \State{$\mu' \gets \DA(\>_{-\{m,w\}}, \>'_m, \>'_w)$}
            \If{$\mu'(w) \>_{w} \mu^*(w)$ and $\mu'(m) = \mu(m)$}
                \State{Update $(\mu^*, \>^*_m, \>^*_w) \gets (\mu', \>'_m, \>'_w)$}
            \EndIf
        \EndFor
    \EndFor
    \State \Return{$\>^*_m$ and $\>^*_w$}
\end{algorithmic}
\end{algorithm}

\begin{restatable}{theorem}{PairManipulation}
An optimal pair manipulation can be computed in $\O(n^5)$ time.
\label{thm:PairManipulation}
\end{restatable}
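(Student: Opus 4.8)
The plan is to prove that Algorithm~\ref{alg:PairManipulation} is correct and runs in $\O(n^5)$ time; \Cref{prop:WomanListInconspicuous} and \Cref{lem:AccompliceMisreports_PairManipulation} do the structural heavy lifting, and what remains is to stitch them together and count. Correctness of the output \emph{format} is immediate: every list in $S_w$ is a permutation of $w$'s true list and every list in $S_m$ is a permutation of $m$'s true list, so whenever the algorithm updates its incumbent it does so to a genuine pair of misreports under which $m$ retains the match $\mu(m)$ (no regret) and $w$ weakly improves over $\DA(\>)$; hence the output is always a valid no-regret pair manipulation, and it remains only to show the returned strategy is \emph{optimal} for $w$. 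To that end, fix an optimal pair manipulation $(\>^*_m,\>^*_w)$, write $\>^* \coloneqq \{\>_{-\{m,w\}},\>^*_m,\>^*_w\}$ and $\mu^* \coloneqq \DA(\>^*)$, and recall that by the no-regret assumption together with \citep[Corollary~4]{H07cheating} we may take $\mu^*(m)=\mu(m)$. I will show that $(\>^*_m,\>^*_w)$ can be replaced, in two steps, by a pair $(\>''_m,\>''_w)\in S_m\times S_w$ that still gives $w$ the match $\mu^*(w)$ and $m$ the match $\mu(m)$.

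Step~1 reduces the woman's list to $S_w$. A priori $\>^*_w$ need not lie in $S_w$, but \Cref{prop:WomanListInconspicuous} applies to a single woman misreporting in an \emph{arbitrary} profile. Apply it inside the auxiliary profile $\>^{\dagger}\coloneqq\{\>_{-\{m,w\}},\>^*_m,\>_w\}$ — in which everyone is truthful except $m$, who already uses $\>^*_m$ — regarding $\>^*$ as the result of $w$ switching from her true list $\>_w$ to $\>^*_w$. Since $w$'s list in $\>^{\dagger}$ is still her true list $\>_w$, the candidate set the lemma furnishes is exactly $S_w$, so there is $\>''_w\in S_w$ with $\DA(\>^{\dagger}_{-w},\>''_w)=\DA(\>^{\dagger}_{-w},\>^*_w)=\mu^*$ as matchings on \emph{all} agents; in particular $\DA(\>_{-\{m,w\}},\>^*_m,\>''_w)=\mu^*$. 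Thus $(\>^*_m,\>''_w)$ is again optimal with $\mu^*(m)=\mu(m)$, and we rename $\>''_w$ as $\>^*_w$, so now $\>^*_w\in S_w$.

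Step~2 reduces the man's list to $S_m$ and concludes the correctness argument. Apply \Cref{lem:AccompliceMisreports_PairManipulation} to the true profile $\>$ with the misreports $\>^*_w$ for $w$ and $\>^*_m$ for $m$; its hypothesis $\mu^*(m)=\mu(m)$ holds by Step~1, so there is $\>''_m\in S_m$ with $\mu''(m)=\mu(m)$ and $\mu''(w)=\mu^*(w)$, where $\mu''\coloneqq\DA(\>_{-\{m,w\}},\>''_m,\>^*_w)$. Because $S_m$ depends only on $m$'s true list and his true match $\mu(m)$, and $S_w$ only on $w$'s true list, these are precisely the sets the algorithm enumerates, so the ordered pair $(\>''_m,\>^*_w)\in S_m\times S_w$ is examined at some iteration of the double loop; at that iteration the algorithm sees $\mu''(m)=\mu(m)$ and $\mu''(w)=\mu^*(w)$ and updates its incumbent. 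Hence the algorithm returns a strategy giving $w$ the match $\mu^*(w)$, which is best possible by optimality of $(\>^*_m,\>^*_w)$; and if no manipulation strictly helps $w$, the truthful incumbent fixed at initialization is already optimal. Either way the output is an optimal pair manipulation.

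For the running time: building $S_w$ once costs $\O(n^3)$ and $S_w$ has $\O(n^2)$ lists; $S_m$ has $\O(n)$ lists and is rebuilt in $\O(n^2)$ time at each outer iteration. The outer loop runs $\O(n^2)$ times, the inner loop $\O(n)$ times, and each inner iteration performs one execution of \DA{} on $n$ men and $n$ women in $\O(n^2)$ time plus $\O(n)$ to test the two conditions. The dominant term is $\O(n^2)\cdot\O(n)\cdot\O(n^2)=\O(n^5)$, which also absorbs the $\O(n^4)$ spent rebuilding $S_m$ and the $\O(n^3)$ spent building $S_w$; this gives the claimed bound. Within the theorem itself the only real subtlety is Step~1: even though an optimal joint strategy alters both lists, \Cref{prop:WomanListInconspicuous} may still be invoked ``inside'' the pair-manipulated profile, the crucial observation being that $w$'s own true list — and hence the set $S_w$ the algorithm searches over — is left untouched by $m$'s misreport, so the lemma's conclusion transfers verbatim. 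Everything after that is an application of \Cref{lem:AccompliceMisreports_PairManipulation}, a loop-coverage observation, and a routine time count.
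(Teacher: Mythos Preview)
Your proof is correct and follows essentially the same approach as the paper: reduce the search space to $S_m\times S_w$ via \Cref{prop:WomanListInconspicuous} and \Cref{lem:AccompliceMisreports_PairManipulation}, then count. The one noteworthy difference is that you invoke \Cref{prop:WomanListInconspicuous} inside the auxiliary profile $\>^\dagger=\{\>_{-\{m,w\}},\>^*_m,\>_w\}$ (where $m$ has already switched to $\>^*_m$), whereas the paper's discussion applies it along the transition $\>\to\>^w$ in \Cref{fig:PairManipulation}; your ordering is the cleaner one, since it guarantees that replacing $\>^*_w$ by a list in $S_w$ preserves the \emph{entire} matching $\mu^*$---in particular $\mu^*(m)=\mu(m)$---which is exactly the hypothesis \Cref{lem:AccompliceMisreports_PairManipulation} requires in Step~2.
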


\begin{remark}
In \Cref{eg:pair_manipulation_unstable} in the Appendix, we show that optimal pair manipulation could fail to be \emph{stability-preserving}. Thus, an \emph{unrestricted} pair manipulation (i.e., when the manipulated matching is not required to be stable with respect to true preferences) can be strictly better than an optimal \emph{stability-preserving} pair manipulation. 
\label{rem:Unstable_Pair}
\end{remark}

\section{One for All: Helping All Women Through a Single Accomplice} \label{sec:one-for-all}

Let us now consider a different generalization of accomplice manipulation which we call ``one for all'' manipulation where a single accomplice (man $m$) misreports in order to improve the outcome for \emph{all} women in $W$. Recall that due to the no-regret assumption, the manipulated match of the accomplice $m$ is the same as his true match. As we are interested in improving a \emph{group} of agents, it will be helpful to define the notions of \emph{Pareto improvement} and \emph{Pareto optimal strategies}.

\paragraph{Pareto optimal and optimal strategies.}
Let $\>$ be the true preference profile and let $\mu \coloneqq \DA(\>)$. We say that a strategy $\>'_m$ of the accomplice \emph{Pareto improves} another strategy $\>''_m$ if $\mu' \succeq_W \mu''$ and $\mu'(w) \>_w \mu''(w)$ for some $w \in W$, where $\mu' \coloneqq \DA(\>_{-m}, \>'_m)$ and $\mu'' \coloneqq \DA(\>_{-m}, \>''_m)$. A strategy $\>'_m$ is \emph{Pareto optimal} if $\mu' \succeq_W \mu$ and there is no other strategy $\>''_m$ that Pareto improves $\>'_m$. Further, a strategy $\>'_m$ is \emph{optimal} if $\mu' \succeq_W \mu$ and for any other strategy $\>''_m$, we have $\mu' \succeq_W \mu''$. Thus, given a Pareto optimal strategy, any other strategy that improves some woman must make some other woman worse off, while the outcome under an optimal strategy simply cannot be improved for \emph{any} woman. 

Similarly, we say that a matching $\mu' \coloneqq \DA(\>_{-m},\>'_m)$ is ``Pareto optimal'' (respectively, ``optimal'') if the corresponding strategy $\>'_m$ is Pareto optimal (respectively, optimal).

Note that an optimal strategy is also Pareto optimal. The finiteness of the strategy space implies that %
a Pareto optimal strategy is guaranteed to exist. Whether an optimal strategy also always exists is not immediately clear; however, if an optimal strategy exists, then the set of Pareto optimal matchings---the Pareto frontier---must be a singleton, consisting only of the optimal matching. There can be multiple optimal strategies, but all such strategies must induce the same optimal matching.%

Let us now proceed to analyzing the structure of (Pareto) optimal strategies. When an accomplice manipulates on behalf of a single beneficiary woman (i.e., ``one for one''), it is known that there always exists an optimal strategy that is inconspicuous~~\citep{HUV21accomplice}. By contrast, when an accomplice misreports on behalf of multiple women (i.e., ``one for all''), an inconspicuous strategy may no longer be optimal~(\Cref{eg:inconspicuous_is_suboptimal}).

\begin{example}[\textbf{Inconspicuous strategy can be suboptimal}] 
Consider the following preference profile where the \DA{} outcome is underlined. %

\begin{table}[H]
    \centering
    \begin{tabularx}{0.7\linewidth}{XXXXXXXXXXXXXXX}
            $\boldsymbol{\textcolor{blue}{{m_1}}}\colon$ & $\underline{w_1^*}$ & $w_2$ & $w_3$ & $w_4$ & $w_5$ && ${w_1}\colon$ & $\underline{m_1^*}$ & $m_2$ & $m_3$ & $m_4$ & $m_5$\\
            ${m_2}\colon$ & $\underline{w_2}$ & $w_3^*$ & $w_4$ & $w_5$ & $w_1$ && ${w_2}\colon$ & $m_3^*$ & $m_4$ & $m_5$ & $m_1$ & $\underline{m_2}$\\
            ${m_3}\colon$ & $\underline{w_3}$ & $w_2^*$ & $w_1$ & $w_5$ & $w_4$ && ${w_3}\colon$ & $m_2^*$ & $m_5$ & $m_4$ & $m_1$ & $\underline{m_3}$\\
            ${m_4}\colon$ & $\underline{w_4}$ & $w_5^*$ & $w_1$ & $w_2$ & $w_3$ && ${w_4}\colon$ & $m_5^*$ & $m_3$ & $m_2$ & $m_1$ & $\underline{m_4}$\\
            ${m_5}\colon$ & $\underline{w_5}$ & $w_4^*$ & $w_3$ & $w_2$ & $w_1$ && ${w_5}\colon$ & $m_4^*$ & $m_2$ & $m_3$ & $\underline{m_5}$ & $m_1$
        \end{tabularx}
\end{table}

Suppose the accomplice is $m_1$ and all women are beneficiaries. The \DA{} matching after $m_1$ submits the optimal no-regret manipulated list $\>'_{m_1} \coloneqq w_2 \> w_4 \> w_1 \> w_3 \> w_5$ is marked by ``$*$''. Notice that $\succ'_{m_1}$ is derived from $\succ_{m_1}$ by pushing up $w_2$ and $w_4$ and therefore is not inconspicuous. The manipulation results in the women-optimal matching, where  %
all women are matched with their top choices.

There is no inconspicuous strategy that $m_1$ (or any other man for that matter) can report to produce the same matching; indeed, if $m_1$ were to push up only $w_2$, then only $w_2$ and $w_3$ would improve, and if he were to push up only $w_4$, then only $w_4$ and $w_5$ would improve. This observation highlights the conflict between optimality and inconspicuousness when the set of beneficiaries consists of \emph{all} women.\qedhere
\label{eg:inconspicuous_is_suboptimal}
\end{example}

Our main result in this section is that an \emph{optimal} strategy for the accomplice is guaranteed to exist and can be computed in polynomial time~(\Cref{thm:OptimalStrategy_MultipleWomen}). In fact, we will show a stronger result: Among all optimal strategies, we can efficiently compute one that promotes the \emph{smallest} number of women in the accomplice's list~(\Cref{thm:No-regret_Minimal_Is_Minimum} in \Cref{sec:Smallest-size_One-for-all}).

\subsection{Computing an Optimal Strategy}

Recall from \Cref{prop:Permuting_Falsified_Lists} that even if each man $m$ arbitrarily permutes the part of his list above and below his \DA{}-partner $\mu(m)$, the \DA{} outcome remains unchanged. 
This result shows that any strategy of the accomplice $m$, without loss of generality, can be expressed in terms of only \emph{push up} and \emph{push down} operations, where a set of women is pushed above the \DA{} partner $\mu(m)$, and another disjoint set of women is pushed below $\mu(m)$.

We will now provide a structural simplification: Any matching obtained by a combination of push up and push down operations in the accomplice's list can be weakly improved for \emph{all} women by the push up operation alone.

\begin{restatable}%
{proposition}{CombiningPushUpPushDownAllWomen}
Let $\>$ be a %
profile. For any fixed man $m$ and any subsets $X \subseteq W$ and $Y \subseteq W$ of women who are ranked below and above $\mu(m)$, respectively, let $\>' \coloneqq \{ \succ_{-m}, \succ_m^{X\uparrow} \}$ denote the profile after pushing up the set $X$ and let $\>'' \coloneqq \{ \succ_{-m}, \succ_m^{X\uparrow, Y\downarrow} \}$ denote the profile after pushing up $X$ and pushing down $Y$ in the true preference list $\>_m$ of man $m$. Then, $\mu' \succeq_W \mu''$, where $\mu' \coloneqq \DA(\>')$ and $\mu'' \coloneqq \DA(\>'')$.
\label{lem:CombiningPushUpPushDown_AllWomen}
\end{restatable}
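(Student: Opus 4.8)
The plan is to reduce everything to a stability statement and one application of women‑pessimality. Concretely, write $\mu \coloneqq \DA(\>)$, $\mu' \coloneqq \DA(\>')$, $\mu'' \coloneqq \DA(\>'')$, and aim to show that the matching $\mu'$ is \emph{stable with respect to the profile $\>''$}. Once this is established, the conclusion is immediate: $\>''$ differs from the true profile only in man $m$'s list, and by \Cref{prop:DA_stable_MenOptimal_WomenPessimal} the matching $\mu''=\DA(\>'')$ is the women‑pessimal element of $S_{\>''}$, so $\mu' \in S_{\>''}$ forces $\mu' \succeq_W \mu''$, which is exactly the claim.

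It remains to prove $\mu' \in S_{\>''}$, and here I would use two ingredients. First, $\mu'(m) \notin Y$: the list $\>^{X\uparrow}_m$ is merely a reordering of $m$'s true list, so by strategyproofness of \DA{} for the proposing side~\citep{DF81machiavelli,R82economics} we have $\mu(m) \succeq_m \mu'(m)$, i.e.\ $\mu'(m)$ is ranked weakly below $\mu(m)$ in $m$'s true list, whereas every woman in $Y$ is ranked strictly above $\mu(m)$ there. Second, no new blocking pair is created: since $\>'$ and $\>''$ agree on all agents except $m$ and $\mu'$ is the same matching in both comparisons, any pair blocking $\mu'$ under $\>''$ but not under $\>'$ must have the form $(m,w)$ with $m \succ_w \mu'(w)$, $w \succ''_m \mu'(m)$, and $w \not\succ'_m \mu'(m)$. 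Using \Cref{prop:Permuting_Falsified_Lists} (and the remark that the internal placement of $X$ above, and of $Y$ below, $\mu(m)$ does not affect the \DA{} outcome), I would normalize so that in $\>''$ the women of $Y$ occupy the bottom of $m$'s list and all other women keep their relative order from $\>'$; then, because $\mu'(m)\notin Y$, passing from $\>'$ to $\>''$ only \emph{removes} the women of $Y$ from the set that $m$ ranks above $\mu'(m)$, so $\{w : w \succ''_m \mu'(m)\} \subseteq \{w : w \succ'_m \mu'(m)\}$. Hence such a pair $(m,w)$ would already block $\mu'$ under $\>'$, contradicting $\mu'\in S_{\>'}$.

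The step I expect to be the main obstacle is the second ingredient, specifically making rigorous that ``pushing $Y$ down'' cannot inadvertently raise some \emph{other} woman above $m$'s partner $\mu'(m)$ in his list. This is precisely where $\mu'(m)\notin Y$ (the first ingredient) must be combined with the freedom to normalize the order within the pushed‑up and pushed‑down blocks; a careful treatment likely splits on whether $\mu'(m)$ equals $\mu(m)$, lies in $X$, or lies in $\>^R_m\setminus X$, verifying the set inclusion in each case. A smaller but worth‑stating point is the justification that only pairs containing $m$ can become newly blocking, which follows at once since the preference lists of every other agent and the matching $\mu'$ are unchanged between $\>'$ and $\>''$. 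Everything else is a direct invocation of \Cref{prop:DA_stable_MenOptimal_WomenPessimal} and \Cref{prop:Permuting_Falsified_Lists}.
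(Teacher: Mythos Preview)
Your argument is correct and takes a genuinely different route from the paper's proof. The paper proceeds by case analysis on $\mu'(m)$ and in each case recognises $\>''_m$ as a push-down of $Y$ applied to $\>'_m$ (relative to $m$'s \DA{} partner in $\>'$), then invokes the cited lemma \Cref{lem:PushDown_Worse_For_Women} (``push-down is weakly worse for all women'') to conclude $\mu' \succeq_W \mu''$; the case $\mu'(m)\in X$ is handled separately by observing $\mu'=\mu''$ via \Cref{prop:Permuting_Falsified_Lists}. Your approach instead establishes $\mu'\in S_{\>''}$ directly by a blocking-pair argument and then appeals only to women-pessimality of \DA{} (\Cref{prop:DA_stable_MenOptimal_WomenPessimal}). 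This is more elementary in that it does not rely on the prior-work lemma about push-down, and your normalisation of $\>''_m$ (take $\>'_m$, delete the $Y$-women, append them at the bottom) makes the set inclusion $\{w : w \succ''_m \mu'(m)\}\subseteq\{w : w \succ'_m \mu'(m)\}$ hold uniformly, so the three-case split you anticipate is not strictly necessary once $\mu'(m)\notin Y$ is in hand. The paper's route, on the other hand, is more modular and immediately reusable wherever \Cref{lem:PushDown_Worse_For_Women} applies. One minor remark: the case $\mu'(m)\in\>^R_m\setminus X$ you list cannot actually occur (by \Cref{lem:accomplice-match-after-regret-manipulation}, a with-regret push-up matches $m$ to some woman in $X$), but your inclusion argument is robust enough that it goes through regardless.
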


Having established that push up operations alone do not suffice, %
let us now examine \emph{which} subset of women the accomplice should push up. Given a profile $\>$ and an accomplice $m$, define the \emph{no-regret set} $W^\textsc{NR} \coloneqq \{w \in W \, : \, \>' \coloneqq \{ \>_{-m}, \>_m^{w \uparrow} \} \text{ is a no-regret profile}\}$ as the set of all women who do not cause $m$ to incur regret when pushed up individually, and its complement \emph{with-regret set} $W^\textsc{R} \coloneqq W \setminus W^\textsc{NR}$.%

We will first show that pushing up any subset of no-regret women does not cause regret for the accomplice~(\Cref{lem:NoRegretSubset}).

\begin{restatable}{lemma}{NoRegretSubset}
Let $\>$ be a profile and let $\mu \coloneqq \DA(\>)$. For any subset $Y \subseteq W^\textsc{NR}$, let $\>^Y \coloneqq \{ \succ_{-m}, \succ_m^{Y\uparrow} \}$ denote the preference profile after pushing up the set $Y$ in the true preference list $\>_m$ of man $m$, and let $\mu^Y \coloneqq \DA(\>^Y)$. Then, $m$ does not incur regret under $\>^Y$, i.e., $\mu^Y(m) = \mu(m)$.%
\label{lem:NoRegretSubset}
\end{restatable}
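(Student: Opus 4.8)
The plan is to prove the claim by induction on $|Y|$, after a reduction that isolates the single thing that can go wrong.

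\emph{Reduction.} We may assume every woman in $Y$ is ranked below $\mu(m)$ in $\>_m$, since a woman already above $\mu(m)$ contributes nothing to a push-up. Then $\>^{Y\uparrow}_m$ is a genuine misreport of $m$'s true list, so strategyproofness of the \DA{} algorithm for the proposing side gives $\mu^Y(m) \preceq_m \mu(m)$; hence it suffices to rule out $\mu^Y(m) \prec_m \mu(m)$. Moreover, by \Cref{prop:Permuting_Falsified_Lists} we are free to reorder the blocks $\>^L_m \cup Y$ and $\>^R_m \setminus Y$ inside $\>^{Y\uparrow}_m$ without affecting $\mu^Y$.

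\emph{Induction.} The base case $|Y|\le 1$ is exactly the definition of $W^{\textsc{NR}}$. For the inductive step, write $Y = Y' \sqcup \{w\}$ with $w \in W^{\textsc{NR}}$. By the inductive hypothesis $\mu^{Y'}(m) = \mu(m)$, so in the profile $\>^{Y'}$ the woman $w$ is still ranked below $m$'s (unchanged) partner $\mu(m)$, and $\>^Y$ arises from $\>^{Y'}$ by the single push-up of $w$. It therefore suffices to show that this last push-up does not move $m$ below $\mu(m)$.

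\emph{Main step (the obstacle).} I would establish this by coupling the run of \DA{} on $\>^{Y'}$ with the run on $\>^Y$, executed in the same proposal order: the two runs agree until $m$ first proposes to $w$, which happens in the $\>^Y$-run but never in the $\>^{Y'}$-run. From that moment one follows the rejection chain triggered by $m$'s extra proposal to $w$. The content of the argument is to show this chain is confined to agents whose reactions to the relevant proposals coincide in $\>$, $\>^{Y'}$ and $\>^Y$ — informally, pushing up the no-regret women of $Y'$ has only set off ``rotations'' not involving $m$, so the chain launched by $w$ propagates just as it would if $w$ alone were pushed up in the true profile. Consequently, if that chain ever led $w$ to hold $m$, or $m$ to be rejected by $\mu(m)$, the same outcome would occur in the run of \DA{} on $\{\>_{-m}, \>^{w\uparrow}_m\}$, contradicting $w \in W^{\textsc{NR}}$. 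Making this localisation precise — proving the chain cannot leak into the portion of the market already disturbed by pushing up $Y'$ — is the crux; an alternative packaging of the same idea would be to argue through the rotation poset of $\>$, showing that pushing up a set of no-regret women amounts to eliminating a down-closed family of rotations none of which contains $m$, so that $m$'s partner is untouched.
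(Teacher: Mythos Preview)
Your proposal has a genuine gap: the ``Main step'' is precisely the content of the lemma, and you do not prove it. You correctly identify that the issue is showing the rejection chain triggered by $m$'s proposal to $w$ in $\>^{Y}$ behaves the same as in $\>^{w}$, but the assertion that it ``propagates just as it would if $w$ alone were pushed up in the true profile'' is exactly what is in doubt. Pushing up $Y'$ may already have reshuffled the tentative matches of the agents that this chain passes through, so there is no a priori reason the chain cannot ``leak''; you acknowledge this yourself. Neither the coupling sketch nor the rotation-poset remark is carried far enough to close the gap, and making either precise is not obviously easier than the lemma itself.

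The paper's proof avoids this localisation entirely via a short, non-inductive argument. Suppose for contradiction that $\mu^{Y}(m)\neq\mu(m)$. Since $\>^{Y}$ is obtained from $\>$ by a push-up of $Y$, \Cref{lem:accomplice-match-after-regret-manipulation} gives $\mu^{Y}(m)=w_{y}$ for some $w_{y}\in Y$. Because $w_{y}\in W^{\textsc{NR}}$, the single-woman push-up profile $\>^{y}$ satisfies $\mu^{y}(m)=\mu(m)$. Now regard $\>^{y}_{m}$ as $m$'s ``true'' list: passing from $\>^{y}$ to $\>^{Y}$ is a unilateral misreport by $m$ (pushing up $Y\setminus\{w_{y}\}$), and under it $m$'s partner improves from $\mu(m)$ to $w_{y}$ \emph{with respect to $\>^{y}_{m}$}, since $w_{y}$ sits above $\mu(m)$ in $\>^{y}_{m}$. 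This contradicts strategyproofness of \DA{} for the proposing side. The key trick you are missing is to apply strategyproofness at the profile $\>^{y}$ rather than at $\>$; this converts the with-regret conclusion into an impossible strict improvement, and no induction or chain analysis is needed.
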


In contrast to \Cref{lem:NoRegretSubset}, any subset $Y \subseteq W$ that contains at least one woman from the with-regret set (i.e., $Y \cap W^\textsc{R} \neq \emptyset$) causes regret for the man $m$~(\Cref{lem:Combining_WithRegret_NoRegret_PushUp}).

\begin{restatable}{lemma}{CombiningWithRegretNoRegretPushUp}
Let $w' \in W^\textsc{R}$ and let $Y \subseteq W$ be such that $w' \in Y$. Then, $m$ incurs regret under $\>^Y \coloneqq \{ \succ_{-m}, \succ_m^{Y\uparrow} \}$, i.e., $\mu(m) \>_m \mu^Y(m)$, where $\mu^Y \coloneqq \DA(\>^Y)$.
\label{lem:Combining_WithRegret_NoRegret_PushUp}
\end{restatable}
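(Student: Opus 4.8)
The plan is to prove the contrapositive-style monotonicity statement: pushing up \emph{more} women can only hurt the accomplice, so if pushing up a single woman $w' \in W^\textsc{R}$ already causes regret, then pushing up any superset $Y \ni w'$ also causes regret. Concretely, I would first establish a ``monotonicity of regret'' claim: for any sets $A \subseteq B \subseteq W$ of women ranked below $\mu(m)$, we have $\mu^A(m) \succeq_m \mu^B(m)$, where $\mu^A \coloneqq \DA(\{\>_{-m},\>_m^{A\uparrow}\})$ and similarly for $B$. Given this claim, the lemma follows immediately: take $A = \{w'\}$ and $B = Y$ (intersecting $Y$ with the set of women below $\mu(m)$, since women already above $\mu(m)$ contribute nothing to a push-up operation on $\>_m$ by definition of $\>_m^{Y\uparrow}$, and by Proposition \ref{prop:Permuting_Falsified_Lists} their relative order is immaterial). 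Since $w' \in W^\textsc{R}$ means $\mu(m) \>_m \mu^{\{w'\}}(m)$ by definition of the with-regret set, monotonicity gives $\mu(m) \>_m \mu^{\{w'\}}(m) \succeq_m \mu^Y(m)$, hence $\mu(m) \>_m \mu^Y(m)$ as required.

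The heart of the argument is the monotonicity claim, and I would prove it by going through the larger falsified list in one step using a comparative-statics / ``Blocking Lemma''-style argument on the $\DA$ algorithm. The key observation: the list $\>_m^{B\uparrow}$ is obtained from $\>_m^{A\uparrow}$ by promoting the women in $B \setminus A$ from below $\mu(m)$ to above $\mu(m)$; in both runs man $m$'s partner is determined by how far down his (modified) list the rejection chains force him. Running $\DA$ with $\>_m^{B\uparrow}$, man $m$ proposes in the order given by $\>_m^{B\uparrow}$, which front-loads the women in $B\setminus A$ ahead of $\mu(m)$; every proposal to a woman in $B \setminus A$ before reaching $\mu(m)$ either is rejected (costing nothing) or is tentatively accepted and later displaced, which can only trigger additional rejection chains elsewhere. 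The standard fact here is that adding candidates to the top of a proposer's list (equivalently, making the proposer ``try harder'' before settling on $\mu(m)$) weakly worsens that proposer's final outcome and weakly improves every woman's — this is exactly the flavor of Proposition \ref{lem:CombiningPushUpPushDown_AllWomen} restricted to the accomplice himself. I would either cite/adapt the push-up monotonicity already implicit in the paper's machinery (the push-up operation is monotone in the pushed-up set with respect to the man's own welfare) or give a direct induction on $|B \setminus A|$, reducing to the one-woman case: show $\mu^{A \cup \{w''\}}(m) \succeq_m$ is false, i.e. $\mu^A(m) \succeq_m \mu^{A\cup\{w''\}}(m)$ for a single extra woman $w''$ ranked below $\mu(m)$ in $\>_m^{A\uparrow}$, and then chain.

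For the single-step case I would argue as follows. Let $\>_1 \coloneqq \{\>_{-m}, \>_m^{A\uparrow}\}$ and $\>_2 \coloneqq \{\>_{-m}, \>_m^{(A\cup\{w''\})\uparrow}\}$, so $\>_2$ differs from $\>_1$ only in that $m$ now ranks $w''$ above $\mu(m)$ instead of below. Consider running $\DA$ on $\>_2$. If $w''$ never tentatively accepts $m$ (or accepts and is never displaced), then the two runs coincide and $\mu^{A\cup\{w''\}} = \mu^A$, done. Otherwise $m$ proposes to $w''$, is tentatively held, and the ensuing sub-run on $\>_{-m}$ restricted to the remaining agents proceeds; when $m$ is eventually rejected by $w''$ he continues down $\>_m^{(A\cup\{w''\})\uparrow}$, which past the point of $w''$ agrees with $\>_m^{A\uparrow}$. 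The classical monotonicity of $\DA$ (a man proposing to and being rejected by an extra woman first can only end up weakly worse off — formalizable via the fact that the set of women who ever reject $m$ only grows) yields $\mu^A(m) \succeq_m \mu^{A\cup\{w''\}}(m)$. The main obstacle I anticipate is making this last step fully rigorous without re-deriving a comparative-statics lemma from scratch; the cleanest route is probably to invoke the paper's own push-up framework — note $\>_m^{(A\cup\{w''\})\uparrow} = (\>_m^{A\uparrow})^{\{w''\}\uparrow}$ viewed relative to the base list $\>_m^{A\uparrow}$ with its own $\DA$-partner, and apply a ``pushing up one more woman weakly worsens the man'' lemma, which is essentially the $m$-side analogue of \Cref{lem:CombiningPushUpPushDown_AllWomen} and likely proved in the appendix alongside \Cref{lem:NoRegretSubset}. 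If such a lemma is not directly available, I would fall back on the direct rejection-chain induction sketched above.
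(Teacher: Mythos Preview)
Your proposed route via the monotonicity claim $A \subseteq B \Rightarrow \mu^A(m) \succeq_m \mu^B(m)$ (in the \emph{true} order $\succ_m$) is strictly stronger than what the lemma needs, and the claim is actually false. Take three men and three women with $m_1\colon w_1 \succ w_2 \succ w_3$, $m_2\colon w_2 \succ w_1 \succ w_3$, $m_3\colon w_3 \succ w_2 \succ w_1$, and every woman ranking $m_1 \succ m_2 \succ m_3$. Then $\mu(m_1)=w_1$; pushing up $\{w_3\}$ matches $m_1$ with $w_3$, while pushing up $\{w_2,w_3\}$ (with $w_2$ placed first) matches $m_1$ with $w_2 \succ_{m_1} w_3$, so monotonicity fails for $A=\{w_3\}\subset B=\{w_2,w_3\}$. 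The underlying reason is that once a push-up is with-regret, the accomplice lands on some woman \emph{inside} the pushed-up set (\Cref{lem:accomplice-match-after-regret-manipulation}), and enlarging that set by a woman he truly prefers can improve him in $\succ_m$. Your single-step rejection-chain sketch does not rule this out, and your induction would have to be run from a with-regret base profile, where the ``above/below $\mu(m)$'' bookkeeping no longer coincides with ``above/below the current \DA{} partner.''

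The paper's proof avoids all of this with a two-line contrapositive of \Cref{lem:SingleAgentPushUpNoRegret}: if pushing up $Y$ were no-regret, then pushing up each singleton $\{w_y\}\subseteq Y$ would also be no-regret, contradicting $w'\in Y\cap W^{\textsc{R}}$. Your closing instinct---that the cleanest route is to invoke a lemma ``likely proved in the appendix alongside \Cref{lem:NoRegretSubset}''---is exactly right; that lemma is \Cref{lem:SingleAgentPushUpNoRegret}, and you only need its contrapositive, not the $\succ_m$-monotone strengthening you were reaching for.
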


Together, \Cref{lem:NoRegretSubset,lem:Combining_WithRegret_NoRegret_PushUp} imply that a push up operation is no-regret if and only if the pushed-up set is a subset of $W^\textsc{NR}$. Thus, an optimal (or Pareto optimal) strategy should promote some subset of $W^\textsc{NR}$. This observation, however, does not automatically provide an efficient algorithm for computing the desired strategy because brute force enumeration of subsets of $W^\textsc{NR}$ could take exponential time. Also, in case an optimal strategy does not exist, %
the Pareto frontier of strategies can be %
exponential in size, %
again ruling out exhaustive search.

Our main result of this section (\Cref{thm:OptimalStrategy_MultipleWomen}) alleviates both of the above concerns. We show that not only does an optimal strategy always exist, but also that pushing up the entire no-regret set $W^\textsc{NR}$ achieves such an outcome. %

\begin{restatable}{theorem}{OptimalStrategyMultipleWomen}
An optimal one-for-all strategy for the accomplice is to push up the no-regret set $W^\textsc{NR}$ in his true list.
\label{thm:OptimalStrategy_MultipleWomen}
\end{restatable}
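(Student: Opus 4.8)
The plan is to combine the structural results already established with a "swap in, don't break anything" argument. By \Cref{prop:Permuting_Falsified_Lists} and \Cref{lem:CombiningPushUpPushDown_AllWomen}, every no-regret strategy of the accomplice is weakly dominated (for all women) by one that uses push-up operations alone, and by \Cref{lem:NoRegretSubset,lem:Combining_WithRegret_NoRegret_PushUp} a push-up of a set $Y$ is no-regret exactly when $Y \subseteq W^\textsc{NR}$. So it suffices to show that pushing up the \emph{full} set $W^\textsc{NR}$ weakly Pareto-dominates (for women) pushing up any $Y \subseteq W^\textsc{NR}$; i.e.\ with $\>^{W^\textsc{NR}} := \{\>_{-m},\>_m^{W^\textsc{NR}\uparrow}\}$ and $\>^Y := \{\>_{-m},\>_m^{Y\uparrow}\}$, I want $\mu^{W^\textsc{NR}} \succeq_W \mu^Y$ for all such $Y$. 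Since the women-optimal (proposing side = women) stable matching is the one worst for each woman would be backwards here, I expect the cleaner route is to compare the \emph{men-pessimal / women-optimal} stable matchings across the two profiles, or to argue directly that the larger push-up set can only help women.

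The first concrete step is to reduce to a one-step comparison: push up a single extra woman. That is, show that if $Y \subseteq W^\textsc{NR}$ and $w^\star \in W^\textsc{NR} \setminus Y$, then pushing up $Y \cup \{w^\star\}$ weakly Pareto-improves (for women) pushing up $Y$. Iterating this from any $Y$ up to $W^\textsc{NR}$ then gives the theorem. For the single-step claim I would run \DA{} on the profile $\>^Y$ and on $\>^{Y\cup\{w^\star\}}$ and compare the execution: the only difference is that in the latter, $m$ is willing to propose to $w^\star$ earlier (above $\mu(m)$). The key monotonicity fact to invoke is the classical result that making one man's list "more generous" (promoting women above his \DA-partner) can only (weakly) improve every woman's \DA-partner — this is exactly the flavor of the accomplice results of \citet{HUV21accomplice} and of \Cref{lem:CombiningPushUpPushDown_AllWomen}; I would either cite the relevant lemma or reprove it via a proposal-sequence (deferred-acceptance lattice) argument: extra proposals from $m$ initiate rejection chains that only cause women to trade up.

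The \textbf{main obstacle} I anticipate is ruling out that pushing up $w^\star$ causes $m$ himself to eventually get displaced from $\mu(m)$ in the \emph{combined} profile $\>^{Y\cup\{w^\star\}}$, even though $w^\star \in W^\textsc{NR}$ and $Y \subseteq W^\textsc{NR}$ individually cause no regret. In other words, I must show that "no regret" is closed under union within $W^\textsc{NR}$ — but that is precisely \Cref{lem:NoRegretSubset}, so this obstacle is already dispatched by a cited lemma; what remains is purely the women-side monotonicity. A secondary subtlety is the direction of the inequality: because women are the \emph{proposed-to} side and \DA{} is women-pessimal, I must be careful that "more proposals to $w^\star$" translates into a genuine improvement for $w^\star$ (and a non-deterioration for everyone else) rather than the reverse; the right framing is that in $\>^{Y\cup\{w^\star\}}$, $m$'s proposal to $w^\star$ only arrives while $m$ is still unmatched or would otherwise propose lower, and such an extra proposal, by the standard rejection-chain argument, leaves the women-optimal stable matching (which \DA{} does \emph{not} compute directly, but which bounds things) — more simply, I would just track partners through the run and show no woman's tentative partner ever gets worse. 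Once the single-step monotonicity is in hand, the theorem follows by the reduction above, and optimality (not merely Pareto-optimality) follows because the argument shows $\mu^{W^\textsc{NR}}$ weakly dominates \emph{every} no-regret strategy's outcome for women, hence no other strategy can beat it for any woman.
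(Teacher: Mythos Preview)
Your proposal is correct and follows essentially the same route as the paper's proof: reduce to push-up-only via \Cref{prop:Permuting_Falsified_Lists} and \Cref{lem:CombiningPushUpPushDown_AllWomen}, restrict to $Y\subseteq W^\textsc{NR}$ via \Cref{lem:NoRegretSubset,lem:Combining_WithRegret_NoRegret_PushUp}, and then use the no-regret push-up monotonicity result of \citet{HUV21accomplice} (stated in the paper as \Cref{prop:PushUp}) to show $\mu^{W^\textsc{NR}}\succeq_W \mu^Y$. The only cosmetic difference is that the paper applies \Cref{prop:PushUp} once---treating $\>^Y$ as the base profile and pushing up $W^\textsc{NR}\setminus Y$ in one shot---whereas you iterate one woman at a time; since \Cref{lem:NoRegretSubset} guarantees no regret at every intermediate stage, both arguments are equivalent, and your ``main obstacle'' is indeed already handled by that lemma exactly as you note.
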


Our proof of \Cref{thm:OptimalStrategy_MultipleWomen} leverages the following known result which says that the matching resulting from a no-regret push up operation is weakly preferred by all women.

\begin{restatable}[\protect\citealp{HUV21accomplice}]{proposition}{PushUp}
Let $\>$ be a preference profile and let ${\mu \coloneqq \DA(\>)}$. For any man $m$, let $\>' \coloneqq \{ \succ_{-m}, \succ_m^{X\uparrow} \}$ and $\mu' \coloneqq \DA(\>')$. If $m$ does not incur regret, then $\mu' \in S_{\>}$ and thus $\mu' \succeq_W \mu$ and $\mu \succeq_M \mu'$.
\label{prop:PushUp}
\end{restatable}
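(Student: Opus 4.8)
The plan is to reduce the entire proposition to a single claim: that the manipulated matching $\mu'$ is stable with respect to the \emph{true} profile $\>$. Once $\mu' \in S_{\>}$ is in hand, the two dominance relations come for free. Indeed, by \Cref{prop:DA_stable_MenOptimal_WomenPessimal} the matching $\mu = \DA(\>)$ is men-optimal and women-pessimal over $S_{\>}$, so every stable matching---in particular $\mu'$---automatically satisfies $\mu \succeq_M \mu'$ and $\mu' \succeq_W \mu$. Thus the whole burden of the proof is the membership $\mu' \in S_{\>}$, and the final ``thus'' clause is a one-line appeal.

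Before attacking stability I would pin down the accomplice's match exactly. The no-regret hypothesis gives $\mu'(m) \succeq_m \mu(m)$, while strategyproofness of \DA{} for the proposing side gives $\mu(m) \succeq_m \mu'(m)$ (a man cannot strictly gain by misreporting his own list). Combining the two yields $\mu'(m) = \mu(m)$. This equality is the sole place where the no-regret assumption enters, and it is precisely what lets me locate $m$ in the manipulated list.

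The core step is to establish $\mu' \in S_{\>}$ by contradiction. Suppose $(m'', w'')$ blocks $\mu'$ with respect to $\>$. The crucial observation is that $\>$ and $\>'$ differ only in man $m$'s list, so a blocking pair \emph{not} involving $m$ could not exist: for $m'' \neq m$ we have $\>_{m''} = \>'_{m''}$ and every woman's list is unchanged, so the blocking inequalities $w'' \>_{m''} \mu'(m'')$ and $m'' \>_{w''} \mu'(w'')$ transfer verbatim to $\>'$, contradicting $\mu' = \DA(\>') \in S_{\>'}$. Hence the offending pair must be $(m, w'')$, i.e. $w'' \>_m \mu'(m) = \mu(m)$ and $m \>_{w''} \mu'(w'')$. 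Since $w'' \>_m \mu(m)$ is strict, $w''$ lies in the top block $\>^L_m$ of $m$'s true list; and by the definition $\>^{X\uparrow}_m = (\>^L_m \cup X, \mu(m), \>^R_m \setminus X)$ the whole of $\>^L_m$ remains above $\mu(m)$ in the manipulated list (the push-up only lifts the disjoint set $X$ from below). Therefore $w'' \>'_m \mu(m) = \mu'(m)$, and as $w''$'s list is unchanged we also have $m \>'_{w''} \mu'(w'')$. So $(m, w'')$ blocks $\mu'$ under $\>'$ too---the same contradiction. Hence no blocking pair exists under $\>$, and $\mu' \in S_{\>}$.

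I expect the only real delicacy---rather than a genuine obstacle---to be the last transfer of the blocking pair from $\>$ back to $\>'$: one must use the no-regret equality $\mu'(m) = \mu(m)$ to know where the accomplice sits, and then the structural fact that a push-up can only move the blocking woman $w''$ (already above $\mu(m)$ in the true list) into a position that is still above $\mu'(m)$. Everything else is immediate, and the two preference-dominance conclusions reduce to the men-optimality and women-pessimality of \DA{} via \Cref{prop:DA_stable_MenOptimal_WomenPessimal}.
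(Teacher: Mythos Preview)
Your argument is correct. You correctly isolate the single nontrivial claim ($\mu' \in S_{\>}$), pin down $\mu'(m) = \mu(m)$ via strategyproofness plus the no-regret hypothesis, and then handle the only potentially dangerous blocking pair $(m,w'')$ by observing that a push-up leaves the true top block $\>^L_m$ above $\mu(m) = \mu'(m)$. The final appeal to \Cref{prop:DA_stable_MenOptimal_WomenPessimal} is exactly right.

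One remark on the comparison you asked for: the present paper does not actually prove this proposition; it is quoted as a known result from \citet{HUV21accomplice} and used as a black box (see its use in the proof of \Cref{thm:OptimalStrategy_MultipleWomen} and in \Cref{cor:Min_NoRegret_PushUp_StabilityPreserving}). So there is no ``paper's own proof'' to compare against here. Your write-up is essentially the natural argument and matches what one would expect the original source to do.
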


\begin{proof} (of \Cref{thm:OptimalStrategy_MultipleWomen})
From \Cref{prop:Permuting_Falsified_Lists}, we know that any accomplice manipulation can be simulated via push up and push down operations. \Cref{lem:CombiningPushUpPushDown_AllWomen} shows that any combination of push up and push down operations can be weakly improved for all women by push up only. 
From \Cref{lem:Combining_WithRegret_NoRegret_PushUp}, we know that the desired push up set, say $Y \subseteq W$, should not contain any woman from the with-regret set $W^R$. Therefore, $Y \subseteq W^\textsc{NR}$. From \Cref{lem:NoRegretSubset}, we know that pushing up $Y$ satisfies no-regret assumption. If $Y \neq W^\textsc{NR}$ (thus, $Y \subset W^\textsc{NR}$), then \Cref{prop:PushUp} shows that $\>^Y \coloneqq \{ \succ_{-m}, \succ_m^{Y\uparrow} \}$ can be weakly improved for all women by additionally pushing up the women in $W^\textsc{NR} \setminus Y$. Thus, pushing up all women in $W^\textsc{NR}$ gives an optimal no-regret accomplice manipulation strategy for helping all women, as desired.
\end{proof}

\Cref{thm:OptimalStrategy_MultipleWomen} readily gives a polynomial-time algorithm for computing an optimal strategy~(\Cref{cor:PolyAlgorithm_OptimalStrategy}).

\begin{restatable}{corollary}{PolyAlgorithmOptimalStrategy}
An optimal one-for-all strategy for the accomplice can be computed in $\O(n^3)$ time.
\label{cor:PolyAlgorithm_OptimalStrategy}
\end{restatable}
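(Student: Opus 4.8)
The plan is to read Theorem~\ref{thm:OptimalStrategy_MultipleWomen} as an algorithmic recipe --- output the list obtained by pushing up the no-regret set $W^\textsc{NR}$ --- and to argue that $W^\textsc{NR}$ itself can be identified with only $\O(n)$ runs of the \DA{} algorithm. First I would run \DA{} once on the true profile $\>$ to obtain $\mu \coloneqq \DA(\>)$, and in particular the accomplice's partner $\mu(m)$; this costs $\O(n^2)$.

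Next I would compute $W^\textsc{NR}$ by testing each woman directly against its definition. For a woman $w$ ranked weakly above $\mu(m)$ in $\>_m$, pushing $w$ up only permutes the part of $m$'s list above $\mu(m)$, so by \Cref{prop:Permuting_Falsified_Lists} the \DA{} outcome is unchanged and $w$ is trivially no-regret; hence these women need not be tested. For each of the at most $n$ women $w$ ranked below $\mu(m)$, I would form $\>_m^{w\uparrow}$, compute $\mu' \coloneqq \DA(\>_{-m}, \>_m^{w\uparrow})$, and place $w$ in $W^\textsc{NR}$ exactly when $\mu'(m) = \mu(m)$. By the definition of $W^\textsc{NR}$ this test is correct, and each test is a single \DA{} computation costing $\O(n^2)$; since there are $\O(n)$ of them, assembling $W^\textsc{NR}$ costs $\O(n^3)$ in total.

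Finally, by \Cref{thm:OptimalStrategy_MultipleWomen} the list $\>_m^{W^\textsc{NR}\uparrow}$ is an optimal one-for-all strategy, so I would return it (and, if the resulting matching is wanted, one additional \DA{} call, $\O(n^2)$). Summing the three contributions --- the initial \DA{} run, the $\O(n)$ no-regret tests, and the final construction --- yields an overall running time of $\O(n^3)$.

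I do not expect a real obstacle: all the mathematical content sits in \Cref{thm:OptimalStrategy_MultipleWomen}, and the corollary is essentially bookkeeping. The only points requiring a sentence of justification are that single-woman membership in $W^\textsc{NR}$ is decidable by one \DA{} run (immediate from the definition together with the push-up notation) and that women already above $\mu(m)$ can be skipped (immediate from \Cref{prop:Permuting_Falsified_Lists}).
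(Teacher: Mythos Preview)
Your proposal is correct and follows essentially the same approach as the paper: identify $W^\textsc{NR}$ by individually pushing up each woman ranked below $\mu(m)$ and running \DA{} to check for regret, giving $\O(n)$ checks at $\O(n^2)$ each, then invoke \Cref{thm:OptimalStrategy_MultipleWomen} to conclude that pushing up $W^\textsc{NR}$ is optimal. Your write-up is slightly more detailed (you make explicit the initial \DA{} run and the observation, via \Cref{prop:Permuting_Falsified_Lists}, that women already above $\mu(m)$ need not be tested), but the argument is the same.
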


Although pushing up the entire no-regret set $W^\textsc{NR}$ is optimal~(\Cref{thm:OptimalStrategy_MultipleWomen}), the accomplice may want to displace as few women as possible in order to remain close to his true preference list. In \Cref{sec:Smallest-size_One-for-all}, we provide a polynomial-time algorithm for computing a \emph{minimum} optimal strategy~(i.e., one that promotes the smallest number of women). We also show that the size of the promoted set is at most $\lfloor \frac{n-1}{2} \rfloor$ and that this bound is tight (\Cref{sec:Tight_Bounds_Appendix}).

\section{Experimental Results} \label{sec:experiments}

Let us now experimentally compare the two-sided and one-sided models in terms of the \emph{fraction of instances} where each model improves upon truthful reporting. To this end, we will consider an experimental setup where the preferences of $n$ men and $n$ women are drawn uniformly at random.\footnote{The assumption about uniformly random preferences is quite common in the literature on strategic aspects of stable matchings; see, for example,~\citep{TS01gale,KPR13matching,IM15incentives,A15susceptibility,AKL17unbalanced}.} For each value of $n \in \{4,6,\dots,20\}$, we independently sample $1\,000$ preference profiles.

For the \emph{two-for-one} part, we compute the fraction of instances where some man $m$ can jointly misreport with a fixed woman $w$ to improve her match, and compare it with the analogous fraction where only one of $m$~(accomplice) or $w$~(self) can misreport; see \Cref{fig:two_sided_frequencies}~(left).
Similarly, for the \emph{one-for-all} part, we compute the fraction of instances where some man can misreport to help all women~(i.e., weakly improve all and strictly improve some compared to their true matches), and compare it with the analogous fraction where a woman helps all women; see \Cref{fig:two_sided_frequencies}~(right).

\begin{figure}[t]
    \centering
    \begin{minipage}{0.49\linewidth}
        \centering
        \includegraphics[width=\textwidth]{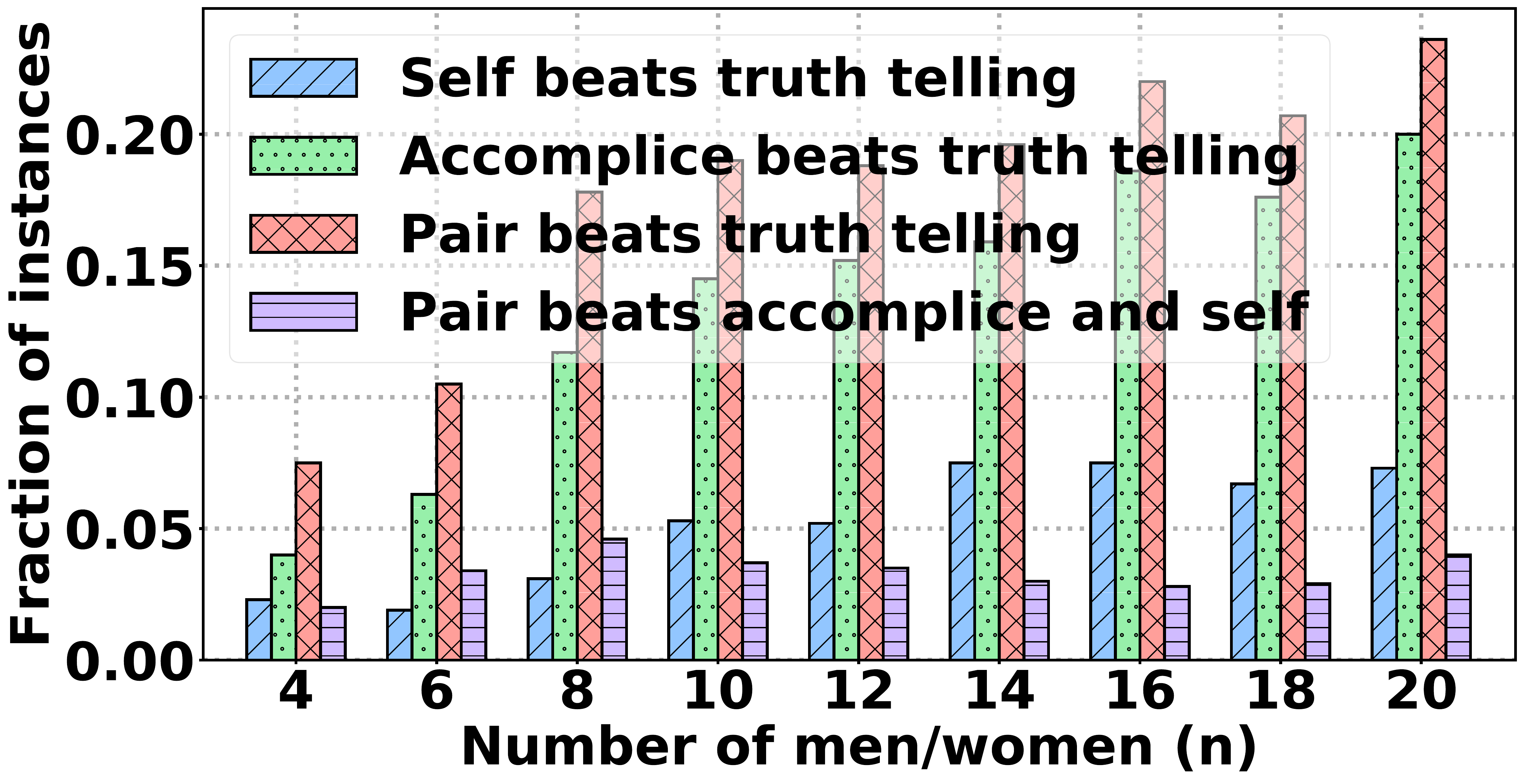}
    \end{minipage}
    \hfill
    \begin{minipage}{0.49\linewidth}
        \centering
        \includegraphics[width=\textwidth]{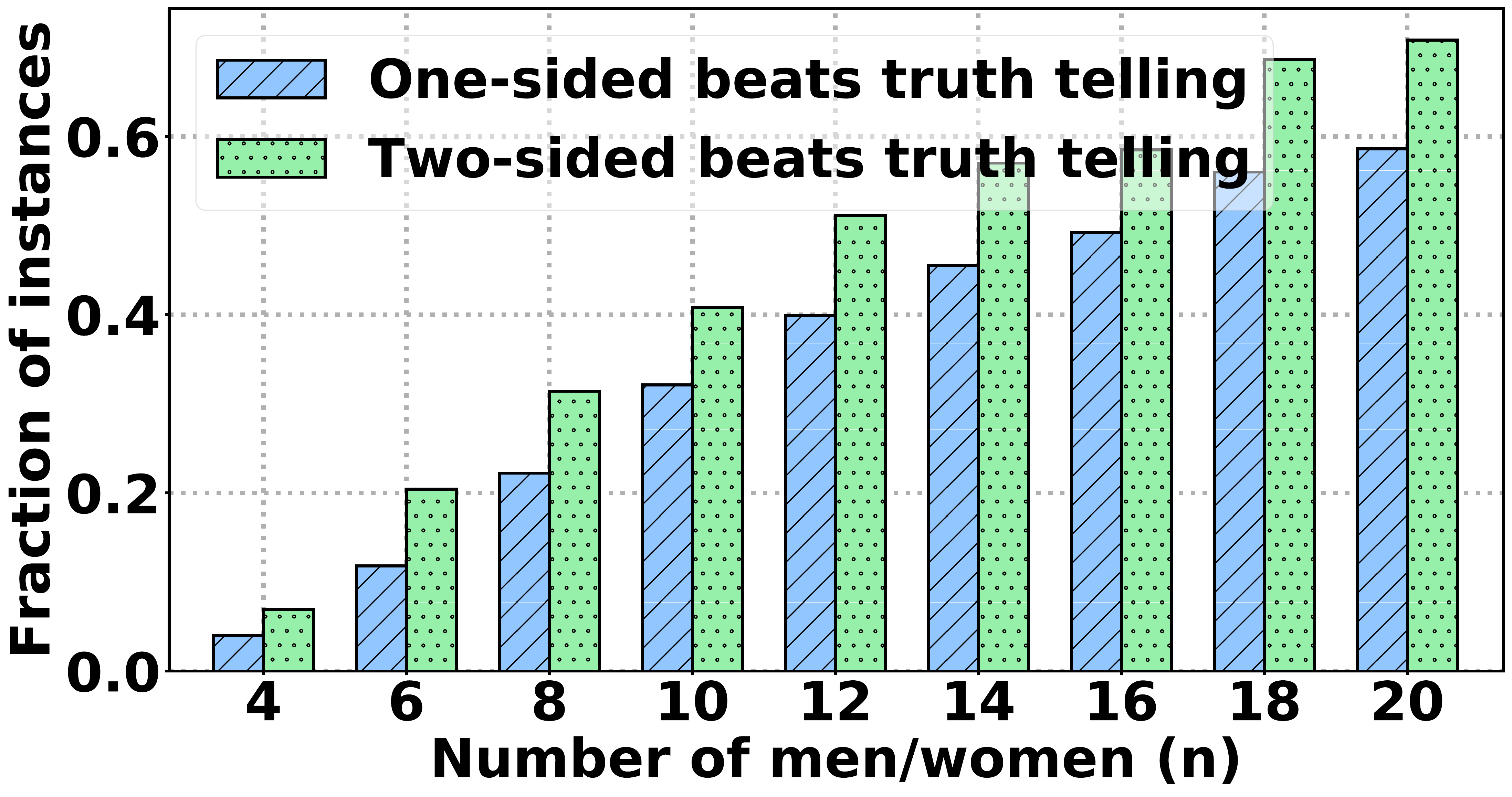}
    \end{minipage}
    \caption{Comparing one-sided and two-sided strategies for helping a \emph{single} woman (left) %
    and %
    for helping \emph{all} women (right) in terms of the fraction of instances where a (Pareto) improvement is possible over truthful reporting.
    }
     \label{fig:two_sided_frequencies}
\end{figure}

\Cref{fig:two_sided_frequencies} shows that two-sided strategies are \emph{more frequently available} than one-sided; specifically, in roughly $2\%$ more instances %
under the two-for-one setting and roughly $10\%$ more instances under the one-for-all. In \Cref{sec:Additional_Experiments}, we show that two-sided manipulation outperforms one-sided in terms of the \emph{extent of improvement} for the beneficiary/beneficiaries, i.e., the difference between the ranks of old and new matched partner(s).

\section{Concluding Remarks}

We studied two coalitional generalizations of two-sided manipulation of the \DA{} algorithm. Moving from single-agent to coalitional manipulation impacted the structure of optimal strategies in the form of loss of inconspicuousness, but we showed that efficient computation can still be achieved.

Going forward, it will be interesting to consider manipulation by \emph{arbitrary coalitions} of men and women. Another relevant direction could be to interpret two-sided manipulation as a \emph{bribery}~ problem~\citep{FHH09hard,BBH+21bribery} wherein there is a cost associated with each pairwise swap in an agent's true list. 
Finally, extensions of our work to more general preference models (e.g., partial orders), %
as well as experimental evaluation on \emph{non-uniform} distributions or real-world preference data, will also be of interest.

\section*{Acknowledgments}
Hadi Hosseini acknowledges support from NSF IIS grants \#2052488 and \#2107173.

\bibliographystyle{named}%
\bibliography{References}

\clearpage
\appendix
\begin{center}
    \textbf{\huge{Appendix}}
\end{center}

\section{Useful Results from Prior Work}

For ease of reference, we recall the notation of push up and push down operations below.

\paragraph{Push up and push down operations}
For any man $m \in M$, let $\>^L_m$ and $\>^R_m$ denote the parts of $m$'s list above and below his \DA{} partner, respectively. That is, $\>_m = (\>^L_m,\mu(m),\>^R_m)$. We say that man $m$ \emph{pushes up} a set $X \subseteq W$ if the new list is $\>^{X\uparrow}_m \coloneqq (\>^L_m \cup X,\mu(m),\>^R_m \setminus X)$. Similarly, \emph{pushing down} a set $Y \subseteq W$ results in $\>^{Y\downarrow}_m \coloneqq (\>^L_m \setminus Y,\mu(m),\>^R_m \cup Y)$. Note that the exact positions at which agents in $X$ (or $Y$) are placed above (or below) $\mu(m)$ is not important, as long as the sets are appropriately pushed above~(or below) $\mu(m)$. \citet{H06cheating} has shown that the \DA{} outcome remains unchanged if each man $m$ arbitrarily permutes the part of his list above and below his \DA{}-partner $\mu(m)$ (\Cref{prop:Permuting_Falsified_Lists}).\newline

We now present several useful known results.

\begin{restatable}[\protect\citealp{DF81machiavelli,H06cheating}]{proposition}{PushDown}
Let $\>$ be a preference profile and let $\mu \coloneqq \DA(\>)$. For any fixed man $m$ and any subset $X \subseteq W$ of women who are ranked above $\mu(m)$ in $\>_m$, let ${\>' \coloneqq \{\>_{-m},\>_m^{X \downarrow}\} }$ denote the preference profile after pushing down the women in $X$ in the true preference list $\>_m$ of man $m$, and let $\mu' \coloneqq \DA(\>')$. Then, $\mu' \succeq_M \mu$ and $\mu'(m) = \mu(m)$.
\label{prop:PushDown}
\end{restatable}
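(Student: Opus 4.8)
The plan is to reduce everything to two facts about the perturbed profile $\>'$ obtained after $m$ pushes down $X$: that the men-optimal matching $\mu \coloneqq \DA(\>)$ survives as a stable matching of $\>'$, and that $\DA{}$ is strategyproof for the proposing side. First I would verify that $\mu \in S_{\>'}$. The only list that changed is $m$'s, and the crucial point is that pushing down $X$ never promotes any woman past $\mu(m)$: the women lying strictly above $\mu(m)$ in $\>_m^{X\downarrow}$ are exactly those of $\>^L_m \setminus X \subseteq \>^L_m$. Hence a blocking pair of $\mu$ with respect to $\>'$, if one existed, would have to be of the form $(m,w)$ with $w$ strictly above $\mu(m)$ in $\>_m^{X\downarrow}$, which forces $w$ strictly above $\mu(m)$ already in $\>_m$; together with $m \>_w \mu(w)$ this pair would block $\mu$ with respect to $\>$ as well, contradicting stability of $\DA(\>)$.

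Next I would exploit that $\mu' \coloneqq \DA(\>')$ is the men-optimal stable matching of $\>'$ (\Cref{prop:DA_stable_MenOptimal_WomenPessimal}). Since $\mu \in S_{\>'}$ by the previous step, every man weakly prefers $\mu'$ to $\mu$ according to his reported list in $\>'$; in particular, according to $\>_m^{X\downarrow}$, man $m$ weakly prefers $\mu'(m)$ to $\mu(m)$. Now suppose $\mu'(m) \neq \mu(m)$: then $\mu'(m)$ is strictly above $\mu(m)$ in $\>_m^{X\downarrow}$, so $\mu'(m) \in \>^L_m \setminus X$, and therefore $\mu'(m) \>_m \mu(m)$ in $m$'s \emph{true} list — i.e.\ $m$ has profitably deviated from $\>_m$ to $\>_m^{X\downarrow}$, contradicting the strategyproofness of $\DA{}$ for the proposing side (which gives $\mu(m) \succeq_m \mu'(m)$). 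Hence $\mu'(m) = \mu(m)$. Finally, for every man $m'' \neq m$ the list is unchanged, so ``$m''$ weakly prefers $\mu'(m'')$ to $\mu(m'')$ under $\>'_{m''}$'' is literally $\mu'(m'') \succeq_{m''} \mu(m'')$; combined with $\mu'(m)=\mu(m)$, this is exactly $\mu' \succeq_M \mu$.

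The step I expect to need the most care is the first one, and the point to emphasize is that the hypothesis ``$X$ is ranked above $\mu(m)$'' is precisely what makes push-down a conservative move that cannot manufacture a new blocking pair (the analogous statement is false for push-up, which is why \Cref{prop:PushUp} needs the no-regret hypothesis). The only other bookkeeping point is translating ``strictly above $\mu(m)$ in the falsified list'' into ``strictly above $\mu(m)$ in the true list'' in the second step, which again uses $X \subseteq \>^L_m$. Nothing beyond \Cref{prop:DA_stable_MenOptimal_WomenPessimal} and the known strategyproofness of $\DA{}$ is required; one could instead prove $\mu'(m)=\mu(m)$ by directly tracking the run of $\DA{}$ on $\>'$, but that route is messier, so invoking strategyproofness is cleaner.
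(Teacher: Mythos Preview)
Your argument is correct. The paper itself does not prove \Cref{prop:PushDown}: it is listed in the appendix under ``Useful Results from Prior Work'' and is attributed to \citet{DF81machiavelli} and \citet{H06cheating} without a proof, so there is nothing in the paper to compare your argument against line by line.

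Your route---show $\mu\in S_{\>'}$ because push-down can only shrink the set of women $m$ ranks above $\mu(m)$, then invoke men-optimality of $\DA(\>')$ to get that every man weakly prefers $\mu'$ to $\mu$ under $\>'$, then use strategyproofness of \DA{} for men to pin down $\mu'(m)=\mu(m)$, and finally translate the weak preference back to the true lists for the other men---is clean and self-contained. The two translation steps you flag (women above $\mu(m)$ in $\>_m^{X\downarrow}$ lie in $\>^L_m\setminus X\subseteq\>^L_m$) are exactly the places where the hypothesis $X\subseteq\>^L_m$ is consumed, and you handle them correctly. Nothing is missing.
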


\begin{proposition}[\protect\citealp{GF13sisterhood}] \label{prop:women-optimality-of-woman-manipulations}
Let $W^* \subseteq W$ be a set of manipulating women. If no woman in $W^*$ is strictly worse off after the misreport, then all women (including those not in $W^*$) are weakly better off and all men are weakly worse off after the manipulation.
\end{proposition}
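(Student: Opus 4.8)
The plan is to prove directly that the set $M^{+}:=\{m\in M:\mu'(m)\succ_m\mu(m)\}$ of strictly-better-off men and the set $W^{-}:=\{w\in W:\mu(w)\succ_w\mu'(w)\}$ of strictly-worse-off women are both empty; this is exactly the assertion ``all men weakly worse off, all women weakly better off''. Here $\mu:=\DA(\succ)$, the manipulated profile $\succ'$ differs from $\succ$ only in the lists of the women in $W^{*}$, and $\mu':=\DA(\succ')$; write $M^{-},W^{+}$ for the analogous strictly-worse-off men and strictly-better-off women. Observe that the hypothesis is exactly $W^{*}\cap W^{-}=\emptyset$, and that men never misreport, so $\succ$ and $\succ'$ agree on every man and on every woman outside $W^{*}$.

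The first step is to record four ``ripple'' implications, each proved by inspecting a single would-be blocking pair: (i) $m\in M^{-}\Rightarrow\mu(m)\in W^{+}$ and (ii) $w\in W^{-}\Rightarrow\mu(w)\in M^{+}$, both following from stability of $\mu'=\DA(\succ')$ with respect to $\succ'$ (the no-regret hypothesis being invoked only in the sub-case where the relevant partner lies in $W^{*}$); and dually (iii) $m\in M^{+}\Rightarrow\mu'(m)\in W^{-}$ and (iv) $w\in W^{+}\Rightarrow\mu'(w)\in M^{-}$, following from stability of $\mu$ with respect to $\succ$. Each is two lines: negating the claimed strict preference would make the relevant man--woman pair block the relevant stable matching, unless it forces two agents to be matched in \emph{both} $\mu$ and $\mu'$, which is absurd since that agent's partner changed.

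The second step is to look at the symmetric difference $\mu\,\triangle\,\mu'$, whose components are alternating cycles (an agent whose partner changes has degree two; all others are isolated). Walking around an arbitrary cycle and applying (i)--(iv) alternately pins it down to one pure type: either all its men lie in $M^{+}$ and all its women in $W^{-}$ (a \emph{men-improving} cycle), or all its men lie in $M^{-}$ and all its women in $W^{+}$. The crucial corollary is that a men-improving cycle contains \emph{no} manipulator, since its women lie in $W^{-}$, which is disjoint from $W^{*}$.

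For the final step, suppose some men-improving cycle exists and let $P$ be the union of all of them; as $P$ is a union of components of $\mu\,\triangle\,\mu'$, the map $\nu$ that agrees with $\mu'$ on $P$ and with $\mu$ off $P$ is a well-defined matching. I would verify $\nu$ has no blocking pair with respect to $\succ$, by cases on where its two endpoints sit relative to $P$: a pair inside $P$ would block $\mu'$ under $\succ'$ (both endpoints have unchanged lists); a pair wholly outside $P$ would block $\mu$ under $\succ$; a mixed pair with the man in $P$ would block $\mu$ under $\succ$ (using $\mu'(m)\succ_m\mu(m)$ for $m\in M^{+}$); and a mixed pair with the woman $w$ in $P$ is impossible, because $w\in P\subseteq W^{-}$ reports truthfully while the partnering man $m$, lying off $P$, is not in $M^{+}$ (every $M^{+}$ man belongs to a men-improving cycle), so $m$ prefers $w$ to his $\DA(\succ')$-partner and therefore proposes to $w$ during $\DA(\succ')$, whence $\mu'(w)$ is at least as good for $w$ as $m$ — contradicting that $(m,w)$ blocks $\nu$. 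Thus $\nu$ is stable with respect to $\succ$, yet $\nu(m)=\mu'(m)\succ_m\mu(m)$ for every $m\in P$, contradicting the men-optimality of $\mu=\DA(\succ)$ (\Cref{prop:DA_stable_MenOptimal_WomenPessimal}). Hence there are no men-improving cycles: $M^{+}=\emptyset$, every cycle is of the other type, and therefore $W^{-}=\emptyset$ as well. The main obstacle is exactly this last blocking-pair check for $\nu$, and within it the mixed ``woman-in-$P$'' case, which is the one place where the hypothesis $W^{*}\cap W^{-}=\emptyset$ has to be fused with the run-level monotonicity of men-proposing deferred acceptance; the rest is bookkeeping on the ripple implications and the cycle decomposition.
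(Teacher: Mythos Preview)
The paper does not supply its own proof of this proposition; it is quoted verbatim from \citet{GF13sisterhood} in the appendix's ``Useful Results from Prior Work'' section and used as a black box. So there is no in-paper argument to compare against.

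Your argument is correct. The four ripple implications (i)--(iv) are each one-line blocking-pair checks, and the only place the hypothesis $W^{*}\cap W^{-}=\emptyset$ enters (i) and (ii) is exactly where you flag it: when the woman in question might lie in $W^{*}$, the hypothesis forces her partner to be unchanged, collapsing the case. The cycle decomposition of $\mu\,\triangle\,\mu'$ into pure ``men-improving'' and ``women-improving'' cycles then follows by walking around a cycle and applying (i)--(iv); a man in $M^{+}$ forces its $\mu'$-neighbor into $W^{-}$, whose $\mu$-neighbor is forced into $M^{+}$, and so on, so the type propagates all the way around. The splice $\nu$ is well-defined because $P$ is a union of components of the symmetric difference, and your four-case stability check for $\nu$ is sound; in particular, Case~4 (woman in $P$, man off $P$) is the delicate one, and your handling is right: $m\notin P$ implies $m\notin M^{+}$ (since every $M^{+}$ man sits on a men-improving cycle, hence in $P$), so $\mu(m)\succeq_m\mu'(m)$, whence $w\succ_m\mu'(m)$ and $m$ must propose to $w$ during $\DA(\succ')$; truthfulness of $w\in W^{-}\subseteq W\setminus W^{*}$ then yields $\mu'(w)\succeq_w m$, contradicting the blocking assumption. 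The final contradiction with men-optimality of $\mu=\DA(\succ)$ via \Cref{prop:DA_stable_MenOptimal_WomenPessimal} is clean.

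This is essentially the classical ``decomposition lemma plus splicing'' route to such sisterhood/brotherhood results; it matches the spirit of the original source, so you have reconstructed the intended argument.
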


\begin{restatable}[\protect\citealp{HUV21accomplice}]{proposition}{PushDownWorseForWomen}
Let $\>$ be a preference profile and let $\mu \coloneqq \DA(\>)$. For any fixed man $m$ and any subset $X \subseteq W$ of women who are ranked above $\mu(m)$ in $\>_m$, let ${\>' \coloneqq \{\>_{-m},\>_m^{X \downarrow}\} }$ denote the preference profile after pushing down the women in $X$ in the true preference list $\>_m$ of man $m$, and let $\mu' \coloneqq \DA(\>')$. Then, $\mu \succeq_W \mu'$.
\label{lem:PushDown_Worse_For_Women}
\end{restatable}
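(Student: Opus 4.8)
The plan is to recognize that pushing down the set $X$ in man $m$'s list is the \emph{inverse} of a no-regret push-up operation, and then to invoke \Cref{prop:PushUp} in that reversed direction.

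First I would pin down man $m$'s match in the manipulated profile. Since every woman in $X$ is ranked above $\mu(m)$ in $\>_m$, pushing $X$ down is a legitimate push-down operation, so \Cref{prop:PushDown} applies and yields $\mu'(m) = \mu(m)$ (it also gives $\mu' \succeq_M \mu$, which we will not need directly). Next I would view the transition from $\>'$ back to $\>$ as a push-up applied to the \emph{base} profile $\>'$: in $\>'_m = \>_m^{X\downarrow}$ the women of $X$ all sit below $m$'s \DA{}-partner $\mu'(m) = \mu(m)$, and the list $\>_m$ has exactly the same set of women above $\mu(m)$ as $(\>'_m)^{X\uparrow}$ does (namely $\>^L_m$) and the same set below (namely $\>^R_m$). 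Hence by \Cref{prop:Permuting_Falsified_Lists}, $\DA(\{\>'_{-m}, (\>'_m)^{X\uparrow}\}) = \DA(\>) = \mu$; in particular $m$'s match after this push-up equals $\mu(m) = \mu'(m)$, so $m$ incurs no regret along the push-up $\>' \to \{\>'_{-m},(\>'_m)^{X\uparrow}\}$.

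Now I would apply \Cref{prop:PushUp} with base profile $\>'$ and man $m$ pushing up the set $X$: the no-regret hypothesis holds, so the resulting matching --- which we just identified as $\mu$ --- lies in $S_{\>'}$ and satisfies $\mu \succeq_W \mu'$. This is precisely the claim. (Equivalently, from $\mu \in S_{\>'}$ together with the women-pessimality of $\mu' = \DA(\>')$ among stable matchings of $\>'$, via \Cref{prop:DA_stable_MenOptimal_WomenPessimal}, one again concludes $\mu \succeq_W \mu'$.)

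The only step requiring genuine care is the bookkeeping in the middle paragraph: one must check that $(\>'_m)^{X\uparrow}$ and $\>_m$ induce the same ``above/below $\mu(m)$'' partition so that \Cref{prop:Permuting_Falsified_Lists} legitimately identifies $\DA(\{\>'_{-m},(\>'_m)^{X\uparrow}\})$ with $\mu$, and one must verify the no-regret condition for the reversed operation --- which is exactly why \Cref{prop:PushDown} is needed up front to establish $\mu'(m) = \mu(m)$. Beyond that, the argument is a direct chaining of the cited propositions.
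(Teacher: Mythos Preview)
Your argument is correct. The paper does not actually supply its own proof of this proposition; it is listed in the appendix among ``Useful Results from Prior Work'' and attributed to \citet{HUV21accomplice} without proof, so there is nothing in the present paper to compare against.

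On the substance: your reduction is sound. The key observations are exactly the ones you flag --- \Cref{prop:PushDown} pins $\mu'(m)=\mu(m)$, which both (i) makes $X$ sit below $m$'s \DA{}-partner in $\>'_m$ so that the reverse move is a legitimate push-up, and (ii) certifies the no-regret hypothesis needed for \Cref{prop:PushUp}. The bookkeeping that $(\>'_m)^{X\uparrow}$ and $\>_m$ share the same above/below-$\mu(m)$ partition is straightforward (both have $\>^L_m$ above and $\>^R_m$ below), and \Cref{prop:Permuting_Falsified_Lists} then identifies the pushed-up outcome with $\mu$. Invoking \Cref{prop:PushUp} on the base profile $\>'$ yields $\mu \succeq_W \mu'$ directly; the alternative route via $\mu \in S_{\>'}$ and women-pessimality of $\mu'$ (\Cref{prop:DA_stable_MenOptimal_WomenPessimal}) is equally valid and in fact is how \Cref{prop:PushUp} itself derives that conclusion.
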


\begin{restatable}[\protect\citealp{HUV21accomplice}]{proposition}{AccompliceMatchAfterWithRegretManipulation}
Let $\>$ be a preference profile and let ${\mu \coloneqq \DA(\>)}$. For any fixed man $m$ and any subset $X \subseteq W$ of women who are ranked below $\mu(m)$ in $\>_m$, let $\>^X \coloneqq \{ \>_{-m}, \>_m^{X \uparrow} \}$ denote the preference profile after pushing up the women in $X$ in the true preference list $\>_m$ of man $m$, and let ${\mu^X \coloneqq \DA(\>^X)}$. If $m$ incurs regret (i.e., if $\mu(m) \>_{m} \mu^X(m)$), then $\mu^X(m) \in X$.
\label{lem:accomplice-match-after-regret-manipulation}
\end{restatable}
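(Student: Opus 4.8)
The plan is to argue by contradiction, relying on two invariances already recalled in the paper: \Cref{prop:PushDown}, that a man who pushes some women ranked above his \DA{}-partner further down keeps that same partner, and \Cref{prop:Permuting_Falsified_Lists}, that permuting the parts of a list above and below one's \DA{}-partner leaves the \DA{} outcome untouched. Write $\>_m = (\>^L_m, \mu(m), \>^R_m)$; then the women ranked above $\mu(m)$ in the manipulated list $\>^{X\uparrow}_m$ are exactly $\>^L_m \cup X$, and those ranked below are $\>^R_m \setminus X$. Since $m$ incurs regret, $\mu(m) \>_m \mu^X(m)$, so $\mu^X(m) \in \>^R_m = X \cup (\>^R_m \setminus X)$; it therefore suffices to rule out $w^\star \coloneqq \mu^X(m) \in \>^R_m \setminus X$.

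So suppose $w^\star \in \>^R_m \setminus X$. In the profile $\>^X$, man $m$'s \DA{}-partner $w^\star$ is ranked (in $\>^{X\uparrow}_m$) below $\mu(m)$, hence below the \emph{entire} above-$\mu(m)$ block $\>^L_m \cup X$. Thus \Cref{prop:PushDown}, applied to the profile $\>^X$ with $m$ pushing down the whole set $\>^L_m \cup X$ (all of whose members are ranked above his current partner $w^\star$), produces a profile $\>'' \coloneqq \{\>_{-m}, \>''_m\}$ in which $\>''_m$ has $\mu(m)$ at the very top and in which $m$ is nonetheless still matched to $w^\star$. But for \emph{any} list with $\mu(m)$ at the top, $m$ must be matched to $\mu(m)$: such a list is obtained from the true list $\>_m$ by first pushing $\>^L_m$ down (which by \Cref{prop:PushDown} keeps $m$ matched to $\mu(m)$) and then permuting the now-complete portion below $\mu(m)$ (which by \Cref{prop:Permuting_Falsified_Lists} changes nothing). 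Hence $m$ is matched to $\mu(m)$ in $\>''$ too. Since $w^\star \in \>^R_m \setminus X$ is ranked strictly below $\mu(m)$, we have $w^\star \neq \mu(m)$, a contradiction; therefore $\mu^X(m) \in X$.

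The one step I expect to require care is the observation that regret lets us push down \emph{all} of $\>^L_m \cup X$ in $\>^X$, not merely the promoted set $X$: this is exactly where the hypothesis $\mu(m) \>_m \mu^X(m)$ is used, since it places $w^\star$ below $\mu(m)$ and hence below $\>^L_m$, making $\>^L_m \cup X$ a legal push-down relative to $m$'s partner in $\>^X$. Once the list has been collapsed so that $\mu(m)$ sits on top, the push-down and permutation invariances force $m$'s match in the collapsed profile to be $\mu(m)$, which clashes with the value $w^\star$ handed to us by \Cref{prop:PushDown} in the first step. The remaining details --- tracking which women lie above or below $\mu(m)$ and $w^\star$ in each list, and the already-noted fact that the internal order of a pushed block does not matter --- are routine.
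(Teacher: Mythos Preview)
The paper does not prove this proposition itself (it is quoted from \citet{HUV21accomplice}), so there is no in-paper argument to compare against. Your proof is correct: since $\>^{X\uparrow}_m = (\>^L_m \cup X,\, \mu(m),\, \>^R_m \setminus X)$ and the assumed $w^\star \in \>^R_m \setminus X$ sits strictly below $\mu(m)$ in this list, pushing down exactly $\>^L_m \cup X$ relative to $w^\star$ does leave $\mu(m)$ at the very top (nothing else was ranked above $\mu(m)$ in $\>^{X\uparrow}_m$), and the two invariances you invoke then close the contradiction cleanly.
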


\begin{restatable}[\protect\citealp{HUV21accomplice}]{proposition}{SingleAgentPushUpNoRegret}
Let $\succ$ be a preference profile and let $\mu \coloneqq \DA(\>)$. For any fixed man $m$ and any subset $X \subseteq W$ of women who are ranked below $\mu(m)$ in $\>_m$, let $\>' \coloneqq \{ \succ_{-m}, \succ_m^{X\uparrow} \}$ denote the preference profile after pushing up the women in $X$ in the true preference list $\>_m$ of man $m$, and let $\mu' \coloneqq \DA(\>')$. If $m$ does not incur regret (i.e., if $\mu'(m) = \mu(m)$), then for every $w_x \in X$, $m$ does not incur regret under the matching $\mu^x \coloneqq \DA(\>^x)$, where $\>^x \coloneqq \{\>_{-m},\>_m^{w_x\uparrow}\}$.
\label{lem:SingleAgentPushUpNoRegret}
\end{restatable}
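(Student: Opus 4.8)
The plan is to give a direct proof (no case analysis) by realizing the single-woman push-up profile $\>^x$ as the result of applying a pure \emph{push-down} operation to the full push-up profile $\>'$, and then invoking \Cref{prop:PushDown}, which guarantees that pushing down women ranked above the manipulator's \DA{}-partner never changes that partner. The one real idea is that ``push up all of $X$, then push down all of $X$ except $w_x$'' is the same as ``push up $w_x$ alone.''

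First I would record that, by the hypothesis $\mu'(m) = \mu(m)$, man $m$'s \DA{}-partner in the profile $\>'$ is $\mu(m)$. Relative to this partner, the manipulated list $\>'_m = (\>^L_m \cup X,\,\mu(m),\,\>^R_m \setminus X)$ has all of $X$ sitting \emph{above} $\mu(m)$. Now push the women in $X \setminus \{w_x\}$ back below $\mu(m)$ in $\>'_m$. Since $X \subseteq \>^R_m$ lies below $\mu(m)$ in the true list while $\>^L_m$ lies above it, the sets $\>^L_m$ and $X$ are disjoint, and a one-line set computation gives
$$(\>'_m)^{(X\setminus\{w_x\})\downarrow} = (\>^L_m \cup \{w_x\},\,\mu(m),\,\>^R_m \setminus \{w_x\}) = \>^x_m .$$
Because only $m$'s list is altered, the resulting profile is exactly $\>^x = \{\>_{-m},\,\>^x_m\}$.

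The conclusion then falls out by applying \Cref{prop:PushDown} with base profile $\>'$: the women in $X \setminus \{w_x\}$ are ranked above $m$'s \DA{}-partner $\mu(m)$ in $\>'_m$, so pushing them down preserves $m$'s match, i.e.\ $\mu^x(m) = \mu'(m) = \mu(m)$. This is precisely the no-regret conclusion for $\>^x$, and it holds for every $w_x \in X$ (the singleton case $X = \{w_x\}$ being vacuous, as the push-down set is empty).

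I expect the main obstacle to be the alignment of reference points rather than any calculation: \Cref{prop:PushDown} speaks of pushing down women who sit above the manipulator's \DA{}-partner \emph{in the base profile}, and I am invoking it with the \emph{already-manipulated} profile $\>'$ as the base, whose \DA{}-partner for $m$ is $\mu(m)$ only because of the no-regret hypothesis. Once this is pinned down, invariance of the \DA{} outcome under intra-block permutations (\Cref{prop:Permuting_Falsified_Lists}) makes the displayed list identity well defined irrespective of the internal ordering of the pushed blocks. (An alternative route would assume $m$ incurs regret under $\>^x$, deduce $\mu^x(m) = w_x$ from \Cref{lem:accomplice-match-after-regret-manipulation}, and derive a contradiction; but the push-down argument above sidesteps the case analysis entirely.)
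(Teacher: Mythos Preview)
Your argument is correct. The paper does not supply its own proof of this proposition: it is quoted verbatim from \citet{HUV21accomplice} in the ``Useful Results from Prior Work'' appendix, so there is no in-paper proof to compare against. That said, your push-down realization is clean and fully valid. The key step---applying \Cref{prop:PushDown} with $\>'$ (rather than $\>$) as the base profile---is legitimate precisely because the no-regret hypothesis pins $m$'s \DA{}-partner under $\>'$ at $\mu(m)$, and you flag this correctly as the only point requiring care. The set identity $(\>'_m)^{(X\setminus\{w_x\})\downarrow} = \>^x_m$ holds up to the intra-block permutations covered by \Cref{prop:Permuting_Falsified_Lists}, as you note.

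The alternative contradiction route you sketch (via \Cref{lem:accomplice-match-after-regret-manipulation}) is in the spirit of how the paper proves the closely related \Cref{lem:NoRegretSubset}, which combines that lemma with strategyproofness of \DA{}. Your direct push-down argument is arguably more transparent, since it avoids both the contradiction setup and the appeal to strategyproofness.
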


\begin{restatable}[\protect\citealp{HUV21accomplice}]{proposition}{WeakPushUp}
Let $\succ$ be a preference profile and let $\mu \coloneqq \DA(\>)$. For any fixed man $m$ and any subset $X \subseteq W$ of women who are ranked below $\mu(m)$ in $\>_m$, let $\>' \coloneqq \{ \succ_{-m}, \succ_m^{X\uparrow} \}$ denote the preference profile after pushing up the women in $X$ in the true preference list $\>_m$ of man $m$, and let $\mu' \coloneqq \DA(\>')$. If $m$ does not incur regret (i.e., if $\mu'(m) = \mu(m)$) and $\mu(w') \>_{w'} m$ for all $w' \in X$, where $\mu' \coloneqq \DA(\>')$, then $\mu' = \mu$.
\label{prop:WeakPushUp}
\end{restatable}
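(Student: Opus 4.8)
The plan is to show that the original matching $\mu$ is itself stable with respect to the \emph{manipulated} profile $\>'$, and then to pin down $\mu' = \DA(\>')$ by squeezing it against $\mu$ using the men-optimality/women-pessimality of \DA{} together with the known push-up result \Cref{prop:PushUp}.

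First I would argue that $\mu$ has no blocking pair under $\>'$. Since $\>'$ and $\>$ differ only in man $m$'s list, any blocking pair of $\mu$ under $\>'$ that does not involve $m$ would also block $\mu$ under $\>$, contradicting $\mu \in S_{\>}$; hence any blocking pair must be of the form $(m, w'')$ with $w'' \>'_m \mu(m)$ and $m \>_{w''} \mu(w'')$. Now the women ranked above $\mu(m)$ in $\>'_m = \>_m^{X\uparrow}$ are exactly those ranked above $\mu(m)$ in $\>_m$ \emph{together with} $X$. If $w''$ was already ranked above $\mu(m)$ in $\>_m$, then $(m, w'')$ blocks $\mu$ under $\>$, a contradiction. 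Otherwise $w'' \in X$, and then the hypothesis $\mu(w'') \>_{w''} m$ directly contradicts $m \>_{w''} \mu(w'')$. Either way no blocking pair exists, so $\mu \in S_{\>'}$.

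With $\mu \in S_{\>'}$ in hand, applying \Cref{prop:DA_stable_MenOptimal_WomenPessimal} to the profile $\>'$ (whose \DA{} matching is $\mu'$) gives $\mu' \succeq_M \mu$ and $\mu \succeq_W \mu'$. On the other hand, the no-regret hypothesis $\mu'(m) = \mu(m)$ lets me invoke \Cref{prop:PushUp}, which yields $\mu' \in S_{\>}$ together with $\mu' \succeq_W \mu$ and $\mu \succeq_M \mu'$. Combining $\mu \succeq_W \mu'$ with $\mu' \succeq_W \mu$ forces $\mu(w) = \mu'(w)$ for every woman $w$, hence $\mu' = \mu$; equivalently one can combine the two men-side inequalities.

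The only genuinely delicate point is the blocking-pair case analysis of the first step — specifically, recognizing that the set of women ``newly'' promoted above $\mu(m)$ in $\>_m^{X\uparrow}$ is precisely $X$, which is exactly where the extra hypothesis $\mu(w') \>_{w'} m$ for all $w' \in X$ is needed (it is what prevents a promoted woman from wanting to block with $m$). Everything after establishing $\mu \in S_{\>'}$ is just two invocations of known structural results, so I expect that stability verification to be the crux of the argument.
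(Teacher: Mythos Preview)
Your argument is correct. The paper itself does not prove this proposition; it is listed in the appendix under ``Useful Results from Prior Work'' as a result of \citet{HUV21accomplice} and is quoted without proof, so there is no in-paper argument to compare against.

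For what it is worth, your approach is clean and self-contained relative to the tools the paper does make available: the blocking-pair case analysis showing $\mu \in S_{\>'}$ is exactly the right place to spend the hypothesis $\mu(w') \>_{w'} m$ for $w' \in X$, and once that is established the sandwich via \Cref{prop:DA_stable_MenOptimal_WomenPessimal} and \Cref{prop:PushUp} finishes immediately. One small remark on presentation: the women-pessimality inequality $\mu \succeq_W \mu'$ you extract from \Cref{prop:DA_stable_MenOptimal_WomenPessimal} is a priori with respect to the profile $\>'$, but since women's lists are unchanged between $\>$ and $\>'$ it is the same inequality under $\>$, which is what you then combine with $\mu' \succeq_W \mu$ from \Cref{prop:PushUp}. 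You clearly have this in mind, but making the ``women's preferences are identical in $\>$ and $\>'$'' step explicit would remove any ambiguity.
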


\begin{restatable}[\protect\citealp{HUV21accomplice}]{proposition}{StrictPushUp}
Let $\succ$ be a preference profile and let $\mu \coloneqq \DA(\succ)$. For any man $m$ and any subset $X \subseteq W$ of women who are ranked below $\mu(m)$ in $\>_m$, let $\>' \coloneqq \{ \succ_{-m}, \succ_m^{X\uparrow} \}$ denote the preference profile after pushing up the women in $X$ in the true preference list $\>_m$ of man $m$, and let $\mu' \coloneqq \DA(\>')$. If $m$ does not incur regret (i.e., if $\mu'(m) = \mu(m)$) and $\mu' \neq \mu$, then there exist at least two distinct women $w',w'' \in W$ such that $\mu'(w') \>_{w'} \mu(w')$ and $\mu'(w'') \>_{w''} \mu(w'')$, and at least two distinct men $m',m'' \in M$ such that $\mu(m') \>_{m'} \mu'(m')$ and $\mu(m'') \>_{m''} \mu'(m'')$.
\label{prop:StrictPushUp}
\end{restatable}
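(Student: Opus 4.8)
The plan is to treat \Cref{prop:StrictPushUp} as a refinement of \Cref{prop:PushUp}: that proposition already supplies all the directional (``weak'') information, so the only genuinely new content is the counting claim ``\emph{at least two}'' on each side. First I would invoke \Cref{prop:PushUp}. Since $m$ does not incur regret, it gives $\mu' \in S_{\>}$, and hence by the women-pessimality and men-optimality of $\mu = \DA(\>)$ (\Cref{prop:DA_stable_MenOptimal_WomenPessimal}) we obtain $\mu' \succeq_W \mu$ and $\mu \succeq_M \mu'$. The crucial consequence is that every agent who changes partner does so \emph{strictly}: if $\mu'(w) \neq \mu(w)$ then $\mu'(w) \>_w \mu(w)$, because $\mu'(w) \succeq_w \mu(w)$ and the two partners are distinct; symmetrically, if $\mu'(m) \neq \mu(m)$ then $\mu(m) \>_m \mu'(m)$. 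Thus it suffices to exhibit two women and two men who change partners.

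Next I would analyze the set of changers. Let $W_c \coloneqq \{w : \mu(w) \neq \mu'(w)\}$ and $M_c \coloneqq \{m : \mu(m) \neq \mu'(m)\}$. A short argument shows that $\mu$ restricts to a bijection from $M_c$ onto $W_c$ (a man changes partner if and only if his $\mu$-partner changes hers), so $|M_c| = |W_c|$; in particular both are nonzero because $\mu' \neq \mu$. The heart of the proof is ruling out $|M_c| = |W_c| = 1$. If the only changer were a single man $m_0$ with $\mu(m_0) = w_0$ and $\mu'(m_0) = w_1 \neq w_0$, then $w_1$ would satisfy $\mu'(w_1) = m_0$ while $\mu(w_1) \neq m_0$ (otherwise $\mu(m_0) = w_1 \neq w_0$), making $w_1$ a second changer and contradicting $|W_c| = 1$. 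Hence $|M_c| = |W_c| \geq 2$. Equivalently, one may appeal to the classical fact that the symmetric difference $\mu \,\triangle\, \mu'$ of two perfect matchings on equal-sized sides is a disjoint union of alternating cycles, each of length at least four and hence containing at least two men and at least two women.

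Finally I would combine the two steps. Since $\mu' \neq \mu$, the bound $|W_c| \geq 2$ yields distinct women $w', w''$ with $\mu'(w') \>_{w'} \mu(w')$ and $\mu'(w'') \>_{w''} \mu(w'')$, and $|M_c| \geq 2$ yields distinct men $m', m''$ with $\mu(m') \>_{m'} \mu'(m')$ and $\mu(m'') \>_{m''} \mu'(m'')$, which is exactly the claim. I expect the main obstacle to be the parity step ruling out a lone changer: the directional preferences are immediate from \Cref{prop:PushUp}, so all the real work is the combinatorial observation that changers between two stable matchings always come in pairs (equivalently, that $\mu \,\triangle\, \mu'$ contains no $2$-cycle).
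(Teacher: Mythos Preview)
Your argument is correct. Note, however, that the paper does not supply its own proof of this proposition: it is stated in the appendix under ``Useful Results from Prior Work'' and attributed to \citet{HUV21accomplice} without further argument. So there is no in-paper proof to compare against.

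That said, your route is clean and worth a brief comment. You reduce the statement to two independent ingredients: (i) the directional content, which is entirely inherited from \Cref{prop:PushUp} and \Cref{prop:DA_stable_MenOptimal_WomenPessimal} (since $\mu' \in S_{\>}$, every changer moves in the predicted direction), and (ii) a purely combinatorial fact about \emph{any} two perfect matchings, namely that $\mu \,\triangle\, \mu'$ decomposes into alternating cycles of length at least four, so the set of agents who change partner contains at least two men and two women whenever it is nonempty. This is more general than what is needed here---it does not use stability, the push-up structure, or the \DA{} algorithm at all for step (ii)---and that generality is a feature: it isolates exactly where the matching-theoretic content lies (step (i)) from the elementary graph theory (step (ii)). The only place to be slightly more explicit is the sentence ``a man changes partner if and only if his $\mu$-partner changes hers'': you use both $\mu$ and $\mu'$ in the $|M_c|=1$ case (via $w_0 = \mu(m_0)$ and $w_1 = \mu'(m_0)$), so it may be cleaner to state the cycle-decomposition version outright rather than the bijection version.
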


\begin{restatable}[\protect\citealp{HUV21accomplice}]{proposition}{PushUpProposals}
Let $\>$ be a preference profile and let ${ \mu \coloneqq \DA(\>) }$. For any fixed man $m$ and any subset $X \subseteq W$ of women who are ranked below $\mu(m)$ in $\>_m$, let $\>' \coloneqq \{ \succ_{-m}, \succ_m^{X\uparrow} \}$ denote the preference profile after pushing up the women in $X$ in the true preference list $\>_m$ of man $m$, and let $\mu' \coloneqq \DA(\>')$. If $m$ does not incur regret (i.e., if $\mu'(m) = \mu(m)$), then the set of proposals that occur under $\>$ is contained within the set of proposals that occur under $\>'$.
\label{prop:PushUpProposals}
\end{restatable}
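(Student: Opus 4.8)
The plan is to bypass the dynamics of the proposal sequence entirely by using a static characterization of which proposals occur. Recall that in the men-proposing \DA{} algorithm, each man proposes to women in strictly decreasing order of his preference, never proposes to the same woman twice, and is ultimately held by his final partner. Consequently, a man $m'$ proposes to a woman $w$ during the run if and only if $w$ lies weakly above $\mu(m')$ in $m'$'s list: every woman strictly above $\mu(m')$ must have rejected $m'$ (which is why he advanced past her), while no woman strictly below $\mu(m')$ is ever reached (once held by $\mu(m')$, he never becomes free again). Thus the set of proposals occurring under $\>$ is exactly $P = \{(m',w) : w \succeq_{m'} \mu(m')\}$, and the set occurring under $\>'$ is $P' = \{(m',w) : w \succeq'_{m'} \mu'(m')\}$, where $\succeq'_{m'}$ denotes the ordering in the list used under $\>'$. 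It therefore suffices to show, man by man, that each man's prefix of proposed-to women under $\>$ is contained in his prefix under $\>'$.

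First I would dispatch every man $m' \neq m$. Their lists are identical under $\>$ and $\>'$, so I only need to compare their final partners. Since $m$ does not incur regret, \Cref{prop:PushUp} applies and yields $\mu \succeq_M \mu'$; in particular $\mu(m') \succeq_{m'} \mu'(m')$, i.e.\ $\mu(m')$ sits weakly above $\mu'(m')$ in $m'$'s common list. Hence the prefix of women weakly above $\mu(m')$ is contained in the prefix weakly above $\mu'(m')$, giving $\{w : w \succeq_{m'} \mu(m')\} \subseteq \{w : w \succeq_{m'} \mu'(m')\}$, which is exactly the containment of $m'$'s proposals.

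It remains to treat the accomplice $m$ himself. Under $\>$ his match is $\mu(m)$, so he proposes precisely to $\>^L_m \cup \{\mu(m)\}$. Under $\>'$ the no-regret hypothesis gives $\mu'(m) = \mu(m)$, while his list has been altered only by moving the women of $X$ (all originally below $\mu(m)$) above $\mu(m)$; thus the women weakly above $\mu(m)$ in $\>'_m$ are exactly $\>^L_m \cup X \cup \{\mu(m)\}$, and $m$ proposes to all of them. This set contains $\>^L_m \cup \{\mu(m)\}$, so $m$'s proposals under $\>$ are contained in his proposals under $\>'$. Taking the union over all men yields $P \subseteq P'$, as required.

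The only real content lies in the first step---establishing the prefix characterization of proposals---after which the result is immediate given \Cref{prop:PushUp}. I expect the main obstacle, if any, to be purely expository: one could instead attempt a dynamic coupling that replays the run under $\>$ inside a carefully ordered run under $\>'$, but such an argument is delicate because $m$'s promoted proposals to the women of $X$ interleave with the rejection chains triggered by other men, whereas the static characterization sidesteps this difficulty entirely.
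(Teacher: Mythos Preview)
Your argument is correct. The static characterization of \DA{} proposals---that a man $m'$ proposes to $w$ if and only if $w \succeq_{m'} \mu(m')$ in whatever list $m'$ submits---is standard and sound, and once it is in hand the containment follows immediately from \Cref{prop:PushUp} (for $m' \neq m$) together with the no-regret hypothesis (for $m$). The only point worth flagging is that you are invoking \Cref{prop:PushUp} to establish \Cref{prop:PushUpProposals}; both are imported here from \citet{HUV21accomplice}, so you should make sure this does not reverse the dependency structure of the original source. It does not: \Cref{prop:PushUp} can be derived directly by observing that any blocking pair for $\mu'$ under $\>$ must involve $m$ and a woman in $\>^L_m$, and any such pair would already block $\mu'$ under $\>'$.

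As for comparison with the paper: the present paper does not supply its own proof of this proposition---it is restated from prior work without argument---so there is nothing to compare against. Your static-prefix approach is clean and avoids the dynamic coupling you rightly identify as more delicate.
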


\begin{restatable}[\protect\citealp{HUV21accomplice}]{proposition}{PushUpProposalsContainment}
Let $X \coloneqq \{w_a,w_b,w_c,\dots\}$ be an arbitrary finite set of women that the accomplice $m$ can push up without incurring regret. Then, any proposal that occurs under $\>^X$ also occurs under at least one of the profiles $\>^{a}, \>^{b}, \>^{c}, \dots$, where $\>^x \coloneqq \{\>_{-m}, \>^{w_x \uparrow}\}$ denotes the preference profile after pushing up $w_x$ in the true preference list $\>_m$ of man $m$ for every $w_x \in X$.
\label{prop:PushUpPropsalsContainment}
\end{restatable}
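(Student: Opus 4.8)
The plan is to recast the statement in terms of output matchings, peel off the easy proposals, and reduce everything to one clean structural identity about the stable‑matching lattice.

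\textbf{Step 1 (Reformulation).} Recall the elementary fact that in \emph{any} execution of $\DA$ on a profile $\>'$, each man $m'$ proposes to exactly his $\DA(\>')$-partner together with every woman he ranks strictly above her in $\>'_{m'}$; hence the \emph{set} of proposals is order-independent and is determined coordinate‑by‑coordinate by the output matching. Write $\mu \coloneqq \DA(\>)$, $\mu^X \coloneqq \DA(\>^X)$, and $\mu^x \coloneqq \DA(\>^x)$ for $x \in X$. By the no-regret hypothesis $\mu^X(m) = \mu(m)$, and by \Cref{lem:SingleAgentPushUpNoRegret} also $\mu^x(m) = \mu(m)$ for every $x$. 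So under $\>^X$ man $m$ proposes exactly to $\>^L_m \cup X \cup \{\mu(m)\}$, under $\>^x$ man $m$ proposes exactly to $\>^L_m \cup \{w_x,\mu(m)\}$, and each other man $m'$ proposes under $\>^X$ (resp.\ $\>^x$) to the women weakly above $\mu^X(m')$ (resp.\ $\mu^x(m')$) in his true list. It therefore suffices to show: (i) for each $w_x \in X$ the proposal $(m,w_x)$ occurs under $\>^x$, which is immediate as $m$ is rejected by $w_x$ in $\DA(\>^x)$; (ii) the proposals $(m,w)$ with $w \in \>^L_m \cup \{\mu(m)\}$ occur under $\>^x$, which holds because they occur under $\>$ and, by \Cref{prop:PushUpProposals}, every proposal of $\DA(\>)$ recurs under every $\>^x$; and (iii) for every man $m' \neq m$, $\mu^X(m') = \mu^{x}(m')$ for some $x \in X$ (which makes every woman weakly above $\mu^X(m')$ in $\>_{m'}$ also weakly above $\mu^x(m')$, so all of $m'$'s $\>^X$-proposals occur under $\>^x$).

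\textbf{Step 2 (Reduction to a join identity).} Claim (iii) is exactly the assertion that $\mu^X$ is the women-join (equivalently, the pointwise men-worst) of $\{\mu^x : x \in X\}$ in the lattice $S_\>$. I would prove this by two inequalities. First, each $\mu^x \in S_\>$ by \Cref{prop:PushUp} applied to the base profile $\>$ (using that $m$ does not incur regret under $\>^x$). Second, for ``$\mu^X \preceq_M \mu^x$'': the profile $\>^X$ is obtained from $\>^x$ by pushing up, in $m$'s list, the set $X \setminus \{w_x\}$, all of whose members lie below $m$'s $\DA(\>^x)$-partner $\mu(m)$; since $\mu^X(m) = \mu(m)$ this is a no-regret push-up \emph{relative to the base profile $\>^x$}, so \Cref{prop:PushUp} gives $\mu^X \in S_{\>^x}$ and $\mu^x = \DA(\>^x) \succeq_M \mu^X$. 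Hence $\mu^X(m') \preceq_{m'} \mu^x(m')$ for all $x$, so $\mu^X \preceq_M \nu$ where $\nu$ denotes the matching with $\nu(m') = $ the $\>_{m'}$-worst of $\{\mu^x(m')\}$; by the classical lattice structure of $S_\>$ (Conway's lemma) $\nu \in S_\>$, and dually $\nu(w)$ equals the $\>_w$-best of $\{\mu^x(w)\}$. It remains to prove $\mu^X \succeq_M \nu$.

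\textbf{Step 3 (The heart: $\nu$ is stable for $\>^X$).} I would get $\mu^X \succeq_M \nu$ from men-optimality of $\DA$ (\Cref{prop:DA_stable_MenOptimal_WomenPessimal}) once $\nu$ is shown stable with respect to $\>^X$. Suppose $(m',w)$ blocked $\nu$ under $\>^X$. If $m' \neq m$ then $\>^X_{m'} = \>_{m'}$, so $(m',w)$ would block $\nu$ under $\>$, contradicting $\nu \in S_\>$. If $m' = m$, then as $\nu(m) = \mu(m)$ and $m$ prefers to $\mu(m)$ under $\>^X$ only the women of $\>^L_m \cup X$, we get $w \in \>^L_m \cup X$ and $m \succ_w \nu(w)$; for $w \in \>^L_m$ this again makes $(m,w)$ a blocking pair under $\>$, a contradiction; for $w = w_x \in X$, note $\nu(w_x) \succeq_{w_x} \mu^x(w_x)$, so $m \succ_{w_x} \mu^x(w_x)$, but $w_x$ is ranked above $\mu(m) = \mu^x(m)$ in $m$'s list under $\>^x$, so $(m,w_x)$ would block $\mu^x \in S_{\>^x}$ — a contradiction. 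Thus $\nu$ has no blocking pair under $\>^X$, giving $\mu^X \succeq_M \nu$, hence $\mu^X = \nu$, which is (iii).

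\textbf{Expected main obstacle.} The delicate content is precisely claim (iii) — that promoting a whole set of women cannot hurt any single man more than promoting one suitably chosen woman of that set. The blocking-pair check in Step 3 for the case $w = w_x \in X$ is where this is forced, and it is the only place the argument leans on something outside the quoted results, namely that $S_\>$ is a lattice under the preference orders. An alternative, more operational route — inducting on $|X|$ and tracking the rejection chain triggered by each newly promoted woman — is tempting but gets stuck on exactly the same point: a chain set off by one promoted woman can be prolonged by another, so one would still have to rule out such ``compounding'' driving a man strictly below where some single promotion would place him; the lattice argument above is a clean way to package that fact.
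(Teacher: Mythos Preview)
Your proof is correct. Note that the paper does not prove this proposition at all: it is imported wholesale from \citet{HUV21accomplice} and listed among the ``Useful Results from Prior Work'', so there is no paper proof to compare against directly.

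What is worth remarking, however, is that your argument inverts the logical flow the present paper adopts. The paper takes \Cref{prop:PushUpPropsalsContainment} as a black box and uses it to derive the meet identity $\mu^Y = \bigwedge_{x\in Y}\mu^x$ (their \Cref{lem:PushUp_Meet}): they first get $\mu^Y \succeq_W \mu^x$ for each $x$ from \Cref{prop:PushUp}, and then rule out strict improvement beyond the join by appealing to the proposal containment. You instead prove the meet identity directly---via Conway's lattice theorem and the blocking-pair check that $\nu \coloneqq \bigwedge_x \mu^x$ is stable for $\>^X$---and read off the proposal containment as an immediate corollary (your Step~1 reduction). Your route is arguably cleaner: it isolates the one genuinely nontrivial case ($m' = m$, $w = w_x$) and handles it with a single stability argument against $\mu^x \in S_{\>^x}$, whereas the operational ``chain of rejections'' route you rightly flag as the alternative would have to grapple with exactly the compounding you describe. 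The cost is that you invoke the lattice structure of $S_\>$, which the paper never explicitly cites, but this is entirely standard.
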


\begin{restatable}[\protect\citealp{HUV21accomplice}]{proposition}{PushUpOneWoman}
For some fixed man $m$, let ${X \subseteq W}$ be a set of women that $m$ can push up (with or without incurring regret). Then, the match for some fixed woman $w$ that is obtained by pushing up all women in $X$ can also be obtained by pushing up exactly one woman in $X$.
\label{prop:PushUpOneWoman}
\end{restatable}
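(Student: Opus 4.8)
The statement to prove is \Cref{prop:PushUpOneWoman}: for a fixed man $m$ and a set $X \subseteq W$ that $m$ can push up (possibly with regret), the match of any fixed woman $w$ obtained by pushing up all of $X$ can also be obtained by pushing up exactly one woman in $X$.

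\medskip

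The plan is to reason about the proposal sequence of the deferred acceptance algorithm under the manipulated profile $\>^X$ and trace which single push-up suffices to reproduce woman $w$'s partner. First I would invoke \Cref{prop:PushUpPropsalsContainment}: every proposal that occurs under $\>^X$ also occurs under at least one of the single-woman push-up profiles $\>^{a}, \>^{b}, \dots$ (one profile per $w_x \in X$). The key object is the set $P_w$ of proposals received by $w$ during the run of \DA on $\>^X$; her final partner $\mu^X(w)$ is the best (according to $\>_w$) proposer in $P_w$. Applying the containment result to each proposal made \emph{to} $w$, every such proposal is realized under some single push-up; the candidate single woman to push up is the $w_x \in X$ whose profile $\>^{w_x \uparrow}$ contains the specific proposal $(\mu^X(w), w)$, i.e., the proposal from $w$'s $\>^X$-partner.

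\medskip

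The second step is to show that pushing up that single $w_x$ actually gives $w$ the partner $\mu^X(w)$ and not something better or worse. For "not worse": since the proposal from $\mu^X(w)$ to $w$ occurs under $\>^{w_x \uparrow}$, and a proposal once made is never retracted during \DA except by a rejection in favor of a strictly better proposer, $w$'s final partner under $\>^{w_x\uparrow}$ is weakly $\>_w$-better than $\mu^X(w)$. For "not better": one must argue that under the single push-up $\>^{w_x\uparrow}$, woman $w$ cannot receive a proposal strictly better (for her) than $\mu^X(w)$. The natural route is a comparison of proposal sets — since $\>^{w_x\uparrow}$ differs from the true profile $\>$ only by $m$ promoting $w_x$, the set of proposals (and hence the set of proposers to any woman) under $\>^{w_x\uparrow}$ is "sandwiched" between those under $\>$ and those under $\>^X$; more precisely, one shows the proposals to $w$ under $\>^{w_x\uparrow}$ form a subset of those to $w$ under $\>^X$. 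This monotone containment of proposal sets along single push-ups — combined with the fact that $w$'s outcome is the $\>_w$-best proposer she sees — pins her partner down exactly.

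\medskip

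The main obstacle I anticipate is the "not better" direction: controlling the proposal set of $w$ under the single push-up and proving it is contained in the proposal set under the full push-up $\>^X$. This requires care because the \DA dynamics are global — promoting fewer women could in principle trigger a different chain of rejections that sends a better proposer to $w$. The cleanest way around this is to establish a general monotonicity lemma of the flavor "pushing up a superset of women (in $m$'s list) can only expand the set of proposals each woman receives," possibly echoing \Cref{prop:PushUpProposals} (which handles the no-regret case) but now needing a version valid even when $m$ incurs regret; if such monotonicity fails in full generality one would instead argue directly via the containment in \Cref{prop:PushUpPropsalsContainment} applied proposal-by-proposal to $w$, showing that the $\>_w$-best among all proposals $w$ ever receives across the single-woman profiles equals $\mu^X(w)$, and that this best is attained precisely at the profile $\>^{w_x\uparrow}$ witnessing the proposal $(\mu^X(w), w)$. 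Once this is in place, the conclusion $\mu^{w_x\uparrow}(w) = \mu^X(w)$ follows and the proof is complete.
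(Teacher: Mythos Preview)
The paper does not prove \Cref{prop:PushUpOneWoman}; it is quoted without proof as a known result from \citet{HUV21accomplice} in the ``Useful Results from Prior Work'' appendix, so there is no in-paper argument to compare against. Assessing your proposal on its own merits: the no-regret case is essentially sound, since \Cref{prop:PushUpPropsalsContainment} supplies the witness $w_x$ for which the proposal $(\mu^X(w),w)$ occurs under $\>^{w_x\uparrow}$, and the reverse containment $P_{\>^{w_x\uparrow}} \subseteq P_{\>^X}$ follows from \Cref{prop:PushUpProposals} applied to the push-up of $X\setminus\{w_x\}$ starting from $\>^{w_x\uparrow}$ (both endpoints have $m$ matched to $\mu(m)$ by \Cref{lem:SingleAgentPushUpNoRegret}, so that transition is itself no-regret).

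The genuine gap is the with-regret case, which the statement explicitly permits. Your opening move invokes \Cref{prop:PushUpPropsalsContainment}, but that proposition is stated only for sets that $m$ can push up \emph{without incurring regret}, so it is not available here; you flag the no-regret restriction on \Cref{prop:PushUpProposals} but overlook the same restriction on your primary containment tool. Without a valid replacement, neither direction of your argument is justified when $m$ ends up worse off under $\>^X$: you cannot locate the single $w_x$ witnessing the proposal $(\mu^X(w),w)$, nor establish the proposal-set sandwich for the ``not better'' half. A separate treatment of the with-regret case is required---for instance, by first isolating the woman in $X$ that $m$ ends up matched with (\Cref{lem:accomplice-match-after-regret-manipulation}) and reducing to a no-regret analysis relative to a profile in which she is already promoted, or by re-establishing the proposal-containment lemmas without the no-regret hypothesis.
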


\section{Omitted Material from Section~\ref{sec:two-for-one}}

\subsection{Concatenating Optimal Self and Optimal Accomplice Strategies Can Be Harmful} \label{subsec:concatenate}

As discussed in \Cref{sec:Computing_Pair}, a natural approach for finding an optimal joint strategy is to combine (or ``concatenate'') an optimal self manipulation for the woman and an optimal accomplice manipulation for the man. Unfortunately, as we show in \Cref{eg:concatenate_pair} below, the woman's match may \emph{worsen} by naively combining the respective individual strategies.\footnote{Our example's message of ``coordination is key'' is reminiscent of a similar observation by \citet[Sec 3.1]{SDT21coalitional} in the context of one-sided manipulation by a coalition of women. They showed that naively combining optimal self-manipulation strategies for the individuals may actually be detrimental for the coalition.} 

\begin{example}[\textbf{Optimal self + Optimal accomplice $\neq$ Optimal pair manipulation}]
\label{eg:concatenate_pair}
Consider the following preference profile where the \DA{} outcome is underlined.

\begin{table}[H]
    \centering
    \begin{tabularx}{0.7\linewidth}{XXXXXXXXXXXXXXX}
            $\boldsymbol{\textcolor{blue}{{m_1}}}\colon$ & $\underline{w_4^a}$ & $w_3^s$ & $w_1$ & $w_2^*$ & $w_5$ && $\boldsymbol{\textcolor{blue}{{w_1}}}\colon$ & $m_3^s$ & $m_1$ & $m_5^a$ & $\underline{m_4}$ & $m_2^*$\\
            ${m_2}\colon$ & $w_1^*$ & $w_3$ & $\underline{w_2^s}$ & $w_5^a$ & $w_4$ && ${w_2}\colon$ & $m_4^a$ & $m_3$ & $m_1^*$ & $\underline{m_2^s}$ & $m_5$\\
            ${m_3}\colon$ & $\underline{w_3^a}^{,*}$ & $w_1^s$ & $w_5$ & $w_2$ & $w_4$ && ${w_3}\colon$ & $m_1^s$ & $\underline{m_3^a}^{,*}$ & $m_2$ & $m_4$ & $m_5$\\
            ${m_4}\colon$ & $\underline{w_1}$ & $w_5^{s,*}$ & $w_2^a$ & $w_3$ & $w_4$ && ${w_4}\colon$ & $m_2$ & $m_5^{s,*}$ & $\underline{m_1^a}$ & $m_3$ & $m_4$\\
            ${m_5}\colon$ & $w_3$ & $\underline{w_5}$ & $w_1^a$ & $w_4^{s,*}$ & $w_2$ && ${w_5}\colon$ & $m_2^a$ & $m_4^{s,*}$ & $\underline{m_5}$ & $m_1$ & $m_3$
        \end{tabularx}
\end{table}

Suppose the manipulating pair is $(m_1, w_1)$. The \DA{} matching after $m_1$ submits the optimal no-regret accomplice manipulated list $\>'_{m_1} \coloneqq w_2 \> w_4 \> w_3 \> w_1 \> w_5$ is marked by ``$a$'', and the one where $w_1$ submits the optimal self manipulated list $\>'_{w_1} \coloneqq m_3 \> m_2 \> m_1 \> m_5 \> m_4$ is marked by ``$s$''. Additionally, the \DA{} matching after both submit $\>'_{m_1}$ and $\>'_{w_1}$ simultaneously is marked by ``$*$''. 

Notice that the accomplice $m_1$ and the strategic woman $w_1$ are both strictly worse off under joint manipulation, and truthful reporting by the pair $(m_1,w_1)$ is strictly better for the woman compared to the concatenation strategy.\qed
\end{example}

\subsection{Proof of Lemma~\ref{prop:WomanListInconspicuous}}%

We will start by presenting a result that shows that any misreport by a woman that leaves her matched with the same man as before has no effect on the final matching~(\Cref{lem:Woman_Match_Changes}).

\begin{restatable}{lemma}{WomanMatchChanges}
Let $\>$ be a preference profile and let $\mu \coloneqq \DA(\>)$. For any woman $w$, let $\>' \coloneqq \{\>_{-w}, \>'_{w}\}$ denote the preference profile after $w$ misreports her preferences and let $\mu' \coloneqq \DA(\>')$. If $\mu'(w) = \mu(w)$, then $\mu' = \mu$.
\label{lem:Woman_Match_Changes}
\end{restatable}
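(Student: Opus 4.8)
The plan is to sandwich $\mu$ and $\mu'$ in the men's preference order and conclude that every man keeps the same partner, which forces $\mu=\mu'$.

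First I would invoke \Cref{prop:women-optimality-of-woman-manipulations} with the singleton coalition $W^*=\{w\}$. Since $\mu'(w)=\mu(w)$, woman $w$ is not strictly worse off under $\>'$, so the proposition yields $\mu \succeq_M \mu'$ (all men weakly worse off) and $\mu' \succeq_W \mu$. The nontrivial remaining direction is $\mu' \succeq_M \mu$; to obtain it I would show that $\mu$ is stable with respect to $\>'$ and then apply the men-optimality of $\DA(\>')$ among stable matchings (\Cref{prop:DA_stable_MenOptimal_WomenPessimal}).

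To show $\mu \in S_{\>'}$: since $\>$ and $\>'$ differ only in $w$'s list, any blocking pair of $\mu$ with respect to $\>'$ not involving $w$ would also block $\mu$ with respect to $\>$, which is impossible as $\mu=\DA(\>)$ is stable. So suppose $(m,w)$ blocks $\mu$ with respect to $\>'$, i.e. $w \>_m \mu(m)$ and $m \>'_w \mu(w)$; note $m \neq \mu(w)$. From $\mu \succeq_M \mu'$ we have $\mu(m) \succeq_m \mu'(m)$, hence $w \>_m \mu'(m)$. Therefore, during the (men-proposing) execution of $\DA(\>')$, man $m$ proposes to $w$ — because $w$ lies above his final partner $\mu'(m)$ in his unchanged list $\>_m$ — and is eventually rejected by her. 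When $w$ rejects $m$ in $\DA(\>')$, she does so in favour of a man she ranks strictly above $m$ in her reported list $\>'_w$; since a woman's tentatively held partner only improves along $\>'_w$ during \DA{}, her final partner $\mu'(w)=\mu(w)$ satisfies $\mu(w) \>'_w m$, contradicting $m \>'_w \mu(w)$. Hence $\mu$ has no blocking pair with respect to $\>'$.

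Finally, since $\mu \in S_{\>'}$ and $\mu'=\DA(\>')$ is men-optimal over $S_{\>'}$ while the men's preferences are identical in $\>$ and $\>'$, we get $\mu' \succeq_M \mu$. Combined with $\mu \succeq_M \mu'$ from the first step, every man is matched to the same partner under $\mu$ and $\mu'$, so $\mu=\mu'$. I expect the main obstacle to be the blocking-pair analysis above: the delicate case is a would-be blocking pair $(m,w)$ in which $w$ has \emph{promoted} $m$ relative to her true partner, and this is precisely where the hypothesis $\mu'(w)=\mu(w)$ together with $\mu \succeq_M \mu'$ is needed to trace through the $\DA(\>')$ execution and force the contradiction.
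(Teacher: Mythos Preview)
Your proof is correct and takes a genuinely different route from the paper's.

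The paper argues operationally: assuming some other woman $w'$ changes partner, it uses \Cref{prop:women-optimality-of-woman-manipulations} to conclude $w'$ strictly improves, hence receives a new proposal under $\>'$; it then picks the \emph{first} proposal in the \DA{} run on $\>'$ that does not occur under $\>$ and chases backwards through the rejection that triggered it, eventually producing a proposal under $\>$ that would have caused $\mu(m')$'s partner to reject him there too---a contradiction. This is a direct proposal-tracing argument in the spirit of the other ``chain'' lemmas in the paper.

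You instead argue structurally: from the same proposition you extract $\mu \succeq_M \mu'$, then show $\mu \in S_{\>'}$ by ruling out a blocking pair $(m,w)$ (the only nontrivial case), and finally invoke men-optimality of $\DA(\>')$ to get $\mu' \succeq_M \mu$ and hence equality. The blocking-pair step is sound: $w \>_m \mu(m) \succeq_m \mu'(m)$ forces $m$ to propose to $w$ in $\DA(\>')$, and since $m \neq \mu(w) = \mu'(w)$ he is rejected, so $\mu'(w) \>'_w m$, contradicting the blocking condition.

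Your approach is cleaner and avoids the first-new-proposal bookkeeping; it also makes transparent that the result is really a consequence of $\mu$ remaining stable under $\>'$. The paper's approach, while more hands-on, dovetails with the proposal-set machinery ($P_{\>}$, $P'_x$, chains) used throughout the rest of the appendix, which may be why the authors chose it.
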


\begin{proof}
Suppose, for contradiction, that $\mu'(w) = \mu(w)$ and $\mu'(w') \neq \mu(w')$ for some $w' \in W \setminus \{w\}$. Since $w$'s partner does not change after misreporting, it follows from \Cref{prop:women-optimality-of-woman-manipulations} that $\mu'(w') \succeq_{w'} \mu(w')$. Further, since $\mu'(w') \neq \mu(w')$ and the preferences are strict, we have that $\mu'(w') \>_{w'} \mu(w')$ for some $w'\in W\setminus \{w\}$. Since $w'$ is truthful, she must receive a proposal during the execution of \DA{} algorithm on $\>'$ that she does not under $\>$; thus $P_{\>' \setminus \>} \neq \emptyset$.
Without loss of generality, suppose $(m', w')$ is the \emph{first} proposal in $P_{\>' \setminus \>}$ to occur during the \DA{} execution on $\>'$. 

Since men propose in decreasing order of their preference and it is assumed that $(m', w') \notin P_{\>}$ and $\>'_{m'} = \ \>_{m'}$, it must be that $m'$ was rejected by $w'' \coloneqq \mu(m')$ under $\>'$ before proposing to $w'$. Then, $w''$ must have received a proposal from some man, say $m''$, such that $m'' \>'_{w''} m'$. Since $m'$ is matched with $w''$ under $\mu$ but not under $\mu'$, we infer that $w''$ must be truthful; indeed, the strategic woman retains her $\mu$-partner under $\mu'$. Thus, in particular, $m'' \>'_{w''} m' \implies m'' \>_{w''} m'$.

We had assumed $(m',w')$ to be the first proposal during the $\DA$ execution on $\>'$ to not belong to $P_{\>}$. Since $(m'',w'')$ occurs before $(m',w')$ under $\>'$, we must have that $(m'',w'')$ also occurs under $\>$. Thus, under the execution of \DA{} algorithm on $\>$, $w''$ receives a better proposal from $m''$ but ends up getting matched with $m'$. This contradicts the correctness of \DA{} algorithm. Hence, we must have that $\mu'=\mu$.
\end{proof}

\begin{remark}[\textbf{Strategyproofness and non-bossiness}]
In addition to its role in the proof of \Cref{prop:WomanListInconspicuous}, we note that \Cref{lem:Woman_Match_Changes} offers an important insight: The \DA{} algorithm is \emph{non-bossy} for women. A mechanism is said to be non-bossy if no agent can misreport in a way that changes the allocation of other agents without changing his or her own match~\citep{SS81strategy,T16non}. In two-sided matching, non-bossiness---similar to strategyproofness---can be defined for each side of the market separately.

From \Cref{lem:Woman_Match_Changes}, we know that a woman cannot influence the outcome of any other agent without changing her own match. By contrast, a man can influence the outcome of another agent (a woman) without changing his own match, as evidenced by no-regret accomplice manipulation model~\citep{BH19partners,HUV21accomplice}. 
Combining these observations with the known results for strategyproofness of the \DA{} algorithm~\citep{DF81machiavelli,R82economics}, we obtain that: %
\begin{itemize}
    \item For men, the \DA{} algorithm is strategyproof but fails to be non-bossy, while
    \item For women, the \DA{} algorithm is not strategyproof but satisfies non-bossiness.\qed
\end{itemize}
\end{remark}

We now present \Cref{lem:SwappingLemma}---which is due to \citet{VG17manipulating}---that outlines the effect that swapping any pair of adjacent men in a woman's preference list has on her final partner. It will be helpful to define the notation $\Prop(w,\>)$ that denotes the set of all men who propose to $w$ during the run of \DA{} algorithm on $\>$. Further, $\Prop(w,\>,i)$ denotes the $i^{\text{th}}$ favorite man of $w$ (according to $w$'s list $\>$, namely $\>_w$) in the set $\Prop(w,\>)$. Thus, ${\Prop(w,\>,1) = \mu(w)}$ where ${\mu \coloneqq \DA(\>)}$. The set of non-proposing men is denoted by ${\NonProp(w,\>)}$, i.e., $\NonProp(w,\>) \coloneqq M \setminus \Prop(w,\>)$.

\begin{restatable}[\protect\citealp{VG17manipulating}]{proposition}{SwappingLemma}
 \label{lem:SwappingLemma}
 Let $\>$ and $\>'$ be two preference profiles differing only in the preferences of a fixed woman $w$, and let $\mu = \DA(\>)$ and $\mu' = \DA(\>')$. Let $\>'_w$ be derived from $\>_w$ by swapping the positions of an adjacent pair of men $(m_i,m_j)$ and making no other changes. Then,
 \begin{enumerate}%
 	\item if $m_i \in \NonProp(w,\>)$ or $m_j \in \NonProp(w,\>)$, then $\mu'(w) = \mu(w)$,
 	\item if $m_i,m_j \notin \{\Prop(w,\>,1), \Prop(w,\>,2)\}$, then $\mu'(w) = \mu(w)$.
 	\item if $m_i = \Prop(w,\>,2)$ and $m_j = \Prop(w,\>,3)$, then $\mu'(w) \in \{\mu(w),m_j\}$.
	\item if $m_i = \Prop(w,\>,1)$ and $m_j = \Prop(w,\>,2)$, then $\Prop(w,\>',2) \in \{m_i, m_j\}$.
 \end{enumerate}
 \end{restatable}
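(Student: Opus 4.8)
The plan is to couple the two runs of the man-proposing \DA{} algorithm, on $\>$ and on $\>'$, and compare them step by step. Because the men's lists and the lists of every woman other than $w$ coincide in the two profiles, the executions can only come apart through a tentative-acceptance decision of $w$ whose outcome depends on the relative order of $m_i$ and $m_j$. First I would formalize this: run both executions in lockstep, and let the \emph{critical step} be the first moment at which the man tentatively held by $w$ differs between the two runs. Up to (but not including) the critical step the two runs are literally identical, so the set of men who have proposed to $w$ so far is common to both and is contained in $\Prop(w,\>) \cap \Prop(w,\>')$. Since $m_i$ sits directly above $m_j$ in $\>_w$, the held man of $w$ can differ only once both $m_i$ and $m_j$ have proposed and $m_i$ is $w$'s most preferred proposer so far; at that step $w$ keeps $m_i$ (and rejects $m_j$) under $\>$, whereas she keeps $m_j$ (and rejects $m_i$) under $\>'$. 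As each man proposes to $w$ at most once, there is at most one such critical step.

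If no critical step ever occurs, the two runs are identical, hence $\mu' = \mu$, which already yields all four conclusions. So I would assume a critical step exists. Its existence forces both $m_i$ and $m_j$ to appear as proposers in the common prefix, so $m_i, m_j \in \Prop(w,\>)$; this immediately settles the first claim, since if either man lies in $\NonProp(w,\>)$ no critical step can occur and therefore $\mu'(w) = \mu(w)$.

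For the remaining claims I would analyze what happens \emph{after} the critical step. The two runs now differ only in that $\>$ has ejected $m_j$ while $\>'$ has ejected $m_i$; everything else about $w$'s behavior is symmetric, because whenever $w$ holds a man ranked above both $m_i$ and $m_j$ she accepts and rejects incoming men identically in the two runs. Thus the post-critical behavior is governed by a single rejection chain, started by the ejected man ($m_j$ under $\>$, $m_i$ under $\>'$), and the two chains are mirror images under the interchange $m_i \leftrightarrow m_j$. Using this correspondence I would argue: for the second claim, the top proposer $\Prop(w,\>,1)$ is ranked strictly above $m_i$ (as $m_i \notin \{\Prop(w,\>,1),\Prop(w,\>,2)\}$ and, by the previous paragraph, $m_i$ is a proposer), and one shows his proposal to $w$ survives in the $\>'$ run, so $w$'s best proposer is unchanged and $\mu'(w) = \mu(w)$; for the third claim, either $\Prop(w,\>,1)$ still proposes under $\>'$ (giving $\mu'(w) = \mu(w)$) or the chain suppresses it, in which case $w$'s best remaining proposer is $m_j$, now promoted above $m_i$, giving $\mu'(w) \in \{\mu(w), m_j\}$; and for the fourth claim, $m_i$ and $m_j$ are $w$'s two best proposers under $\>$, and since the swap only exchanges their roles, the second-best proposer under $\>'$ stays within $\{m_i, m_j\}$. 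The involution symmetry between $\>$ and $\>'$ lets me reuse one chain analysis in both directions, and \Cref{lem:Woman_Match_Changes} (non-bossiness for women) upgrades ``same partner for $w$'' to ``identical matching'' whenever that is convenient.

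The main obstacle I anticipate is precisely the control of this post-critical rejection chain: ejecting $m_i$ rather than $m_j$ sends a different man back into the market, and his downstream rejections could, a priori, change which men ranked above $m_i$ eventually propose to $w$. The crux is to show that these downstream differences are confined to the interchange $m_i \leftrightarrow m_j$ and never disturb the proposal to $w$ by the relevant higher-ranked man (namely $\Prop(w,\>,1)$, or $\Prop(w,\>,1)$ and $\Prop(w,\>,2)$ in the fourth case). I would establish this by matching the two chains proposal-for-proposal and invoking women-pessimality of the \DA{} outcome (\Cref{prop:DA_stable_MenOptimal_WomenPessimal}) to bound the set of partners $w$ can reach, which pins each conclusion to the stated small set.
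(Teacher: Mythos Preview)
The paper does not prove this proposition; it is quoted verbatim from \citet{VG17manipulating} and used as a black box. So there is no ``paper's own proof'' to compare against, and your proposal must stand on its own.

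Your coupling argument is sound through the identification of the critical step, and it cleanly dispatches part~1: if either $m_i$ or $m_j$ never proposes to $w$ under $\>$, the runs never diverge. That part is fine.

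The gap is in your treatment of parts~2--4. You assert that the two post-critical rejection chains ``are mirror images under the interchange $m_i \leftrightarrow m_j$,'' and later that you would ``match the two chains proposal-for-proposal.'' This is not true in general and cannot be made true: $m_i$ and $m_j$ are distinct men with unrelated preference lists, so the man ejected under $\>$ (namely $m_j$) goes next to \emph{his} favorite remaining woman, while the man ejected under $\>'$ (namely $m_i$) goes to a potentially completely different woman. The downstream cascades need not correspond at all, let alone proposal-for-proposal. You correctly flag this as ``the main obstacle,'' but the proposed fix---the involution symmetry plus women-pessimality from \Cref{prop:DA_stable_MenOptimal_WomenPessimal}---does not close it. Women-pessimality compares stable matchings \emph{within a single profile}; it says nothing about how $\DA(\>')$ relates to $\DA(\>)$ when $\>' \neq \>$, and $\mu'$ need not be stable with respect to $\>$.

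What actually drives parts~2--4 is a finer bookkeeping argument: one tracks, for each case, exactly which proposals to $w$ can be created or destroyed by the single differing ejection, typically by exploiting order-independence of \DA{} to resequence the two runs so that their common proposals are aligned and only the residual chain from the differing ejection remains to analyze. In part~2, for instance, the point is that under $\>$ \emph{both} $m_i$ and $m_j$ are eventually ejected from $w$ (since neither is $\Prop(w,\>,1)$), and one shows that the same holds under $\>'$, so the sets of proposals ultimately coincide. Your sketch gestures at the right structure but substitutes an incorrect symmetry for this analysis; as written, the argument for parts~2--4 does not go through.
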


In words, the matched partner remains the same when at least one of the swapped agents is a non-proposer (case 1), or when neither of the swapped agents is the first or the second-best proposer (case 2). When swapping the second and third-best proposers, the matched partner can either stay the same or worsen (with respect to true preferences). Further, when the first and the second-best proposers are swapped, there can be at most one new proposal that is better than the old partner (with respect to the true preference list).

\begin{remark}
Note that whenever \Cref{lem:SwappingLemma} guarantees that $\mu'(w) = \mu(w)$ (e.g., in cases 1 and 2), then by \Cref{lem:Woman_Match_Changes}, we can, in fact, infer that $\mu'=\mu$. Furthermore, since men's preferences stay unchanged between $\>$ and $\>'$, the entire set of proposals during the execution of \DA{} algorithm also remains unchanged between $\>$ and $\>'$.
\label{rem:Stronger_Swapping_Lemma}
\end{remark}

Before proving our next result (\Cref{prop:WomanListInconspicuous}), we recall that the set $S_w$ denotes the set of preference lists that can be obtained from $w$'s true list by moving exactly two men to the top. Although the lists in $S_w$ are \emph{not necessarily inconspicuous}, we note that the proof of this result relies heavily on that of Theorem 4 in \citet{VG17manipulating}, which shows that the partner a woman achieves via an \emph{optimal} misreport can also be achieved using an inconspicuous list.

\begin{figure*}[h]
\centering
  \begin{tabular}{@{}c@{}}
\begin{tikzpicture}
\tikzset{edge1/.style = {->}}

\node[rectangle,draw=none] (10) at (0.2,4.3)   {$\>'_w$};
\node[rectangle,draw=none] (11) at (0.2,3.8)   {$\vdots$};
\node[rectangle,draw]         (12) at (0.2,3)     
{{$p$}};
\node[rectangle,draw=none] (13) at (0.2,2.3)   {$\vdots$};
\node[rectangle,draw=none] (14) at (0.2,1.5)   
{{$q$}};
\node[rectangle,draw=none] (15) at (0.2,1)      {$\vdots$};
\node[rectangle,draw=none] (16) at (0.2,0.58)  {$\vdots$};

\draw[edge1](14) to[looseness=1.8,out=35,in=35] (13);

\node[rectangle,draw=none] (20) at (2,4.3)   {$\>^{(1)}_w$};
\node[rectangle,draw=none] (21) at (2,3.8)   {$\vdots$};
\node[rectangle,draw]         (22) at (2,3)     {{$p$}};
\node[rectangle,draw=none] (23) at (2,2.4)   {{$q$}};
\node[rectangle,draw=none] (24) at (2,1.9)   {$\vdots$};
\node[rectangle,draw=none] (25) at (2,1.17)  {$\vdots$};
\node[rectangle,draw=none] (26) at (2,0.75)  {$\vdots$};

\draw [decorate,decoration={brace,amplitude=5pt}] (2.25,3.3) -- (2.25,2.3) node [black,midway,xshift=0cm,yshift=0.3cm] (b) {};

\draw[edge1](b) to[looseness=1.15,out=0,in=0] (21);

\node[rectangle,draw=none] (30) at (4.1,4.3)   {$\>^{(2)}_w$};
\node[rectangle,draw]         (32) at (4.1,3.7)     {{$p$}};
\node[rectangle,draw=none] (33) at (4.1,3.1)   {{$q$}};
\node[rectangle,draw=none] (31) at (4.1,2.6)   {$\vdots$};
\node[rectangle,draw=none] (34) at (4.1,1.9)   {$\vdots$};
\node[rectangle,draw=none] (35) at (4.1,1.17)  {$\vdots$};
\node[rectangle,draw=none] (36) at (4.1,0.75)  {$\vdots$};

\node[rectangle,draw=none] (40) at (7.5,4.3)   {$\>^{(k)}_w$};
\node[rectangle,draw]         (32) at (7.5,3.7)     {{$p$}};
\node[rectangle,draw=none] (33) at (7.5,3.1)   {{$q$}};
\node[rectangle,draw=none] (31) at (7.5,2.6)   {$\vdots$};
\node[rectangle,draw=none] (44) at (7.5,1.9)   {$\vdots$};
\node[rectangle,draw=none] (45) at (7.5,1.2)  { $m_j$};
\node[rectangle,draw=none] (46) at (7.5,0.8)    { $m_i$};
\node[rectangle,draw=none] (47) at (7.5,0.5)  {$\vdots$};

\draw[edge1](45) to[looseness=2,out=0,in=0] (46);
\draw[edge1](46) to[looseness=2,out=180,in=180] (45);

\node[rectangle,draw=none] (50) at (9.4,4.3)   {$\>^{(k+1)}_w$};
\node[rectangle,draw]         (32) at (9.4,3.7)     {{$p$}};
\node[rectangle,draw=none] (33) at (9.4,3.1)   {{$q$}};
\node[rectangle,draw=none] (31) at (9.4,2.6)   {$\vdots$};
\node[rectangle,draw=none] (54) at (9.4,1.9)   {$\vdots$};
\node[rectangle,draw=none] (55) at (9.4,1.2)  
{{$m_i$}};
\node[rectangle,draw=none] (56) at (9.4,0.8)    
{{$m_j$}};
\node[rectangle,draw=none] (57) at (9.4,0.5)  {$\vdots$};

\node[rectangle,draw=none] (60) at (11.5,4.3)   
{$\>^{(\ell)}_w=\>''_w$};
\node[rectangle,draw]         (32) at (11.5,3.7)     {{$p$}};
\node[rectangle,draw=none] (33) at (11.5,3.1)   {{$q$}};
\node[rectangle,draw=none] (31) at (11.5,2.6)   {$\vdots$};
\node[rectangle,draw=none] (64) at (11.5,1.9)   {$\vdots$};
\node[rectangle,draw=none] (65) at (11.5,1.17)  {$\vdots$};
\node[rectangle,draw=none] (66) at (11.5,0.75)  {$\vdots$};

\draw [decorate,decoration={brace,amplitude=7pt}] (11.7,2.8) -- (11.7,0.3) node [black,midway,xshift=1.8cm,yshift=0cm, text width=2.3cm] { resembles $\>_w$\\ except for $\{p,q\}$};

\path (30) -- node[auto=false]{\ldots} (40);
\path (50) -- node[auto=false]{\ldots} (60);

\end{tikzpicture}
  \end{tabular}
  \caption{The sequence of preference lists constructed in the proof of \Cref{prop:WomanListInconspicuous} (based on a similar construction in Theorem 4 in \citet{VG17manipulating}). Here, $p \coloneqq \Prop(w,\>',1) = \mu'(w)$ and $q \coloneqq \Prop(w,\>',2)$. %
  }
\label{fig:Inconspicuous-manipulation-one-agent}
\end{figure*}
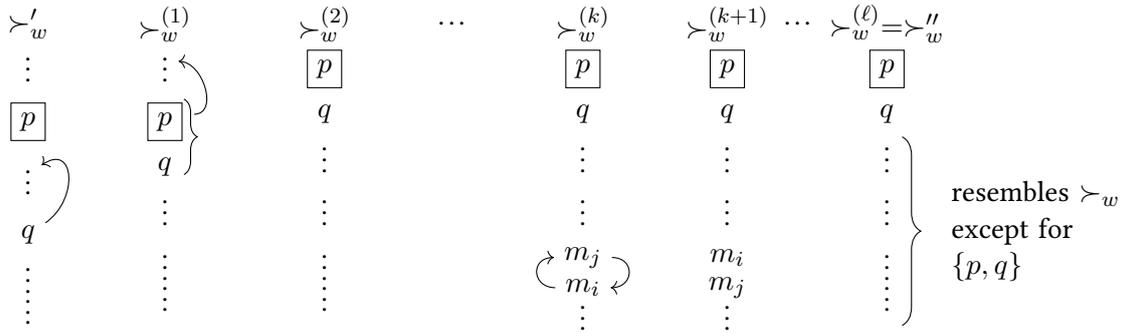

\WomanListInconspicuous*

\begin{proof}
Our proof of \Cref{prop:WomanListInconspicuous} closely follows that of Theorem 4 in \citet{VG17manipulating}, which showed that the partner a woman can achieve by an \emph{optimal} misreport can also be achieved using an inconspicuous list. 
For the sake of completeness, we recall parts of their argument below with the necessary modifications to establish \Cref{prop:WomanListInconspicuous}.

Due to \Cref{lem:Woman_Match_Changes}, it suffices to show that there exists some list $\>''_w \, \in S_w$ such that $\mu''(w) = \mu'(w)$, where $\mu'' \coloneqq \DA(\>_{-w}, \>''_w)$. 

For the profile $\>' \coloneqq \{\>_{-w},\>'_w\}$, let $p \coloneqq \Prop(w, \>', 1)$ and $q \coloneqq \Prop(w, \>', 2)$ denote the favorite and second-favorite proposers of $w$ (with respect to the list $\>'_w$) during the execution of \DA{} algorithm on $\>'$. Thus, $p = \mu'(w)$.

Starting from the true list $\>_w$, let $\>''_w$ be the preference list %
obtained by promoting $p$ %
to the top and then placing the man $q$ 
in the position immediately below $p$. Notice that $\>''_w \, \in S_w$.

In the remainder of the proof, we will show that the misreported list $\>'_w$ can be transformed into $\>''_w$ by a sequence of \emph{massaging} operations (involving swaps of adjacent pairs of men) such that the man $p = \mu'(w)$ continues to propose to $w$ at each intermediate step of the transformation. Further, we will argue that $p$ is the \emph{best} proposal that $w$ receives with respect to each intermediate list (which ensures that $p$ remains the \DA{} partner of $w$ at each intermediate step). Note that proving this would imply that $\mu''(w) = \mu'(w)$.

We will first describe the transformation of $\>'_w$ into $\>''_w$ followed by arguing that doing so maintains the property that $p$ will propose to $w$, and importantly, that $p$ will continue to be the \DA{} partner of $w$ for each intermediate matching.

\begin{enumerate}%
    \item \emph{Constructing $\>''_w$}: Starting with $\>'_w$, we construct a sequence of preference lists $\>^{(1)}_w, \>^{(2)}_w, \dots, \>^{(\ell)}_w$ that terminates with $\>''_w$, i.e., $\>''_w \, = \, \>^{(\ell)}_w$ (see Figure~\ref{fig:Inconspicuous-manipulation-one-agent}):
	\begin{enumerate}
	    \item \emph{Moving $q$ next to $p$}: We create the list $\>^{(1)}_w$ from $\>'_w$ by moving $q$ to the position right below $p$. Since all men who are ranked between $p$ and $q$ in the list $\>'_w$ must be non-proposers under $\>'$, it follows from case (1) of \Cref{lem:SwappingLemma} that $\mu'(w) = \mu^{(1)}(w)$, where $\mu^{(1)} \coloneqq \DA(\>_{-w}, \>^{(1)}_w)$. Further, by \Cref{rem:Stronger_Swapping_Lemma}, the set of proposals under the \DA{} algorithm on the profiles $\>$ and $\>^{(1)}$ is exactly the same, implying that $\Prop(w,\>^{(1)},1) = p$ and ${\Prop(w,\>^{(1)},2) = q}$.
	    Note that any man above $p$ in the list $\>^{(1)}_w$ must still be a non-proposer.
	    
	    \item \emph{Promoting $p$ and $q$}: In this step, starting from the list $\>^{(1)}_w$, we promote $p$ to the top rank position and $q$ to the second rank position. %
	    Since $p$ is $w$'s favorite proposer according to $\>^{(1)}_w$, %
	    no man above $p$ could have proposed to her under $\>'$ or $\>^{(1)}$. Thus, we are only required to swap a proposer with a non-proposer at each step, %
	    which, by case (1) of \Cref{lem:SwappingLemma}, does not affect $w$'s match, and thus, by \Cref{rem:Stronger_Swapping_Lemma}, also does not affect the run of \DA{} algorithm.
	    We call the resulting preference list $\>^{(2)}_w$. Clearly, ${\Prop(w,\>^{(2)},1) = p}$ and ${\Prop(w,\>^{(2)},2) = q}$, where ${\>^{(2)} \coloneqq \{\>_{-w},\>^{(2)}_w\}}$.

	    \item \emph{Fixing the part of the list below $q$}: The final step in our construction involves a sequence of preference lists $\{\>^{(3)}_w, \>^{(4)}_w, \dots\}$. The list $\>^{(k+1)}_w$ is derived from $\>^{(k)}_w$ by \emph{swapping} a pair of adjacent men $(m_i,m_j)$ in $\>^{(k)}_w$ such that:
	    \begin{enumerate}[(i)]
	        \item $\Prop(w,\>^{(k)},2) \>^{(k)}_w m_i$ and $\Prop(w,\>^{(k)},2) \>^{(k)}_w m_j$, and 
	        \item $m_i \, \>_w \, m_j$ and $m_j \, \>^{(k)}_w \, m_i$.
	    \end{enumerate}
	    That is, each new list in the sequence is derived from the previous list by swapping a pair of adjacent men who are (i) both positioned below the second-favorite proposer according to the previous list, and (ii) are incorrectly ordered with respect to the true preference list $\>_w$. No other changes are made. 
	    Notice that this sequence of preference lists must be finite since there can only be a finite number of pairs of men who are incorrectly ordered with respect to the true list. Let $\>^{(\ell)}_w$ be the final list in this sequence, and let ${\>^{(k)} \coloneqq \{\>_{-w},\>^{(k)}_w\}}$ and ${\mu^{(k)} = \DA(\>^{(k)})}$ denote the preference profile and the corresponding \DA{} matching at each step. This finishes the construction of the sequence of preference lists.
    \end{enumerate}

    \item \emph{Correctness}: It follows from the above construction that the list $\>^{(\ell)}_w$ can be obtained from the true list $\>_w$ by promoting $p$ to the top and then placing $q$ to the position right below $p$, while making no other changes. Indeed, steps 1(a) and 1(b) secure the positions of $p$ and $q$ at the top of $\>^{(\ell)}_w$,
    while step 1(c) incrementally corrects for pairs that are out of order with respect to $\>_w$ and eventually terminates with a list $\>^{(\ell)}_w$ that is identical to $\>_w$ with the exception of the positions of men $p$ and $q$.
    
    Now, we must show that $\Prop(w,\>^{(\ell)},1) = \mu^{(\ell)}(w) = p$. We do so using induction. The base case consists of showing that $\mu^{(3)}(w) = p$. Let $(m_i,m_j)$ be the pair of adjacent men in $\>^{(2)}_w$ that are swapped in $\>^{(3)}_w$. By construction, we know that $m_i, m_j \notin \{{\Prop(w,\>^{(2)},1),} \, {\Prop(w,\>^{(2)},2)}\}$. Therefore, from cases (1) and (2) of \Cref{lem:SwappingLemma}, we have that $\mu^{(3)}(w) = p$. Additionally, from \Cref{rem:Stronger_Swapping_Lemma}, the set of proposals under the \DA{} algorithm remains unchanged.

	Our induction hypothesis is that $\mu^{(k)}(w) = p$ for all $3 < k \leq K$. Given this, we show that $\mu^{(K+1)}(w) = p$. %
	
	As before, let $(m_i,m_j)$ be the pair of adjacent men that are swapped in $\>^{(K)}_w$ to obtain $\>^{(K+1)}_w$. By construction, we again have that $m_i, m_j \notin \{{\Prop(w,\>^{(K)},1),} \, {\Prop(w,\>^{(K)},2)}\}$. Therefore, from cases (1) and (2) of \Cref{lem:SwappingLemma}, we have that $\mu^{(K+1)}(w) = \mu^{(K)}(w)$. Finally, using the induction hypothesis, we get that $\mu^{(K+1)}(w) = p$.
\end{enumerate}
Hence, by induction, we have that $\mu^{(\ell)}(w) = p = \mu'(w)$. %
Since $\>^{(\ell)} \, = \, \>''_w$ by construction, and $\>''_w \, \in S_w$, we substantiate \Cref{prop:WomanListInconspicuous}.
\end{proof}

\begin{remark}[\textbf{Key difference from the construction of \citet{VG17manipulating}}]
One might wonder why the proof of \Cref{prop:WomanListInconspicuous} does not follow the construction of \citet{VG17manipulating} \emph{exactly}. Specifically, one might ask why it is not enough to simply promote $q$ to a position immediately below $p$ (as done by \citet{VG17manipulating} in the context of self manipulation), without further promoting these agents to the top two positions.

Note that \citet{VG17manipulating} invoke \Cref{lem:SwappingLemma} in the context of an \emph{optimal} self manipulation. This ensures that the agents above $p$ in the true list are guaranteed to be non-proposers.

By contrast, our construction in \Cref{prop:WomanListInconspicuous} works for an \emph{arbitrary} misreport by $w$. Thus, it is possible that there is a proposer $p'$ that is below $p$ in the misreported list $\>'_w$ but is above $p$ in the true list $\>_w$. If $p$ is not promoted to the top position, then in order to ensure that the agents in $\>'_w \setminus \{p,q\}$ are ordered exactly as in $\>_w \setminus \{p,q\}$, we might need to exchange $p$ and $p'$. However, this could change the \DA{} match of $w$ (case 4 of \Cref{lem:SwappingLemma}). For this reason, we need to promote $p$ to the top position.

For a similar reason, we also need to promote $q$ to immediately after $p$, thus preventing the swap of the second and third-best proposers and again avoiding a change in the \DA{} partner (case 3 of \Cref{lem:SwappingLemma}).
\end{remark}

\citet{VG17manipulating} showed that optimal self manipulation is inconspicuous, i.e., only one man's position needs to be changed from the true list. One may ask whether \emph{any} misreport---optimal, suboptimal, or with-regret---is also inconspicuous. \Cref{eg:with-regret-woman-manipulation-not-inconspicuous} gives a negative answer to this question. More specifically, we show that a with-regret misreport by the woman is not guaranteed to be inconspicuous. We leave the question of whether supoptimal manipulations by women are inconspicuous as an open problem.

Our interest in with-regret misreports by women becomes apparent in the proof of \Cref{thm:PairManipulation}, where we discuss the problems that arise in  establishing inconspicuousness of pair manipulation.

\begin{example}[\textbf{With-regret misreport by a woman may not be inconspicuous}] \label{eg:with-regret-woman-manipulation-not-inconspicuous}
Consider the following preference profile where the \DA{} outcome is underlined.

\begin{table}[H]
    \centering
    \begin{tabularx}{0.7\linewidth}{XXXXXXXXXXXXXXX}
            $m_1\colon$ & $\underline{w_3}$ & $w_1$ & $w_2^*$ & $w_4$ & $w_5$ && $\boldsymbol{\textcolor{blue}{{w_1}}}\colon$ & $m_1$ & $m_2$ & $\underline{m_3}$ & $m_4^*$ & $m_5$\\
            ${m_2}\colon$ & $\underline{w_4^*}$ & $w_5$ & $w_3$ & $w_1$ & $w_2$ && ${w_2}\colon$ & $m_3$ & $m_5$ & $m_1^*$ & $m_2$ & $\underline{m_4}$\\
            ${m_3}\colon$ & $\underline{w_1}$ & $w_3^*$ & $w_5$ & $w_4$ & $w_2$ && ${w_3}\colon$ & $m_5$ & $m_2$ & $m_3^*$ & $m_4$ & $\underline{m_1}$\\
            ${m_4}\colon$ & $\underline{w_2}$ & $w_1^*$ & $w_3$ & $w_4$ & $w_5$ && ${w_4}\colon$ & $\underline{m_2^*}$ & $m_3$ & $m_1$ & $m_5$ & $m_4$\\
            ${m_5}\colon$ & $w_1$ & $\underline{w_5^*}$ & $w_3$ & $w_2$ & $w_4$ && ${w_5}\colon$ & $m_1$ & $\underline{m_5^*}$ & $m_3$ & $m_2$ & $m_4$
        \end{tabularx}
\end{table}

The \DA{} matching after $w_1$ submits $\>'_{w_1} \coloneqq m_4 \> m_5 \> m_1 \> m_2 \> m_3$ is marked by ``$*$''. Notice that the matching is with-regret for $w_1$ as she is assigned $m_4$, who she prefers less than her \DA{} match $m_3$. A polynomial number of checks can be made to verify that there is no inconspicuous list that can achieve the same matching.

Although $\>'_{w_1}$ is not an inconspicuous version of $w_1$'s true list $\>_{w_1}$, it can be created from $\>_{w_1}$ just by promoting $m_4 \coloneqq \Prop(w, \>', 1)$, where $\>' \coloneqq \{\>_{-w_1}, \>'_{w_1}\}$, to the topmost position and $m_5 \coloneqq \Prop(w, \>' \coloneqq \{\>_{-w_1}, 2)$ to the second-most position, which is the same construction that was described in the proof of \Cref{prop:WomanListInconspicuous}.
\qed
\end{example}

\subsection{Proof of Lemma~\ref{lem:AccompliceMisreports_PairManipulation}}

Recall that $\widehat{\>}_m$ is the preference list obtained by promoting man $m$'s original match $\mu(m)$ to the top of his original list $\>_m$, and $S_m \coloneqq \{\widehat{\>}_m^{w'\uparrow} : w' \neq \mu(m)\} \cup \{\widehat{\>}_m\}$ is the set of preference lists obtained by individually pushing up each woman other than $\mu(m)$ to the top of $\widehat{\>}_m$ as well as the list $\widehat{\>}_m$ itself.

\AccompliceMisreportsPairManipulation*

\begin{proof}
To construct the desired list $\>''_m$, we will make use of the following three preference profiles:
\begin{itemize}
    \item $\>^w \coloneqq \{\>_{-\{m,w\}}, \>'_{w}, \>_m\}$, %
    \item $\widehat{\>} \coloneqq \{\>_{-\{m,w\}},  \>'_{w},\widehat{\>}_m\}$, and
    \item $\>' \coloneqq \{\>_{-\{m,w\}}, \>'_{w},\>'_m\}$.
\end{itemize}
Notice that the three profiles differ only in the preferences of man $m$. We will denote the corresponding \DA{} matchings for these profiles by $\mu^w \coloneqq \DA(\>^w)$, $\widehat{\mu} \coloneqq \DA(\widehat{\>})$ and $\mu' \coloneqq \DA(\>')$.

We will start by observing that $\mu^w(m) \succeq_{m} \mu(m)$. Indeed, if $\mu(m) \>_{m} \mu^w(m)$, then man $m$ can switch from the profile $\>^w$ (where his list is the same as his true list $\>_m$) to the profile $\>'$ (where his list is $\>'_m$) and improve his \DA{} match with respect to $\>^w$; recall that it is given that $\mu'(m) = \mu(m)$. This, however, contradicts the strategyproofness of \DA{} algorithm for the proposing side. Thus, we must have $\mu^w(m) \succeq_{m} \mu(m)$.

The rest of the proof is divided into two cases, depending on whether $\mu^w(m)$ is the same as $\mu(m)$.

\noindent
\newline
\textbf{Case I}: When $\mu^w(m) \neq \mu(m)$.\\

Recall from \Cref{prop:Permuting_Falsified_Lists} that any misreport by man $m$ can be simulated via push up and push down operations in the true list. %
Therefore, we can assume without loss of generality that the list $\>'_m$ is derived from $\>_m$ by pushing up a set $Y' \subseteq W$ of women who are ranked below $\mu^w(m)$ and pushing down a set $X \subseteq W$ of women who are ranked above $\mu^w(m)$.

Further, let us write man $m$'s list under the profile $\>^w$ as $\>_m \coloneqq (\>^L_m,\mu^w(m),\>^R_m)$ where $\>^L_m$ and $\>^R_m$ denote the parts of $m$'s list $\>_m$ above and below his \DA{} partner $\mu^w(m)$, respectively. Based on this notation, we let $\bar{X} \coloneqq \, {\>^L_m \setminus X}$ denote the set of women other than $X$ who are ranked above $\mu^w(m)$ in $\>_m$, and analogously write $\bar{Y} \coloneqq \, {\>^R_m \setminus Y'}$ for the set of women other than $Y'$ who are ranked below $\mu(m)$ in $\>_m$. By \Cref{prop:Permuting_Falsified_Lists}, permuting the order of agents above (or below) the \DA{} partner in man $m$'s list does not affect the \DA{} matching. Therefore, we can represent the lists $\>_m$ and $\>'_m$ as follows:
$$\>_m : X \> \bar{X} \> \boxed{\mu^w(m)} \> Y' \> \bar{Y}, \text{ and }$$
$$\>'_m : \bar{X} \> Y' \> \mu^w(m) \> X \> \bar{Y},$$
where the boxed entry is the \DA{} partner of $m$ under $\>_m$.

By the assumption in Case I, we have $\mu^w(m) \>_m \mu(m)$ and thus $\mu(m) \in Y' \cup \bar{Y}$. We are also given that $\mu'(m) = \mu(m)$. Therefore, the transition from the profile $\>^w$ to the profile $\>'$, where only $m$ misreports, must be \emph{with-regret} for $m$.

Consider another profile $(\>^w)^{X \downarrow}$ obtained from $\>^w$ by pushing down the set $X$ in man $m$'s list; that is,
$$\>_m^{X \downarrow} : \bar{X} \> \boxed{\mu^w(m)} \> X \> Y' \> \bar{Y}.$$
Notice that by \Cref{prop:PushDown}, man $m$ continues to be matched with $\mu^w(m)$ under $(\>^w)^{X \downarrow}$ and therefore, by \Cref{prop:Permuting_Falsified_Lists}, we can assume that the sets $X$, $Y$, and $Y'$ can be ordered as shown above.

Observe that the list $\>'_m$ can be obtained from $\>_m^{X \downarrow}$ by pushing up the set $Y'$. Since $\mu(m) \in Y' \cup \bar{Y}$, the transition $(\>^w)^{X \downarrow} \rightarrow \, \>'$ constitutes a with-regret push up operation for man $m$. Then, by \Cref{lem:accomplice-match-after-regret-manipulation}, man $m$ must be matched with some agent in $Y'$ under $\>'$. We already know that $\mu'(m) = \mu(m)$, and therefore we have that $\mu(m) \in Y'$. Let $Y \coloneqq Y' \setminus \{\mu(m)\}$. The preference lists after this simplification are shown in \Cref{fig:AccompliceMisreports_PairManipulation_Case-I}. Notice that due to \Cref{prop:Permuting_Falsified_Lists}, we can assume that $\mu(m)$ is placed immediately below $\mu^w(m)$ in $\>_m$, and, as a result, immediately above it in $\>'_m$.

\begin{figure*}[th]
\centering
\begin{tikzcd}[cramped, column sep=tiny]
    \succ_m: & X & \bar{X} & \boxed{\mu^w(m)} & \mu(m) & Y & \bar{Y}\\
    \succ'_m: & \bar{X} & Y & \boxed{\mu(m)} & \mu^w(m) & X & \bar{Y}\\
    \succ'_m (\text{eqv.}): & \bar{X} & Y & \boxed{\mu(m)} & X & \mu^w(m) & \bar{Y}\\
    \widehat{\succ}_m: & \boxed{\mu(m)} & \mu^w(m) & X & \bar{X} &  Y  & \bar{Y}\\
\end{tikzcd}
\caption{An illustration of man $m$'s preference lists under the profiles $\succ^w$, $\succ'$, $\succ' (\text{eqv.})$, and $\widehat{\succ}$ in the proof of \Cref{lem:AccompliceMisreports_PairManipulation} for Case I (when $\mu^w(m) \neq \mu(m)$). The boxed agents correspond to \DA{} partners of man $m$ under the respective profiles. Recall that it follows from \Cref{prop:Permuting_Falsified_Lists} that the order of women below these partners is not crucial.}
\label{fig:AccompliceMisreports_PairManipulation_Case-I}
\end{figure*}

Let us now consider the list $\widehat{\>}_m$ obtained by promoting man $m$'s original match $\mu(m)$ to the top of his original list $\>_m$, i.e.,
$$\widehat{\>}_m : \mu(m) \> X \> \bar{X} \> \mu^w(m) \> Y \> \bar{Y}.$$
Notice that $\widehat{\>}_m \in S_m$. We will argue that $m$ matches with $\mu(m)$ under $\widehat{\>}_m$. Note that because of \Cref{prop:Permuting_Falsified_Lists}, the list $\>'_m$ is equivalent to the following list:
$$\succ'_m (\text{eqv.}): \bar{X} \> Y \> \boxed{\mu(m)} \> X \> \mu^w(m) \> \bar{Y}.$$
Then, the list $\widehat{\>}_m$ can be obtained by pushing down the sets $\bar{X}$ and $Y$ in the list $\>'_m (\text{eqv.})$ and placing these sets between $X$ and $\bar{Y}$. Therefore, by \Cref{prop:PushDown}, $m$ matches with $\mu(m)$ under $\widehat{\>}$ (see \Cref{fig:AccompliceMisreports_PairManipulation_Case-I}).

Since $\widehat{\mu}(m) = \mu(m)$, the list $\>'_m (\text{eqv.})$ can be created from the list $\widehat{\>}_m$ via a no-regret push up of the set $\bar{X} \cup Y$. If $\bar{X} \cup Y = \emptyset$, then $\widehat{\>}_m \, = \, \>'_m$ and the choice $\>''_m \coloneqq \widehat{\>}_m$ proves the lemma. Otherwise, if $\bar{X} \cup Y \neq \emptyset$, then it follows from \Cref{prop:PushUpOneWoman} that the same match for $w$ that is created by pushing up $\bar{X} \cup Y$ in $\widehat{\>}_m$, namely $\mu'(w) = \mu(w)$, can be created by pushing up exactly one woman in $\bar{X} \cup Y$, say $w'$. Additionally, from \Cref{lem:SingleAgentPushUpNoRegret}, it follows that $m$ would still match with $\widehat{\mu}_m$ (hence $\mu(m)$) after pushing up $w'$ in $\widehat{\>}_m$. Since $\widehat{\>}^{w' \uparrow}_m \, \in S_m$, we have that the choice $\>''_m \coloneqq \widehat{\>}^{w' \uparrow}_m$ proves the lemma.

\noindent
\newline
\textbf{Case II}: When $\mu^w(m) = \mu(m)$.\\

Following a similar line of reasoning as in Case I, we can once again say that the list $\>'_m$ is derived from $\>_m$ by pushing up the set $Y$ and pushing down the set $X$. Further, the list $\widehat{\>}_m$, which is obtained by pushing up $\mu(m)$ to the top position in $\>_m$, can be equivalently derived from $\>'_m$ by pushing down $\bar{X} \cup Y$. Thus, by \Cref{prop:PushDown}, we once again have that $\widehat{\mu}(m) = \mu(m)$ (see \Cref{fig:AccompliceMisreports_PairManipulation_Case-II}).

\begin{figure*}[th]
\centering
\begin{tikzcd}[cramped, column sep=tiny]
    \succ_m: & X & \bar{X} & \boxed{\mu(m)} & Y & \bar{Y}\\
    \succ'_m: & \bar{X} & Y & \boxed{\mu(m)} & X & \bar{Y}\\
    \widehat{\succ}_m: & \boxed{\mu(m)} & X & \bar{X} &  Y  & \bar{Y}\\
\end{tikzcd}
\caption{An illustration of man $m$'s preference lists under the profiles $\succ^w$, $\succ'$, and $\widehat{\succ}$ in the proof of \Cref{lem:AccompliceMisreports_PairManipulation} for Case II (when $\mu^w(m) = \mu(m)$). The boxed agents correspond to \DA{} partners of man $m$ under the respective profiles. Recall that it follows from \Cref{prop:Permuting_Falsified_Lists} that the order of women below these partners is not crucial.
}
\label{fig:AccompliceMisreports_PairManipulation_Case-II}
\end{figure*}

If $\bar{X} \cup Y = \emptyset$, then $\widehat{\>}_m \, = \, \>'_m$ and the choice $\>''_m \coloneqq \widehat{\>}_m$ proves the lemma since $\widehat{\>}_m \in S_m$. Otherwise, if $\bar{X} \cup Y \neq \emptyset$, then it follows from \Cref{prop:PushUpOneWoman} that the same match for $w$ that is created by pushing up $\bar{X} \cup Y$ in $\widehat{\>}_m$ to obtain $\>'_,$, namely $\mu'(w) = \mu(w)$, can be created by pushing up exactly one woman in $\bar{X} \cup Y$, say $w'$. Additionally, from \Cref{lem:SingleAgentPushUpNoRegret}, it follows that $m$ would still match with $\widehat{\mu}_m$ (hence $\mu(m)$) after pushing up $w'$ in $\widehat{\>}_m$. Since $\widehat{\>}^{w' \uparrow}_m \, \in S_m$, we have that the choice $\>''_m \coloneqq \widehat{\>}^{w' \uparrow}_m$ proves the lemma.
\end{proof}

\subsection{Proof of Theorem~\ref{thm:PairManipulation}}
\label{Proof_PairManipulation}

\paragraph{Why inconspicuousness is not straightforward to establish.}

Recall that optimal self manipulation and optimal accomplice manipulation are both known to be \emph{inconspicuous}, i.e., very similar to the true list~\citep{VG17manipulating,HUV21accomplice}. Prompted by the failure of individually optimal strategies in %
\Cref{eg:concatenate_pair}, we ask whether an inconspicuous strategy for the man could be combined with an inconspicuous strategy for the woman to yield an optimal pair manipulation.

More concretely, consider a manipulating pair $(m,w)$. %
Let $\>^*_m$ and $\>^*_w$ denote the respective lists of $m$ and $w$ under an optimal pair manipulation, and let $\>^* \coloneqq (\>_{-\{m,w\}},\>^*_m,\>^*_w)$ denote the corresponding preference profile.

\begin{figure}[H]
\centering
\begin{tikzpicture}
    \node[] at (0, 0) (l) {$\>$};
    \node[] at (4, 0.6)   (t) {$\>^m = \{\>_{-m},\>^*_m\}$};
    \node[] at (8, 0)   (r) {$\>^*$};
    \node[] at (4,-0.6)   (b) {$\>^w= \{\>_{-w},\>^*_w\}$};
    \draw[->] (l)--(t) node [midway,above=2pt,sloped,fill=white] {{$\>_m \rightarrow \>^*_m$}};
    \draw[->] (l)--(b) node [midway,below=2pt,sloped,fill=white] {{$\>_w \rightarrow \>^*_w$}};
    \draw[->] (t)--(r) node [midway,above=2pt,sloped,fill=white] {{$\>_w \rightarrow \>^*_w$}};
    \draw[->] (b)--(r) node [midway,below=2pt,sloped,fill=white] {{$\>_m \rightarrow \>^*_m$}};
\end{tikzpicture}
\caption{Preference profiles under pair manipulation.}
\label{fig:PairManipulation-Appendix}
\end{figure}
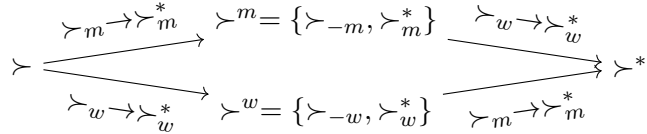

Since it is easier to think about single-agent misreports, let us break down the transition from the true profile $\>$ to the pair manipulation profile $\>^*$ in two steps: First, swap $m$'s list in $\>$ to obtain the intermediate profile $\>^m \coloneqq \{\>_{-m},\>^*_m\}$, and then swap $w$'s list in $\>^m$ to get $\>^*$; see top path in \Cref{fig:PairManipulation-Appendix}. Alternatively, by reversing the order of swaps and going via the profile $\>^w \coloneqq \{\>_{-w},\>^*_w\}$, we obtain the bottom path in \Cref{fig:PairManipulation-Appendix}.

Let us first consider the top path in \Cref{fig:PairManipulation-Appendix}, and specifically the transition $\>^m \rightarrow \, \>^*$ wherein woman $w$ is the only misreporting agent. Thus, the optimal pair manipulation strategy must involve $w$ performing an optimal self manipulation with respect to the profile $\>^m$ \emph{subject to $m$ being matched with $\mu(m)$}. In other words, the strategy $\>^*_w$ is an optimal solution to a \emph{constrained self manipulation} problem. While the unconstrained self manipulation problem has been explored previously~\citep{TS01gale,VG17manipulating}, we are not aware of results on the structure of optimal solution of the aforementioned constrained problem.\footnote{The works of \citet{KM09successful,KM10cheating} are relevant here. They study the computational complexity of determining whether, given the preferences of men and non-strategic women as well as a partial matching in which the strategic women are matched, there is a set of misreports for the strategic women that could induce a \DA{} matching that extends the given partial matching.} In particular, we do not know if the optimal constrained self manipulation problem also admits an inconspicuous solution. Recall that a with-regret misreport by a woman is \textit{not} guaranteed to be inconspicuous (see \Cref{eg:with-regret-woman-manipulation-not-inconspicuous}). %

Now consider the transition $\>^w \rightarrow \, \>^*$ in the bottom path in \Cref{fig:PairManipulation-Appendix}. For this step, man $m$ is the only misreporting agent. Note that $m$'s list under the profile $\>^w$ is his true list $\>_m$. Further, under $\>^w$, man $m$ must be matched with either $\mu(m)$ or some agent he ranks above $\mu(m)$; note that $m$'s partner under $\>^w$ cannot be someone he ranks below $\mu(m)$ since strategyproofness of \DA{} for proposing side would forbid $m$ from being matched with $\mu(m)$ under $\>^*$.

If the optimal pair strategy $\>^*_w$ of woman $w$ is such that her optimal partner under $\>^*$, namely $\mu^*(w)$, is ranked above her partner under $\>^w$, namely $\mu^w(w)$, then the transition $\>^w \rightarrow \, \>^*$ is an optimal no-regret accomplice manipulation for man $m$, which is known to be inconspicuous~\citep{HUV21accomplice}. On the other hand, if $\mu^w(m) \neq \mu(m)$, then the same transition is an optimal solution to a \emph{constrained with-regret manipulation problem}. Whether such strategies are inconspicuous remains an open problem.

We will now present the proof of \Cref{thm:PairManipulation}.

\PairManipulation*

\begin{proof}
(sketch) \Cref{alg:PairManipulation} can be used to compute an optimal pair manipulation. The \emph{correctness} of the algorithm follows from the discussion on how to compute optimal joint strategies in the main text. Here, we will argue that \Cref{alg:PairManipulation} runs in $\O(n^5)$ time.

The sets $S_m$ and $S_w$ are of size $\O(n)$ and $\O(n^2)$, respectively. The algorithm checks all possible pairs $(\>'_m,\>'_w) \in S_m \times S_w$. The total number of checks is in $\O(n^3)$, and for each check, running the \DA{} algorithm takes $\O(n^2)$ time.
\end{proof}

\subsection{Stability-Preserving Pair Manipulation is Suboptimal~(\Cref{rem:Unstable_Pair})}

In \Cref{eg:pair_manipulation_unstable}, we show that the stability of the manipulated outcome (with respect to true preferences) can be in conflict with optimality of manipulation. Specifically, our example demonstrates that the matched partner of the woman under \emph{unrestricted} pair manipulation (i.e., when the manipulated matching is not required to be stable with respect to true preferences) can be strictly better than under optimal \emph{stability-preserving} pair manipulation.

\begin{example}[\textbf{Stable pair manipulation strategy is strictly suboptimal}] \label{eg:pair_manipulation_unstable}
Consider the following preference profile where the \DA{} outcome is underlined.

\begin{table}[H]
    \centering
    \begin{tabularx}{0.7\linewidth}{XXXXXXXXXXXXXXX}
            ${m_1}\colon$ & $\underline{w_1}$ & $w_4^*$ & $w_5$ & $w_3$ & $w_2$ && $\boldsymbol{\textcolor{blue}{{w_1}}}\colon$ & $m_2^*$ & $m_4$ & $\underline{m_1}$ & $m_3$ & $m_5$\\
            ${m_2}\colon$ & $\underline{w_4}$ & $w_3$ & $w_1^*$ & $w_5$ & $w_2$ && ${w_2}\colon$ & $\underline{m_5}$ & $m_2$ & $m_4$ & $m_3^*$ & $m_1$\\
            ${m_3}\colon$ & $w_1$ & $w_2^*$ & $\underline{w_5}$ & $w_4$ & $w_3$ && ${w_3}\colon$ & $m_5$ & $m_1$ & $\underline{m_4^*}$ & $m_2$ & $m_3$\\
            $\boldsymbol{\textcolor{blue}{{m_4}}}\colon$ & $w_5$ & $\underline{w_3^*}$ & $w_4$ & $w_1$ & $w_2$ && ${w_4}\colon$ & $m_3$ & $m_5$ & $m_4$ & $m_1^*$ & $\underline{m_2}$\\
            ${m_5}\colon$ & $w_5^*$ & $\underline{w_2}$ & $w_3$ & $w_1$ & $w_4$ && ${w_5}\colon$ & $m_1$ & $\underline{m_3}$ & $m_4$ & $m_5^*$ & $m_2$
        \end{tabularx}
\end{table}

Suppose the manipulating pair is $(m_4, w_1)$. The \DA{} matching after $m_4$ and $w_1$ submit their respective optimal pair manipulated lists $\>'_{m_4} \coloneqq w_3 \> w_5 \> w_4 \> w_1 \> w_2$ and $\>'_{w_1} \coloneqq m_4 \> m_2 \> m_3 \> m_1 \> m_5$ is marked by ``$*$''. Notice that the beneficiary $w_1$ is matched with her top choice partner $m_2$, which means that the pair manipulation is indeed optimal.

Although $m_4$ does not incur regret, the manipulated matching admits a blocking pair $(m_4, w_5)$ with respect to the true preferences. In this instance, there is no stability-preserving pair manipulation that is strictly better than truthful reporting.\footnote{This was checked using an exhaustive search of all possible manipulations by $m_4$ and $w_1$.}\qed
\end{example}

Despite the negative result that we have just presented, we are able to show that any pair that blocks a ``two-for-one'' manipulation must contain the accomplice or the strategic woman (\Cref{lem:m-w-Stability}). This property is similar to the concept of \emph{$m$-stable} matchings~\citep{BH19partners,HUV21accomplice}, which are defined as matchings such that any pair that blocks them with respect to true preferences must contain the accomplice $m$.

\begin{restatable}{lemma}{MWStability}
Given a preference profile $\>$ and a manipulating pair $(m, w)$, let $\>'$
denote the profile after $m$ and $w$ misreport their preferences. For any matching $\mu' \in S_{\>'}$ and for any pair $(m', w')$ that blocks $\mu'$ with respect to $\>$, it must be that $m' = m$ or $w' = w$. 
\label{lem:m-w-Stability}
\end{restatable}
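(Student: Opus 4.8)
The plan is a short proof by contradiction that exploits the fact that in the transition from $\>$ to $\>'$ only the two agents $m$ and $w$ alter their reported lists. Suppose, for contradiction, that there is a matching $\mu' \in S_{\>'}$ and a pair $(m',w')$ that blocks $\mu'$ with respect to the true profile $\>$, yet $m' \neq m$ and $w' \neq w$.

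The first step is to record the (trivial but essential) observation that every agent other than $m$ and $w$ is truthful, so in particular $\>'_{m'} \, = \, \>_{m'}$ and $\>'_{w'} \, = \, \>_{w'}$. The second step is to unpack the definition of a blocking pair: since $(m',w')$ blocks $\mu'$ with respect to $\>$, we have $w' \>_{m'} \mu'(m')$ and $m' \>_{w'} \mu'(w')$. Substituting the equalities from the first step gives $w' \>'_{m'} \mu'(m')$ and $m' \>'_{w'} \mu'(w')$, i.e., $(m',w')$ blocks $\mu'$ with respect to $\>'$ as well. This contradicts $\mu' \in S_{\>'}$, and the lemma follows by contraposition.

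I do not anticipate any real technical obstacle here: the argument uses nothing beyond the invariance of a truthful agent's preference list under misreports by others, and it is worth noting explicitly in the write-up that neither the no-regret assumption on $m$ nor optimality of the joint misreport is needed — the conclusion holds for an \emph{arbitrary} joint deviation by the pair $(m,w)$ and for \emph{any} matching stable with respect to the misreported profile. This specializes the analogous $m$-stability facts for one-sided accomplice manipulation~\citep{BH19partners,HUV21accomplice}, now with the blocking pair forced to meet the enlarged manipulating set $\{m,w\}$ rather than $\{m\}$.
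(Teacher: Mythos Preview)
Your proposal is correct and follows essentially the same argument as the paper's own proof: assume a blocking pair $(m',w')$ with $m'\neq m$ and $w'\neq w$, use that both agents are truthful so their lists coincide under $\>$ and $\>'$, and conclude that $(m',w')$ would also block $\mu'$ under $\>'$, contradicting $\mu'\in S_{\>'}$. Your added remark that neither the no-regret assumption nor optimality is needed is accurate and consistent with the paper's treatment.
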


\begin{proof}
Suppose, for contradiction, there exists a man-woman pair $(m', w')$ that blocks $\mu'$ with respect to $\>$ (i.e., $w' \>_{m'} \mu'(m')$ and $m' \>_{w'} \mu'(w')$) such that $m' \neq m$ and $w' \neq w$. Since $m$ and $w$ are the only agents whose preferences differ between $\>$ and $\>'$, we have that $\>_{m'} \, = \, \>'_{m'}$ and $\>_{w'} \, = \, \>'_{w'}$. Thus, $w' \>'_{m'} \mu'(m')$ and $m' \>'_{w'} \mu'(w')$, implying that the pair $(m',w')$ blocks $\mu'$ with respect to $\>'$, which contradicts the assumption that $\mu' \in S_{\>'}$. Therefore, it must be that $m' = m$ or $w' = w$.
\end{proof}

We note that although there do indeed exist instances where the accomplice is in a pair that blocks an optimal ``two-for-one'' manipulated matching (as shown in \Cref{eg:pair_manipulation_unstable}), we have not yet found any instances where a pair blocking the optimal ``two-for-one'' manipulated matching contains the strategic woman. We leave the question of whether optimal pair manipulations are strictly $m$-stable as an open problem.

\section{Omitted Material from \Cref{sec:one-for-all}}
\subsection{Proof of Proposition~\ref{lem:CombiningPushUpPushDown_AllWomen}}

\CombiningPushUpPushDownAllWomen*

\begin{proof}
Our proof %
closely follows that of Lemma 2 in \citet{HUV21accomplice}, which, under the same set of assumptions, showed that if $\mu''(w) \>_w \mu(w)$ for any fixed woman $w$, then $\mu'(w) \succeq_w \mu''(w)$. For completeness, we recall their argument below with the necessary modifications to establish \Cref{lem:CombiningPushUpPushDown_AllWomen}.

We will use case analysis based on whether or not $\mu'(m) = \mu(m)$, where $\mu \coloneqq \DA(\>)$.\\

\textbf{\underline{Case I}} (when $\mu'(m) = \mu(m)$): The list $\>''_m$ can be considered as being derived from $\>'_m$ via a push down operation on the set $Y$ (see \Cref{fig:combining-push-up-push-down}). From \Cref{lem:PushDown_Worse_For_Women}, we know that a push down operation is weakly worse for all women; thus, in particular, we get $\mu' \succeq_W \mu''$, as desired. Note that the relative ordering of $X$ and $Y$ above $\mu'(m)$ in the list $\>'_m$ is not important in light of \Cref{prop:Permuting_Falsified_Lists}.

\begin{figure}[H]
\centering
\begin{tikzcd}[cramped, column sep=tiny]
    \succ_m: & \cdots & Y & \cdots & \mu(m) & \cdots & X & \cdots\\
    \succ'_m: & \cdots & X & \cdots & Y & \cdots & \mu(m) & \cdots\\
    \succ''_m: & \cdots & X & \cdots & \mu(m) & \cdots & Y & \cdots
\end{tikzcd}
\caption{An illustration of man $m$'s preference lists under the profiles $\succ$,  %
$\succ'$, and $\succ''$ in the proof of \Cref{lem:CombiningPushUpPushDown_AllWomen}. Note that the relative positions of the women in sets $X$ and $Y$ are unimportant due to \Cref{prop:Permuting_Falsified_Lists}.}
\label{fig:combining-push-up-push-down}
\end{figure}

\textbf{\underline{Case II}} (when $\mu'(m) \neq \mu(m)$):
Suppose $\mu'(m) \in X$. Then, the list $\>''_m$ can be considered as being derived from $\>'_m$ via a permutation of the women below $\mu'(m)$ (see \Cref{fig:combining-push-up-push-down}). By \Cref{prop:Permuting_Falsified_Lists}, this implies that $\mu' = \mu''$, as desired.

Therefore, for the remainder of the proof, let us assume that $\mu'(m) \notin X$. Since $\>'_m$ is derived from $\>_m$ via a push up operation on the set $X$, and since $\mu'(m) \neq \mu(m)$ by assumption, we have that $\mu(m) \>_m \mu'(m)$. By \Cref{prop:Permuting_Falsified_Lists}, we can assume, without loss of generality, that $\mu'(m)$ is positioned immediately below $\mu(m)$ in the list $\>_m$. By construction, the same property also holds for the lists $\>'_m$ and $\>''_m$. Thus, $\>''_m$ can be considered as being obtained from $\>'_m$ via a push down operation on the set $Y$ (note that this operation is defined with respect to $\mu'(m)$). By \Cref{lem:PushDown_Worse_For_Women}, we have $\mu' \succeq_W \mu''$, as desired.
\end{proof}

\subsection{Proof of Lemma~\ref{lem:NoRegretSubset}}

\NoRegretSubset*
\begin{proof}
Suppose, for contradiction, that $\>^Y$ is a with-regret profile. Then, from \Cref{lem:accomplice-match-after-regret-manipulation}, we have that $m$ matches with some $w_y \in Y$ under $\>^Y$, i.e., $\mu^Y(m) = w_y$. By definition of the no-regret set $W^\textsc{NR}$, we know that the profile obtained by promoting only $w_y$, namely $\>^y \coloneqq \{ \succ_{-m}, \succ_m^{y\uparrow} \}$, is a no-regret profile, i.e., $\mu^y(m) = \mu(m)$, where $\mu^y \coloneqq \DA(\>^y)$. 

Now consider the profile $\>^y$ wherein $m$ is matched with $\mu(m)$. If $m$ were to submit the preference list $\>^Y_m$ instead (obtained by pushing up the set $Y \setminus \{w_y\}$ in $\>^y_m$), then we will obtain the profile $\>^Y$ in which $m$ is matched with $w_y$. This, however, implies that $m$ is unilaterally able to improve its match from $\mu(m)$ to $w_y$ (with respect to the new list $\>^y_m$ wherein $w_y \>^y_m \mu(m)$), which contradicts the strategyproofness of the \DA{} algorithm. Hence, $\>^Y$ must be a no-regret profile.
\end{proof}

\subsection{Proof of Lemma~\ref{lem:Combining_WithRegret_NoRegret_PushUp}}

\CombiningWithRegretNoRegretPushUp*

\begin{proof}
Suppose, for contradiction, that $\>^Y$ is a no-regret profile.
\Cref{lem:SingleAgentPushUpNoRegret} shows that the accomplice $m$ does not incur regret under $\>^y \coloneqq \{ \succ_{-m}, \succ_m^{y\uparrow} \}$ for all $w_y \in Y$. However, it is given that $\>^y$ is a with-regret profile, thus posing a contradiction. 
\end{proof}

\subsection{Proof of Corollary~\ref{cor:PolyAlgorithm_OptimalStrategy}}

\PolyAlgorithmOptimalStrategy*
\begin{proof}
(sketch) The algorithm promotes each woman who is below $\mu(m)$ in the accomplice $m$’s true preference list to some position above $\mu(m)$ and checks the \DA{} outcome.
Recall that \Cref{prop:Permuting_Falsified_Lists} gives us the flexibility to place the promoted woman anywhere above $\mu(m)$ in the accomplice's list.
The set of women who, when individually pushed up, do not cause regret are returned. The total number of checks is $\O(n)$, and for each check, running the \DA{} algorithm takes $\O(n^2)$ time.
\end{proof}

\subsection{Computing the Smallest-Size Optimal Strategy}
\label{sec:Smallest-size_One-for-all}

Although pushing up the entire no-regret set $W^\textsc{NR}$ is optimal~(\Cref{thm:OptimalStrategy_MultipleWomen}), the accomplice might want to displace as few women as possible in order to remain close to his true preference list. Thus, it is of interest to find an optimal push up set of the \emph{smallest size}.

Formally, consider any set $Y \subseteq W^\textsc{NR}$. Let $\mu^Y$ and $\mu^\textsc{NR}$ denote the \DA{} matchings obtained by pushing up the sets $Y$ and $W^\textsc{NR}$ in the true list of the accomplice, respectively. We call $Y$ a \emph{minimum} no-regret push up set if $\mu^Y = \mu^{\textsc{NR}}$ and there exists no other set $Z \subseteq W^\textsc{NR}$ such that $\mu^Z = \mu^{\textsc{NR}}$ and $|Z| < |Y|$. In other words, $Y$ is the smallest set of women that, when pushed up in $\>^m$, results in an optimal no-regret strategy for the accomplice. Similarly, we call $Y \subseteq W^\textsc{NR}$ a \emph{minimal} no-regret push up set if $\mu^Y = \mu^{\textsc{NR}}$ and there exists no other set $Z$ such that $\mu^Z = \mu^{\textsc{NR}}$ and $Z \subset Y$. 

A minimum no-regret push up set is clearly also minimal. Our next %
result %
(\Cref{thm:No-regret_Minimal_Is_Minimum}) establishes a converse.

\begin{restatable}{theorem}{NoRegretMinimalIsMinimum}
Let $Y \subseteq W^\textsc{NR}$ be a minimal no-regret push up set. Then, $Y$ is also a minimum no-regret push up set.
\label{thm:No-regret_Minimal_Is_Minimum}
\end{restatable}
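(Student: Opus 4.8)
The plan is to recast the statement in the lattice of stable matchings and read it off from the structure theory of distributive lattices. Throughout write $\hat\mu := \mu^{\textsc{NR}}$, and for $S\subseteq W^\textsc{NR}$ let $\mu^{S}$ be the \DA{} matching after pushing up $S$ in the accomplice's list (so $\mu^{\{w'\}}$, abbreviated $\mu^{w'}$, is the single push up of $w'$). The first step is to prove that $\mu^{S}=\bigvee_{w'\in S}\mu^{w'}$, the women‑side join in the lattice $(S_\succ,\succeq_W)$. One inequality is monotonicity: for $w'\in S$, both $\{\succ_{-m},\succ_m^{w'\uparrow}\}$ and $\{\succ_{-m},\succ_m^{S\uparrow}\}$ are no‑regret by \Cref{lem:NoRegretSubset}, so $\mu^{S}$ is obtained from $\mu^{w'}$ by a further no‑regret push up and \Cref{prop:PushUp} gives $\mu^{S}\succeq_W\mu^{w'}$; hence $\mu^{S}\succeq_W\bigvee_{w'\in S}\mu^{w'}$. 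For the reverse inequality, \Cref{prop:PushUpOneWoman} (applied with $X=S$) gives, for each woman $w$, some $x\in S$ with $\mu^{S}(w)=\mu^{x}(w)\preceq_w\bigl(\bigvee_{w'\in S}\mu^{w'}\bigr)(w)$; combining the two bounds pointwise forces equality. Consequently, $S\subseteq W^\textsc{NR}$ satisfies $\mu^{S}=\hat\mu$ \emph{if and only if} the family $\{\mu^{w'}:w'\in S\}$ join‑generates $\hat\mu$ inside $S_\succ$.

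The second step uses the classical fact that $(S_\succ,\succeq_W)$ is a finite distributive lattice (with the pointwise ``better for women'' join), together with the standard consequence that in a finite distributive lattice every element has a \emph{unique irredundant} representation as a join of join‑irreducibles. Let $\hat\mu=j_1\vee\cdots\vee j_k$ be this representation, so $J:=\{j_1,\dots,j_k\}$ is exactly the set of maximal join‑irreducible elements below $\hat\mu$. The crux of the whole argument is the following lemma: \emph{for every $w'\in W^\textsc{NR}$, the matching $\mu^{w'}$ is either $\mu=\DA(\succ)$ itself or a join‑irreducible element of $S_\succ$.} Granting this, distributivity yields: for any $\mathcal N\subseteq\{\mu^{w'}:w'\in W^\textsc{NR}\}$ we have $\bigvee\mathcal N=\hat\mu$ iff $J\subseteq\mathcal N$. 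Indeed, each $j_i$ is join‑irreducible and $j_i\preceq_W\bigvee\mathcal N$, so $j_i\preceq_W\nu$ for some $\nu\in\mathcal N$; but $\nu$ is join‑irreducible and $\nu\preceq_W\hat\mu$, hence $\nu\preceq_W j_{i'}$ for some $i'$, and maximality of $j_i,j_{i'}$ forces $j_i=\nu=j_{i'}\in\mathcal N$. (Conversely $\bigvee J=\hat\mu$.) In particular $J\subseteq\{\mu^{w'}:w'\in W^\textsc{NR}\}$.

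The third step is the conclusion. Let $Y\subseteq W^\textsc{NR}$ be any minimal no‑regret push up set with $\mu^{Y}=\hat\mu$ (such $Y$ exists by \Cref{thm:OptimalStrategy_MultipleWomen}). By the previous step, $\{\mu^{w'}:w'\in Y\}\supseteq J$. If the map $w'\mapsto\mu^{w'}$ restricted to $Y$ is not injective, say $\mu^{w'_1}=\mu^{w'_2}$ with $w'_1\neq w'_2\in Y$, then $\{\mu^{w'}:w'\in Y\setminus\{w'_1\}\}$ still equals $\{\mu^{w'}:w'\in Y\}\supseteq J$, so $\mu^{Y\setminus\{w'_1\}}=\hat\mu$, contradicting minimality. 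Hence the map is injective on $Y$, and therefore $|Y|=|\{\mu^{w'}:w'\in Y\}|$. If moreover $\mu^{w'_0}\notin J$ for some $w'_0\in Y$, then by injectivity $\{\mu^{w'}:w'\in Y\setminus\{w'_0\}\}=\{\mu^{w'}:w'\in Y\}\setminus\{\mu^{w'_0}\}\supseteq J$, again contradicting minimality. Thus $w'\mapsto\mu^{w'}$ is a bijection from $Y$ onto $J$, so $|Y|=|J|=k$. Since every minimal set has this same cardinality $k$, a minimal no‑regret push up set achieving $\hat\mu$ is automatically of minimum size, which is \Cref{thm:No-regret_Minimal_Is_Minimum}.

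The main obstacle is the crux lemma: join‑irreducibility of $\mu^{w'}$ for a single no‑regret push up. I would prove it by tracking the \DA{} run on $\{\succ_{-m},\succ_m^{w'\uparrow}\}$ against the run on $\succ$: the only difference is that $m$'s first proposal is redirected to $w'$, and since $m$ incurs no regret the induced sequence of proposals and rejections closes into a single cycle of reassignments; in lattice terms this cycle is one rotation, and eliminating it from the men‑optimal matching $\mu$ drags along exactly its predecessor rotations, so the set of rotations separating $\mu$ from $\mu^{w'}$ is a principal down‑set — equivalently $\mu^{w'}$ has a unique lower cover in $S_\succ$, i.e.\ is join‑irreducible. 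Everything else is pure lattice bookkeeping resting on \Cref{prop:PushUp}, \Cref{prop:PushUpOneWoman}, \Cref{lem:NoRegretSubset}, and the classical distributivity of $S_\succ$; the only other routine point to verify is the monotonicity $\mu^{B}\preceq_W\mu^{A}$ for $B\subseteq A\subseteq W^\textsc{NR}$, which follows from \Cref{prop:PushUp} applied incrementally.
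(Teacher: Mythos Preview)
Your route is genuinely different from the paper's. The paper never touches the lattice $S_\succ$ directly; instead it works with the sets $P'_x$ of \emph{new proposals} triggered by a single push up, proves that for any two $w_x,w_y\in W^{\textsc{NR}}$ the sets $P'_x,P'_y$ are either nested or disjoint (a laminar family), and then shows that for any minimal $Y$ the $P'_{y_i}$ are pairwise disjoint while for a minimum $Z$ each $P'_{z_k}$ lands inside some $P'_{y_\ell}$ and vice versa, forcing $|Y|=|Z|$. Your Step~1 (the join formula $\mu^{S}=\bigvee_{w'\in S}\mu^{w'}$) is the paper's \Cref{lem:PushUp_Meet} rewritten in lattice language, and your Step~3 is clean bookkeeping; the divergence is entirely in what replaces the paper's laminarity analysis.

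That replacement is your crux lemma, and that is where the proposal has a real gap. The sketch asserts that the chain of proposals induced by a single no-regret push up ``closes into a single cycle of reassignments; in lattice terms this cycle is one rotation.'' This is not correct as stated: the chain can revisit women and the symmetric difference $\mu^{w'}\setminus\mu$ can decompose into several disjoint $2$-cycles (for instance, a chain of the shape $m\to w_0\to m_1\to w_1\to m_2\to w_0\to m_2\to w_2\to m_3\to w_1\to m_1\to w_3\to m_4\to w_0$ already yields two $2$-cycles in the partner permutation). What you actually need, and what your parenthetical correctly identifies, is the weaker statement that the down-set of eliminated rotations is \emph{principal}. That may well be true---in the example above the two rotations are comparable in the rotation poset---but it is a nontrivial structural fact about how the chain interacts with rotation exposure, and you have not supplied an argument for it. In particular, nothing in \Cref{prop:PushUp}, \Cref{prop:PushUpOneWoman}, or \Cref{lem:NoRegretSubset} gives it for free; you would have to argue, roughly, that any rotation maximal in the down-set must contain the pair $(\mu^{w'}(w'),w')$ produced by the closing proposal of the chain, and that at most one rotation can do so. If you can prove this cleanly, your argument is more conceptual than the paper's; as written, the proof is incomplete at exactly its load-bearing step.
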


\Cref{thm:No-regret_Minimal_Is_Minimum} naturally gives rise to a greedy algorithm for finding a minimum no-regret push up set (Algorithm \ref{alg:MinimizePushUpSet}). Starting with the no-regret set $W^\textsc{NR}$ as its current guess, the algorithm iteratively removes an agent from the current no-regret set and checks if the resulting matching changes. If the matching does not change, the guess is updated by removing the said agent and the check is repeated for the reduced set.

\begin{algorithm}[t]
\small
\caption{Computing a minimum no-regret push up set}
\label{alg:MinimizePushUpSet}
\begin{algorithmic}[1]
    \Require{Preference profile $\succ$, accomplice $m$}
    \Ensure{Minimum no-regret push up set $Y$}
    \State Compute $W^\textsc{NR} \gets \{w \in W \, : \, \>' \coloneqq \{ \>_{-m}, \>_m^{w \uparrow} \}$  is a no-regret profile\}
    \State Initialize $Y \gets W^\textsc{NR}$ and $\mu^* \gets \DA(\>_{-m}, \>_m^{Y \uparrow})$
    \For{each $w_y \in Y$\label{algline:forloop}}
        \State {$Y' \gets Y \setminus \{w_y\}$}
        \State {$\mu' \gets \DA(\>_{-m}, \>_m^{Y' \uparrow})$}
        \If{$\mu' = \mu^*$}
            \State {Restart for-loop in Line~\ref{algline:forloop} with $Y \gets Y'$}
        \EndIf
    \EndFor
    \State \Return {$Y$}
\end{algorithmic}
\end{algorithm}

\begin{restatable}{corollary}{MinNoRegretPushUpPolyTime}
A minimum optimal one-for-all strategy can be computed by Algorithm \ref{alg:MinimizePushUpSet} in $\O(n^4)$ time.
\label{cor:PolyAlgorithm_MinimumPushUpSet}
\end{restatable}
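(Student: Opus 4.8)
The plan is to split the argument into a correctness part—showing that Algorithm~\ref{alg:MinimizePushUpSet} returns a \emph{minimal} no-regret push up set and then invoking \Cref{thm:No-regret_Minimal_Is_Minimum} to upgrade minimal to \emph{minimum}—and a running-time part. For correctness I would first record two loop invariants. Since $Y$ is initialized to $W^\textsc{NR}$ and only ever shrinks, $Y \subseteq W^\textsc{NR}$ holds throughout; and since the guess is updated to $Y'$ only when $\mu' = \mu^*$, i.e. when $\mu^{Y'} = \mu^{\textsc{NR}}$, the invariant $\mu^{Y} = \mu^{\textsc{NR}}$ is preserved (here $\mu^{\textsc{NR}} = \DA(\>_{-m}, \>_m^{W^\textsc{NR}\uparrow})$). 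The algorithm halts only after a full pass of the for-loop triggers no restart, which means the returned set $Y$ satisfies $\mu^{Y \setminus \{w_y\}} \neq \mu^{\textsc{NR}}$ for \emph{every} $w_y \in Y$.

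The heart of the proof is a monotonicity lemma: for all $Z \subseteq Y' \subseteq Y \subseteq W^\textsc{NR}$ one has $\mu^{Y} \succeq_W \mu^{Y'} \succeq_W \mu^{Z}$. To see this I would use \Cref{lem:NoRegretSubset} to note that $\>^Z \coloneqq \{\>_{-m},\>_m^{Z\uparrow}\}$, $\>^{Y'}$, $\>^{Y}$ are all no-regret profiles in which $m$ stays matched to $\mu(m)$; then $\>^{Y'}$ is obtained from $\>^{Z}$ by pushing up the women of $Y' \setminus Z$ that lie below $\mu(m)$ (those above are untouched), and this is a no-regret push up because further pushing up $Y'\setminus Z$ in $\>_m^{Z\uparrow}$ produces exactly $\>_m^{Y'\uparrow}$ and $\mu^{Y'}(m)=\mu(m)$. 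Hence \Cref{prop:PushUp} gives $\mu^{Y'} \succeq_W \mu^{Z}$, and symmetrically $\mu^{Y} \succeq_W \mu^{Y'}$. Now, if the returned $Y$ were not minimal, there would be $Z \subsetneq Y$ with $\mu^{Z} = \mu^{\textsc{NR}}$; picking any $w_y \in Y \setminus Z$ and taking $Y' \coloneqq Y \setminus \{w_y\}$ (so $Z \subseteq Y' \subseteq Y$), the monotonicity lemma and the invariant $\mu^{Y} = \mu^{\textsc{NR}}$ would force $\mu^{Y'} = \mu^{\textsc{NR}}$, contradicting the termination condition. Thus $Y$ is a minimal no-regret push up set, so by \Cref{thm:No-regret_Minimal_Is_Minimum} it is a minimum one, and by \Cref{thm:OptimalStrategy_MultipleWomen} pushing up $Y$ (which yields $\mu^{Y}=\mu^{\textsc{NR}}$) is an optimal one-for-all strategy of the smallest possible size, since every optimal strategy pushes up a subset of $W^\textsc{NR}$ that attains $\mu^{\textsc{NR}}$.

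For the running time: computing $W^\textsc{NR}$ needs one $\DA$ run per woman, i.e. $\O(n)$ runs at $\O(n^2)$ each, for $\O(n^3)$ total, and computing $\mu^*$ costs another $\O(n^2)$. In the main loop, handling one agent $w_y$ costs a single $\DA$ run plus an $\O(n)$-time comparison of matchings, i.e. $\O(n^2)$. Since $|Y|$ strictly decreases at every restart and starts at $|W^\textsc{NR}| \le n$, the for-loop restarts at most $n$ times, and each pass examines at most $|Y| \le n$ agents before restarting or halting; so the loop performs $\O(n^2)$ iterations and runs in $\O(n^4)$ time, which dominates the preprocessing. I expect the main obstacle to be the monotonicity lemma—bridging "no single-element removal changes the matching" to genuine minimality of $Y$—which is exactly where \Cref{lem:NoRegretSubset} and \Cref{prop:PushUp} are needed; the complexity bound is routine bookkeeping.
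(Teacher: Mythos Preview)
Your proof is correct and follows the same overall approach as the paper's sketch: invoke \Cref{thm:No-regret_Minimal_Is_Minimum} to reduce the problem to finding a minimal no-regret push up set, argue that the greedy algorithm terminates at such a set, and tally the $\O(n^4)$ running time identically. Your argument is in fact more careful than the paper's, which does not explicitly close the gap between the algorithm's termination condition (no \emph{single-element} deletion preserves $\mu^{\textsc{NR}}$) and the definition of minimality (no \emph{proper subset} preserves $\mu^{\textsc{NR}}$); your monotonicity lemma via \Cref{lem:NoRegretSubset} and \Cref{prop:PushUp} is exactly the right way to bridge this.
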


The proofs of \Cref{thm:No-regret_Minimal_Is_Minimum} and \Cref{cor:PolyAlgorithm_MinimumPushUpSet} are presented in the forthcoming sections.

\begin{remark}
In \Cref{sec:Tight_Bounds_Appendix}, we show that the size of a minimum no-regret push up set is always at most $\lfloor \frac{n-1}{2} \rfloor$~(\Cref{prop:No-regret_MinimumSet_UpperBound}), and that this bound is tight for a family of instances~(\Cref{eg:TightBound}).
\label{rem:Push_Up_Set_Bounds}
\end{remark}

Since the optimal strategy only involves a no-regret push up operation, \Cref{prop:PushUp} implies that the resulting matching is stable with respect to the true preferences.

\begin{restatable}{corollary}{MinNoRegretPushUpStabilityPreserving}
An optimal one-for-all strategy for the accomplice is stability-preserving.
\label{cor:Min_NoRegret_PushUp_StabilityPreserving}
\end{restatable}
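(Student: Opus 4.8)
The plan is to chain together three facts already in hand: the explicit characterization of an optimal one-for-all strategy (\Cref{thm:OptimalStrategy_MultipleWomen}), the no-regret property of pushing up any subset of the no-regret set (\Cref{lem:NoRegretSubset}), and the stability of no-regret push-up matchings (\Cref{prop:PushUp}). The corollary should then fall out in a couple of lines.

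First I would fix notation: let $\>^\textsc{NR} \coloneqq \{\>_{-m},\>_m^{W^\textsc{NR}\uparrow}\}$ be the profile obtained by pushing up the entire no-regret set $W^\textsc{NR}$ in the accomplice $m$'s true list, and let $\mu^\textsc{NR} \coloneqq \DA(\>^\textsc{NR})$. By \Cref{thm:OptimalStrategy_MultipleWomen}, this is an optimal strategy. Next, applying \Cref{lem:NoRegretSubset} with $Y = W^\textsc{NR}$ (which is trivially a subset of itself) shows that $m$ does not incur regret under $\>^\textsc{NR}$, i.e., $\mu^\textsc{NR}(m) = \mu(m)$. Since $\>^\textsc{NR}$ arises from $\>$ by a push-up operation in $m$'s list that causes no regret, \Cref{prop:PushUp} immediately yields $\mu^\textsc{NR} \in S_{\>}$; that is, $\mu^\textsc{NR}$ is stable with respect to the true preferences, which is exactly the stability-preserving property.

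To cover \emph{every} optimal strategy rather than only the canonical push-up one, I would invoke the observation made when (Pareto) optimal strategies were defined: if an optimal strategy exists, then the Pareto frontier is a singleton and all optimal strategies induce the same matching. Since $\mu^\textsc{NR}$ is an optimal matching, any optimal strategy $\>'_m$ satisfies $\DA(\>_{-m},\>'_m) = \mu^\textsc{NR}\in S_{\>}$, and is therefore stability-preserving.

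I do not expect a real obstacle here; the result is essentially a corollary of \Cref{thm:OptimalStrategy_MultipleWomen} together with \Cref{prop:PushUp}. The only point that merits a sentence of care is making explicit the reduction from "all optimal strategies" to "the canonical strategy that pushes up $W^\textsc{NR}$", which is justified by the uniqueness of the optimal matching noted earlier in \Cref{sec:one-for-all}.
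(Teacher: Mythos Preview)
Your proposal is correct and follows essentially the same approach as the paper, which simply observes that the optimal strategy is a no-regret push up and then invokes \Cref{prop:PushUp} to conclude stability with respect to the true preferences. Your additional paragraph extending the conclusion to \emph{every} optimal strategy via the uniqueness of the optimal matching is a nice touch that the paper leaves implicit.
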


\subsection{Proof of Theorem~\ref{thm:No-regret_Minimal_Is_Minimum}}

The proof of Theorem~\ref{thm:No-regret_Minimal_Is_Minimum} relies on a number of intermediate results (\Cref{lem:PairwiseDisjointSets,lem:SingleProposalSetCover,lem:PushUp_Meet}). Before introducing these results, let us define some notation.

Given any profile $\>$, let $P_{\>}$ denotes the set of all proposals that occur in the execution of the \DA{} algorithm on the profile $\>$. Formally, for any man $m_i \in M$ and woman $w_j \in W$, the ordered pair $(m_i,w_j)$ belongs to the set $P_{\>}$ if $m_i$ proposes to $w_j$ during the execution of \DA{} algorithm on the profile~$\>$. Additionally, let $P_{\> \setminus \>'} \coloneqq P_{\>} \setminus P_{\>'}$ denote the set of proposals that occur under the profile $\>$ but not under~$\>'$.

Suppose $\>$ is a true preference profile and $\mu \coloneqq \DA(\>)$. For a fixed accomplice $m$ and any woman $w_x$ such that $\mu(m) \>_m w_x$, we use the shorthand notation $\>^x_m$, $\>^x$, and $\mu^x$ to denote the list $\>_m^{w_x \uparrow}$, the profile $\>^{w_x \uparrow} \coloneqq \{ \>_{-m}, \>^{w_x \uparrow}_m \}$, and the matching $\mu^{w_x \uparrow} \coloneqq \DA(\>^{w_x \uparrow})$, respectively. Similarly, given a set of women $Y \subseteq W$, we use $\>^Y_m$, $\>^Y$, and $\mu^Y$ to denote the list $\>_m^{Y \uparrow}$, the profile $\>^{Y \uparrow} \coloneqq \{ \>_{-m}, \>^{Y \uparrow}_m \}$, and the matching $\mu^{Y \uparrow} \coloneqq \DA(\>^{Y \uparrow})$, respectively.

Given a woman $w_x$ and a corresponding profile $\>^x$, let $P'_x \coloneqq \{ (m_i, w_j) \in P_{\>^x \setminus \>} : m_i \neq  m \}$ denote the set of proposals that occur under $\>^x$ but not $\>$ made by any man other than the accomplice. Similarly, given a set of women $Y$ and a corresponding profile $\>^Y$, let $P'_Y \coloneqq \{ (m_i, w_j) \in P_{\>^Y \setminus \>} : m_i \neq  m \}$.

\begin{restatable}{lemma}{PushUpMeet}
Let $Y \coloneqq \{ w_{y_1}, w_{y_2}, w_{y_3}, \dots \}$ be such that $Y \subseteq W^\textsc{NR}$. Then $\mu^Y = \mu^{y_1} \wedge \mu^{y_2} \wedge \mu^{y_3} \wedge \dots$.
\label{lem:PushUp_Meet}
\end{restatable}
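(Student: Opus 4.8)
The plan is to sandwich $\mu^Y$ between the matching $\nu \coloneqq \mu^{y_1} \wedge \mu^{y_2} \wedge \cdots$ from both sides in the women's order and then conclude equality from the strictness of the preferences. First I would record that everything in sight is stable with respect to $\>$: since $Y \subseteq W^\textsc{NR}$, \Cref{lem:NoRegretSubset} gives $\mu^Y(m) = \mu(m)$, and by definition of $W^\textsc{NR}$ each singleton push up is no-regret, so $\mu^{y_i}(m) = \mu(m)$ for every $i$; hence \Cref{prop:PushUp} puts $\mu^Y$ and all the $\mu^{y_i}$ into $S_{\>}$. Consequently $\nu$ is a well-defined stable matching, and by Conway's theorem on the stable-matching lattice it is computed pointwise on the women's side: for every woman $w$, $\nu(w)$ is the $\>_w$-best element of $\{\mu^{y_1}(w), \mu^{y_2}(w), \dots\}$.

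For the direction $\mu^Y \succeq_W \nu$ I would show the stronger fact that $\mu^Y \succeq_W \mu^{y_i}$ for every $i$. View $\>^Y$ as obtained from $\>^{y_i}$ by pushing up the women in $Y \setminus \{w_{y_i}\}$; these women still sit below $\mu(m) = \mu^{y_i}(m)$ in $\>^{y_i}_m$ (they were below $\mu(m)$ in $\>_m$ and only $w_{y_i}$ was moved above it), and by \Cref{prop:Permuting_Falsified_Lists} the list $(\>^{y_i}_m)^{(Y \setminus \{w_{y_i}\})\uparrow}$ coincides with $\>^Y_m$. Since $\mu^Y(m) = \mu(m) = \mu^{y_i}(m)$, this further push up is no-regret, so \Cref{prop:PushUp} applied with base profile $\>^{y_i}$ yields $\mu^Y \in S_{\>^{y_i}}$. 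As $\mu^{y_i} = \DA(\>^{y_i})$ is women-pessimal in $S_{\>^{y_i}}$ by \Cref{prop:DA_stable_MenOptimal_WomenPessimal}, we get $\mu^Y \succeq_W \mu^{y_i}$, i.e.\ $\mu^Y(w) \succeq_w \mu^{y_i}(w)$ for all $w$ and all $i$; in particular $\mu^Y(w) \succeq_w \nu(w)$ for every $w$.

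For the reverse direction $\nu \succeq_W \mu^Y$ I would invoke \Cref{prop:PushUpOneWoman} with $X = Y$: for every woman $w$ there exists some $w_{y_i} \in Y$ with $\mu^Y(w) = \mu^{y_i}(w)$. Hence $\mu^Y(w) \in \{\mu^{y_1}(w), \mu^{y_2}(w), \dots\}$, and therefore $\nu(w) \succeq_w \mu^Y(w)$ for every $w$. Combining the two directions, $\mu^Y(w) \succeq_w \nu(w)$ and $\nu(w) \succeq_w \mu^Y(w)$, so by strictness of preferences $\mu^Y(w) = \nu(w)$ for every $w$, which is exactly $\mu^Y = \mu^{y_1} \wedge \mu^{y_2} \wedge \cdots$.

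The main obstacle I anticipate is making the first direction airtight: one must check carefully that the transition $\>^{y_i} \to \>^Y$ really is a legitimate no-regret push-up operation so that \Cref{prop:PushUp} applies verbatim (that $Y \setminus \{w_{y_i}\}$ remains ranked below $m$'s \DA{}-partner and that $m$ still incurs no regret), with \Cref{prop:Permuting_Falsified_Lists} handling the irrelevance of the internal ordering of the pushed-up block. The reverse direction is essentially immediate once \Cref{prop:PushUpOneWoman} is quoted. A minor edge case worth a remark is $Y = \emptyset$ (or $|Y| = 1$), where the empty/trivial meet should be read as the men-optimal matching $\mu = \DA(\>)$, consistently with $\mu^{\emptyset} = \mu$.
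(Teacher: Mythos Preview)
Your proof is correct and follows essentially the same two-step structure as the paper: both obtain $\mu^Y \succeq_W \mu^{y_i}$ for every $i$ by viewing $\>^Y$ as a no-regret push up from $\>^{y_i}$ and invoking \Cref{prop:PushUp}. For the reverse direction you cite \Cref{prop:PushUpOneWoman} to get $\mu^Y(w) \in \{\mu^{y_1}(w),\mu^{y_2}(w),\dots\}$ directly, whereas the paper reaches the same conclusion via \Cref{prop:PushUpPropsalsContainment} and a contradiction about proposals; your choice is arguably the more direct of the two, but both are black-box citations from the same prior work, so the arguments are morally identical. One small remark: your appeal to Conway's lattice theorem to ensure $\nu$ is itself a stable matching is not needed here---once you establish $\mu^Y(w)=\nu(w)$ for every $w$, the fact that $\nu$ is a matching (indeed a stable one) is automatic.
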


\begin{restatable}{lemma}{SingleProposalSetCover}
Let $Y \coloneqq \{ w_{y_1}, w_{y_2}, w_{y_3}, \dots \}$ be such that $Y \subseteq W^\textsc{NR}$. If for some $w_x \in W^\textsc{NR}$, $P'_x \subseteq P'_{y_1} \cup P'_{y_2} \cup P'_{y_3} \cup \dots$, then $P'_x \subseteq P'_{y_k}$ for some $w_{y_k} \in Y$.
\label{lem:SingleProposalSetCover}
\end{restatable}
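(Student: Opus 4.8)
The plan is to read $P'_x$ as the set of \emph{collateral} proposals set off when the accomplice, by lifting $w_x$ above $\mu(m)$, knocks $w_x$'s original partner $\mu(w_x)$ out of his seat, and then to show that these collateral sets have enough structure that none of them can be covered by a union of others without sitting inside a single one. First I would fix $w_x \in W^\textsc{NR}$ and describe the cascade. By \Cref{prop:Permuting_Falsified_Lists} I may place $w_x$ immediately above $\mu(m)$ in $m$'s pushed-up list, and by \Cref{prop:PushUpProposals} every proposal of the $\>$-run recurs in the $\>^x$-run, so by order-independence of \DA{} I may execute $\>^x$ by first replaying the $\>$-run (arriving at $\mu$) and only then letting $m$ propose to $w_x$. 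If $\mu(w_x) \>_{w_x} m$ then $m$ is rejected immediately, $\mu^x = \mu$, $P'_x = \emptyset$, and the lemma is trivial (take any $w_{y_k} \in Y$). Otherwise $m \>_{w_x} \mu(w_x)$: $w_x$ drops $\mu(w_x)$, and a deterministic rejection cascade follows---each displaced man proposes down his list until accepted, displacing that woman's fianc\'e---which by the no-regret assumption terminates with $m$ being re-accepted by $\mu(m)$. If $C(w_x)$ denotes this cascade, then $P'_x$ is exactly the set of proposals in $C(w_x)$ not made by $m$, a linearly ordered chain whose first element is $e_x \coloneqq (\mu(w_x), n(w_x))$, where $n(w_x)$ is the woman $\mu(w_x)$ proposes to right after being rejected by $w_x$.

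The heart of the argument is the structural claim: for distinct $w_x, w_y \in W^\textsc{NR}$, if $e_x \in P'_y$ then $P'_x \subseteq P'_y$. To prove it I would observe that $e_x \in P'_y$ means $w_x$ rejects $\mu(w_x)$ during the $\>^y$-run, so that the very same successor rule generating $C(w_x)$ fires inside the $\>^y$-execution, and then argue by induction along $C(w_x)$ that every proposal of $C(w_x)$ other than $m$'s reappears in the $\>^y$-run, i.e.\ $P'_x \subseteq P'_y$. The tools available for this induction are order-independence of \DA{}, the monotonicity $P_{\>^S} \subseteq P_{\>^T}$ for $S \subseteq T \subseteq W^\textsc{NR}$ (which follows by applying \Cref{prop:PushUpProposals} with base profile $\>^S$ rather than the truthful profile), \Cref{prop:PushUpPropsalsContainment} applied to $\>^{\{w_x,w_y\}}$ (a no-regret profile by \Cref{lem:NoRegretSubset}), and, plausibly, \Cref{lem:PairwiseDisjointSets}. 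An immediate consequence of the claim is that $e_x$ lies in no $P'_w$ with $P'_w \subsetneq P'_x$.

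Granting the structural claim, the lemma follows in one line: given $P'_x \subseteq P'_{y_1} \cup P'_{y_2} \cup \dots$, either $P'_x = \emptyset$ (trivial) or $e_x$ lies in some $P'_{y_k}$, whence $P'_x \subseteq P'_{y_k}$ as required. The hard part will be the inductive step of the structural claim: the $\>^y$-run drives its own ($w_y$-)cascade simultaneously, so the women along $C(w_x)$ may be engaged to different men at intermediate moments than they are during the $\>^x$-run, and one must verify---by careful tracking of which man each woman currently holds, together with the fact that a \DA{} rejection depends only on the proposer's identity and the woman's current best offer---that this interference never causes $w_x$'s cascade to terminate early inside the $\>^y$-run.
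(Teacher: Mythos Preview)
Your overall architecture matches the paper's: single out one distinguished proposal from $P'_x$, locate it in some $P'_{y_k}$ via the covering hypothesis, then invoke a structural lemma of the form ``\emph{that} proposal lying in $P'_{y_k}$ forces $P'_x \subseteq P'_{y_k}$.'' The difference is \emph{which} proposal you single out. You take the \emph{opening} proposal $e_x = (\mu(w_x), n(w_x))$ of the cascade; the paper instead uses the \emph{closing} proposal $(m_x, w_x)$---the one where $w_x$ receives a proposer she strictly prefers to the accomplice $m$ (this is the last link $p^k_x$ of the chain $C_x$, and it lies in $P'_x$ because $m_x \neq m$ and $(m_x, w_x) \notin P_{\>}$).

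That choice is not cosmetic: it is what makes the structural lemma cheap. From $(m_x, w_x) \in P_{\succ^{y_k}}$ with $m_x \>_{w_x} m$ one reads off $\mu^{y_k}(w_x) \>_{w_x} m$, so by \Cref{prop:WeakPushUp} the additional push-up of $w_x$ over $\succ^{y_k}$ does nothing: $\mu^{\{x,y_k\}} = \mu^{y_k}$ and hence $P'_{\{x,y_k\}} = P'_{y_k}$. Since $P'_x \subseteq P'_{\{x,y_k\}}$ by monotonicity of proposal sets under no-regret push-ups, the containment $P'_x \subseteq P'_{y_k}$ drops out in two lines, with no cascade bookkeeping whatsoever. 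This is exactly \Cref{lem:Proposals_Set_Subset}.

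By contrast, your opening proposal $e_x \in P'_{y_k}$ only tells you that $w_x$ rejects $\mu(w_x)$ under $\succ^{y_k}$; it does \emph{not} guarantee $\mu^{y_k}(w_x) \>_{w_x} m$. Nothing rules out $m \>_{w_x} \mu^{y_k}(w_x) \>_{w_x} \mu(w_x)$: stability of $\mu^{y_k}$ under $\succ$ places no constraint here, since $\mu(m) \>_m w_x$ already kills the potential blocking pair $(m,w_x)$ regardless of $w_x$'s side. So the \Cref{prop:WeakPushUp} shortcut is unavailable, and you are forced into the step-by-step chain induction you flag as the ``hard part,'' with exactly the interference worries you anticipate. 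That induction can be made to work---it is essentially the machinery behind \Cref{lem:Proposals_Subset_or_Disjoint}---but it is strictly more effort than the paper's route and your sketch does not yet close it. In short: right plan, but pick the other end of the chain.
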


\begin{restatable}{lemma}{PairwiseDisjointSets}
Let $Y \coloneqq \{ w_{y_1}, w_{y_2}, w_{y_3}, \hdots \}$ be such that $Y \subseteq W^\textsc{NR}$. If $Y$ is a minimal no-regret push up set, then $P'_{y_1}, P'_{y_2}, P'_{y_3}, \hdots$ are pairwise disjoint.
\label{lem:PairwiseDisjointSets}
\end{restatable}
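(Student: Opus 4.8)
The plan is to argue by contradiction. Suppose $Y=\{w_{y_1},w_{y_2},\dots\}\subseteq W^{\textsc{NR}}$ is a minimal no-regret push up set but two of the associated proposal sets intersect, say $P'_{y_1}\cap P'_{y_2}\neq\emptyset$. I will show that then one of $w_{y_1},w_{y_2}$ can be deleted from $Y$ without changing the resulting matching, contradicting minimality.

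First I would set up the lattice machinery that makes a ``deletion'' argument possible. Since $Y\subseteq W^{\textsc{NR}}$, \Cref{lem:NoRegretSubset} makes every subset of $Y$ a no-regret push up, so by \Cref{prop:PushUp} the matchings $\mu^{y_k}$, $\mu^{Y}$, and $\mu^{Y\setminus\{w_{y_j}\}}$ all lie in the lattice $S_{\>}$; \Cref{lem:PushUp_Meet} then gives $\mu^{Y}=\bigwedge_k\mu^{y_k}$ and $\mu^{Y\setminus\{w_{y_j}\}}=\bigwedge_{k\neq j}\mu^{y_k}$, where $\wedge$ is the meet in $S_{\>}$, so that $\mu^{Y}(w)$ equals, for each woman $w$, the $\>_w$-worst among the partners $\mu^{y_k}(w)$. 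I would also note that $P'_{Y}=\bigcup_k P'_{y_k}$: the inclusion ``$\supseteq$'' holds because each transition $\>^{y_k}\to\>^{Y}$ is a no-regret push up, giving $P_{\>^{y_k}}\subseteq P_{\>^{Y}}$ by \Cref{prop:PushUpProposals}, while ``$\subseteq$'' is \Cref{prop:PushUpPropsalsContainment} (together with \Cref{prop:StrictPushUp}, this also shows $P'_{y_k}\neq\emptyset$ for a minimal $Y$).

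The crux is the claim that a shared new proposal forces the two proposal sets to be \emph{nested}: if $P'_{y_1}\cap P'_{y_2}\neq\emptyset$ then $P'_{y_1}\subseteq P'_{y_2}$ or $P'_{y_2}\subseteq P'_{y_1}$ (equivalently, $\mu^{y_1}$ and $\mu^{y_2}$ are comparable in $S_{\>}$). I would prove this by comparing the \DA{} executions on $\>^{y_1}$ and $\>^{y_2}$: a no-regret single push up of $w_{y_k}$ launches a deterministic rejection chain of new proposals, starting when $w_{y_k}$ first tentatively accepts the accomplice $m$ and running until $\mu(m)$ regains $m$, and $P'_{y_k}$ is exactly the set of non-$m$ proposals along that chain. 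If $(m',w')$ lies in both chains, then from the moment $m'$ proposes to $w'$ the two executions are in the same state relative to the true run on $\>$, so the chains coincide from that point on, and since both terminate at the same configuration ($m$ back with $\mu(m)$) they are nested. Making this ``same state'' step rigorous is the main obstacle; I expect to avoid explicit bookkeeping of intermediate \DA{} states by routing the argument through \Cref{prop:PushUpPropsalsContainment} and \Cref{lem:SingleProposalSetCover}, or by recasting the chains in terms of the rotation structure of $S_{\>}$.

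Granting the claim, say $P'_{y_2}\subseteq P'_{y_1}$, the conclusion follows quickly. For any woman $w$, her partner under $\>^{y_k}$ is her $\>_w$-favourite proposer; her set of non-$m$ proposers grows weakly with $P'_{y_k}$, and the lone $m$-proposal present under $\>^{y_2}$ but not under $\>^{y_1}$ goes to $w_{y_2}$, who rejects $m$ (no regret) and so ends with a non-$m$ man who already proposes to her under $\>^{y_1}$ as well. Hence $\mu^{y_1}(w)\succeq_w\mu^{y_2}(w)$ for every $w$, i.e.\ $\mu^{y_1}\succeq_W\mu^{y_2}$, so $\mu^{y_1}\wedge\mu^{y_2}=\mu^{y_2}$; by \Cref{lem:PushUp_Meet} this gives $\mu^{Y\setminus\{w_{y_1}\}}=\bigwedge_{k\neq1}\mu^{y_k}=\mu^{y_2}\wedge\bigwedge_{k\geq3}\mu^{y_k}=\bigwedge_k\mu^{y_k}=\mu^{Y}$. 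Since $w_{y_1}\in W^{\textsc{NR}}$, \Cref{lem:NoRegretSubset} makes $Y\setminus\{w_{y_1}\}$ a no-regret push up set, and it is a proper subset of $Y$ yielding the same matching as $Y$ — contradicting the minimality of $Y$. Therefore $P'_{y_1},P'_{y_2},\dots$ are pairwise disjoint.
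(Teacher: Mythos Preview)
Your approach is essentially the paper's: both argue by contradiction, both isolate the key nesting claim (nonempty intersection of $P'_{y_i}$ and $P'_{y_j}$ forces one to contain the other --- this is the paper's \Cref{lem:Proposals_Subset_or_Disjoint}, proved there via exactly the chain-of-proposals machinery you sketch, together with the auxiliary \Cref{lem:Proposals_Set_Subset}), and both conclude by showing one of the two women is redundant in $Y$. Where you close via the lattice meet from \Cref{lem:PushUp_Meet}, the paper instead works directly with proposal sets: once $P'_{y_i}\subseteq P'_{y_j}$, removing $w_{y_i}$ leaves $\bigcup_k P'_{y_k}$ unchanged, so by \Cref{lem:ProposalSetEquivalence} the only proposal dropped between $\>^{Y}$ and $\>^{Y\setminus\{w_{y_i}\}}$ is $(m,w_{y_i})$, which cannot affect anyone's final partner. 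Your lattice route is a clean alternative once the nesting claim is in hand.

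One genuine slip in your last paragraph: with $P'_{y_2}\subseteq P'_{y_1}$ you correctly derive $\mu^{y_1}\succeq_W\mu^{y_2}$, but in the paper's convention the meet selects each woman's \emph{favourite} partner (see the proof of \Cref{lem:PushUp_Meet}), so $\mu^{y_1}\wedge\mu^{y_2}=\mu^{y_1}$, not $\mu^{y_2}$. Consequently the redundant woman is $w_{y_2}$ (the one with the smaller proposal set), not $w_{y_1}$; after swapping the indices your meet computation $\mu^{Y\setminus\{w_{y_2}\}}=\mu^{y_1}\wedge\bigwedge_{k\ge3}\mu^{y_k}=\bigwedge_k\mu^{y_k}=\mu^{Y}$ goes through, and this matches which woman the paper deletes. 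Also note that invoking \Cref{lem:SingleProposalSetCover} to \emph{prove} the nesting claim would be circular, since that lemma is itself derived from \Cref{lem:Proposals_Set_Subset}; the paper's direct chain argument (your first suggestion) is the way to go.
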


The proofs of \Cref{lem:PairwiseDisjointSets,lem:SingleProposalSetCover,lem:PushUp_Meet} are presented in the following sections. Below, we will use these results to prove~\Cref{thm:No-regret_Minimal_Is_Minimum}.

\NoRegretMinimalIsMinimum*

\begin{proof}
Suppose, for contradiction, that there exists a minimum set $Z \subseteq W^\textsc{NR}$ such that $|Z| < |Y|$. Let $Y \coloneqq \{ w_{y_1}, w_{y_2}, w_{y_3}, \dots \}$ and $Z \coloneqq \{ w_{z_1}, w_{z_2}, w_{z_3}, \dots \}$.  By our definition of minimal and minimum sets, we have that $\mu^{W^\textsc{NR}} = \mu^Y = \mu^Z$.

Let $P' \coloneqq P'_{y_1} \cup P'_{y_2} \cup P'_{y_3} \cup \dots$. We will now argue that $P'_Y = P'$. \footnote{Note that this claim does not immediately follow from \Cref{prop:PushUpPropsalsContainment}, as $P'$ is the set of \emph{new} proposals in relation to the original preference profile that do not involve the accomplice.} Suppose there exists a proposal $(m', w') \in P'_Y \setminus P'$. By definition of sets $P'_Y$ and $P'_{y_k}$ for all $w_{y_k} \in Y$, we know that $m' \neq m$ (i.e., $m'$ is truthful). Since men propose in decreasing order of their preference and $\>^Y_{m'} \ = \ \>^{y_k}_{m'}$ for all $w_{y_k} \in Y$, we infer that  $\mu^Y(m') \notin \{ \mu^{y_1}(m'), \mu^{y_2}(m'), \mu^{y_3}(m'), \dots \}$, which poses a contradiction to \Cref{lem:PushUp_Meet}. 

Now, suppose there exists a proposal $(m', w') \in P' \setminus P'_Y$. Then, it must be that for some $w_{y_k} \in Y$, $\mu^Y(m')  \>_{m'} \mu^{y_k}(m')$, because men propose in decreasing order of their preference and all men other than the accomplice are truthful. However, this means that under the matching $\mu^Y$, man $m'$ is not matched to his least-preferred partner among $\mu^{y_1}(m'), \mu^{y_2}(m'), \mu^{y_3}(m'), \dots$, which contradicts~\Cref{lem:PushUp_Meet}.
Thus, we obtain the desired property, $P'_Y = P'$. Note that a similar line of reasoning may be used to claim that $P'_Z = P''$, where $P'' \coloneqq P'_{z_1} \cup P'_{z_2} \cup P'_{z_3} \cup \dots$.

We will now argue that $|Z| = |Y|$ to derive the desired contradiction. Notice that if $P'_Y \setminus P'_Z \neq \emptyset$, then $\mu^Y(m') \neq \mu^Z(m')$, because men propose in decreasing order of their preference and $\>^Y_{m'} \ = \ \>^Z_{m'}$. However, this poses a contradiction to our claim that $\mu^Y = \mu^Z$. Therefore, $P'_Y \setminus P'_Z = \emptyset$, and for the same reason, $P'_Z \setminus P'_Y = \emptyset$. This implies that $P'_Y = P'_Z$, and thus $P' = P''$. We infer from this result that $P'_{z_k} \subseteq P'$ for all $w_{z_k} \in Z$. \Cref{lem:SingleProposalSetCover} then shows that for all $w_{z_k} \in Z$, there exists some $w_{y_\ell} \in Y$ such that $P'_{z_k} \subseteq P'_{y_\ell}$. By a similar argument, for every $w_{y_\ell} \in Y$, there exists some $w_{z_k} \in Z$ such that $P'_{y_\ell} \subseteq P'_{z_k}$. Additionally, by \Cref{lem:PairwiseDisjointSets}, we know that $P'_{y_1}, P'_{y_2}, P'_{y_3}, \dots$ are pairwise disjoint sets, as are $P'_{z_1}, P'_{z_2}, P'_{z_3}, \dots$. Then, given $P' = P''$, it must be the case that $|Z| = |Y|$, giving us the desired contradiction. Hence, a minimal no-regret push up set must also be a minimum no-regret push up set.
\end{proof}

\subsection{Proof of Lemma~\ref{lem:PushUp_Meet}}

\PushUpMeet*
\begin{proof}
Suppose, for contradiction, that $\mu^Y(w) \neq \mu^{y_k}(w)$ for some $w \in W$, where $\mu^{y_k}(w) \coloneqq \mu^{y_1}(w) \wedge \mu^{y_2}(w) \wedge \mu^{y_3}(w) \wedge \hdots$, i.e., $\mu^{y_k}(w)$ is $w$'s favorite partner in $\{ \mu^{y_1}(w), \mu^{y_2}(w), \mu^{y_3}(w), \hdots \}$. 

From \Cref{lem:NoRegretSubset}, we know that $m$ matches with $\mu(m)$ under $\>^Y$. Since $w_{y_k} \in Y \subseteq W^\textsc{NR}$, we know that $m$ matches with $\mu(m)$ under $\>^{y_k}$ as well. Since, $\>^Y$ is obtained via a no-regret push up of set $Y \setminus \{ w_{y_k} \}$ in $\>^{y_k}_m$, from \Cref{prop:PushUp}, we infer that $\mu^Y(w) \succeq_w \mu^{y_k}(w)$. Given our assumption that $\mu^Y(w) \neq \mu^{y_k}(w)$, we get that $\mu^Y(w) \>_w \mu^{y_k}(w)$. Due to this, the woman $w$ must receive a proposal under $\>^Y$ that she does not under $\>^{y_1},\>^{y_2},\dots,\>^{y_k}$. However, this poses a contradiction to \Cref{prop:PushUpPropsalsContainment}, which guarantees that any proposal under $\>^Y$ must occur under at least one of the profiles $\>^{y_1},\>^{y_2},\dots,\>^{y_k}$.
\end{proof}

\subsection{Proof of Lemma~\ref{lem:SingleProposalSetCover}}

Given two women $w_x, w_y \in W^\textsc{NR}$ and some man $m_x$ such that $m_x \>_{w_x} m$,  %
we show that if $m_x$ proposes to $w_x$ under the profile $\>^y$, then $P'_x$ is contained in $P'_y$~(\Cref{lem:Proposals_Set_Subset}). This result is leveraged in the proof of \Cref{lem:SingleProposalSetCover}. %

\begin{restatable}{lemma}{ProposalsSetSubset}
Let $w_x, w_y \in W^\textsc{NR}$. If there is a proposal $(m_x, w_x) \in P_{\>^y}$ such that $m_x \>_{w_x} m$,  
then $P'_x \subseteq P'_y$.
\label{lem:Proposals_Set_Subset}
\end{restatable}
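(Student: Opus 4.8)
The plan is to introduce the joint push-up profile $\>^{xy} \coloneqq \{\>_{-m}, \>_m^{\{w_x,w_y\}\uparrow}\}$ and sandwich it between $\>^x$ and $\>^y$. First dispose of the trivial case $w_x = w_y$, where $\>^x = \>^y$ and the claim is immediate; so assume $w_x \neq w_y$, which means $w_x$ is ranked below $\mu(m)$ in $\>^y_m$ and $w_y$ is ranked below $\mu(m)$ in $\>^x_m$. Since $w_x, w_y \in W^\textsc{NR}$, the definition of $W^\textsc{NR}$ gives $\mu^x(m) = \mu^y(m) = \mu(m)$, and \Cref{lem:NoRegretSubset} gives $\mu^{xy}(m) = \mu(m)$ as well. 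Next, extract the key consequence of the hypothesis: when $m_x$ proposes to $w_x$ during \DA{} on $\>^y$, woman $w_x$'s tentative partner from that point on is weakly better than $m_x$ (with respect to $\>_{w_x}$), and tentative partners only improve over the run of \DA{}; hence $\mu^y(w_x) \succeq_{w_x} m_x \>_{w_x} m$, so in particular $\mu^y(w_x) \>_{w_x} m$.

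Now run the two comparisons. On one side, $\>^{xy}$ is obtained from $\>^y$ by a no-regret push-up of the single woman $w_x$ (ranked below $\mu^y(m)=\mu(m)$ in $\>^y_m$), and we have just shown $\mu^y(w_x) \>_{w_x} m$; so \Cref{prop:WeakPushUp} applies with base profile $\>^y$ and set $X = \{w_x\}$, yielding $\mu^{xy} = \mu^y$. On the other side, $\>^{xy}$ is obtained from $\>^x$ by a no-regret push-up of the single woman $w_y$ (ranked below $\mu^x(m)=\mu(m)$ in $\>^x_m$), so \Cref{prop:PushUpProposals} gives $P_{\>^x} \subseteq P_{\>^{xy}}$. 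Finally, observe that every man $m_i \neq m$ has the same list under $\>^y$ and $\>^{xy}$ (only $m$'s list changes) and the same partner (since $\mu^{xy} = \mu^y$); a man proposes exactly to the women he weakly prefers to his final partner, so $m_i$ makes the same proposals under both profiles. Hence $\{(m_i,w_j) \in P_{\>^{xy}} : m_i \neq m\} = \{(m_i,w_j) \in P_{\>^y} : m_i \neq m\}$. Combining the three facts,
\[
P'_x = \{(m_i,w_j) \in P_{\>^x \setminus \>} : m_i \neq m\} \subseteq \{(m_i,w_j) \in P_{\>^{xy}\setminus\>} : m_i \neq m\} = \{(m_i,w_j) \in P_{\>^y \setminus \>} : m_i \neq m\} = P'_y,
\]
which is the claim.

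I expect the main obstacle to be the passage from ``same matching'' to ``same set of non-accomplice proposals'': \Cref{prop:WeakPushUp} only delivers $\mu^{xy} = \mu^y$, and one must still argue carefully that no spurious new proposals by truthful men appear under $\>^{xy}$. The clean way around this, as above, is the elementary characterization that a man's proposal set in a \DA{} run is determined by his (fixed) list and his final partner, so that men other than $m$—whose lists are unchanged—cannot change their proposal sets once their match is pinned down; it is worth stating this observation explicitly (it is implicitly the same principle behind the Remark following \Cref{lem:SwappingLemma}). A secondary point to be careful about is bookkeeping of which women lie above/below $\mu(m)$ in the intermediate lists $\>^x_m$ and $\>^y_m$, which is where the $w_x \neq w_y$ reduction and \Cref{prop:Permuting_Falsified_Lists} (for placing the promoted woman anywhere above $\mu(m)$) are used.
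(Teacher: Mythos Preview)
Your proof is correct and follows essentially the same approach as the paper: introduce the joint profile $\>^{xy}$, use \Cref{prop:WeakPushUp} to conclude $\mu^{xy}=\mu^y$ from $\mu^y(w_x)\>_{w_x} m$, and then compare proposal sets. The one minor difference is that you invoke \Cref{prop:PushUpProposals} (yielding $P_{\>^x}\subseteq P_{\>^{xy}}$) together with the elementary fact that a truthful man's proposal set is determined by his list and final partner, whereas the paper routes the last step through \Cref{prop:PushUpPropsalsContainment} and \Cref{lem:ProposalSetEquivalence}; your path is a touch more direct but not substantively different.
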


\begin{proof}
Let $\>^{xy} \coloneqq \{\>_{-m},\>_m^{\{w_x,w_y\} \uparrow}\}$ denote the profile obtained by pushing up both $w_x$ and $w_y$ in the true list $\>_m$ of the accomplice. From \Cref{lem:NoRegretSubset}, we get that $\>^{xy}$ is a no-regret profile, and can be obtained via a no-regret push up of $w_x$ in $\>^y_m$. 

Given $(m_x, w_x) \in P_{\>^y}$ where $m_x \>_{w_x} m$, we also have that $\mu^y(w_x) \>_{w_x} m$, because women match with their favorite proposers and it is assumed that $\>^y_{w_x} = \ \>_{w_x}$ (since women report truthfully). Then, by \Cref{prop:WeakPushUp}, we have that $\mu^{xy} = \mu^y$. Furthermore, since men propose in decreasing order of their lists and all men except for the accomplice are truthful, we get that $P'_{xy} \setminus P'_y = \emptyset$ and $P'_y \setminus P'_{xy} = \emptyset$, and thus $P'_{xy} = P'_y$.

It follows from \Cref{prop:PushUpPropsalsContainment} that $P_{\>^{xy}} \subseteq P_{\>^x} \cup P_{\>^y}$. Combining this with \Cref{lem:ProposalSetEquivalence}, we get
\[P'_{xy} \cup P_{\>} \cup \{(m, w_x), (m, w_y) \} \subseteq \]
\[ P'_x \cup P'_y \cup P_{\>} \cup \{(m, w_x), (m, w_y)\}.\]
Since the sets on each side of the above containment relation are disjoint, we get that $P'_{xy} \subseteq P'_x \cup P'_y$. Furthermore, we already have that $P'_{xy} = P'_y$, which means that $P'_{xy} \subseteq P'_x \cup P'_{xy}$. Therefore, it must be that $P'_x \subseteq P'_{xy}$. Since $P'_{xy} = P'_y$, we obtain $P'_x \subseteq P'_{y}$, as desired.
\end{proof}

\SingleProposalSetCover*
\begin{proof}
Since $\>^x$ is a no-regret profile, the accomplice $m$ proposes to and is rejected by the woman $w_x$. Thus, there must be a proposal $(m_x, w_x) \in P_{\>^x}$ such that $m_x \>_{w_x} m$.

Suppose $\mu(w_x) \>_{w_x} m$. Then, by \Cref{prop:WeakPushUp}, $\mu^x = \mu$. Since men propose in decreasing order of their preference and all men other than the accomplice are truthful, we have that $P'_x = \emptyset$ and the lemma trivially holds.

Otherwise, we must have that $m \succeq_{w_x} \mu(w_x)$. Combining this with $m_x \>_{w_x} m$, we get $m_x \>_{w_x} \mu(w_x)$. Given this and $m_x \neq m$, we have that $(m_x, w_x) \in P'_x$. Due to our assumption that $P'_x \subseteq P'_{y_1} \cup P'_{y_2} \cup P'_{y_3} \cup \dots$, we infer $(m_x, w_x) \in P'_{y_k}$ for some $w_{y_k} \in Y$. Then, \Cref{lem:Proposals_Set_Subset} 
shows that $P'_x \subseteq P'_{y_k}$, thus completing the proof. 
\end{proof}

\subsection{Proof of Lemma~\ref{lem:PairwiseDisjointSets}}
\label{sec:Proof_Of_PairwiseDisjointSets}

The proof of \Cref{lem:PairwiseDisjointSets} uses a number of intermediate results (\Cref{lem:ProposalSetEquivalence,lem:chain_equivalence,lem:Proposals_Set_Subset,lem:Proposals_Subset_or_Disjoint}). We will start by showing that the set of proposals under the profile $\>^Y$, namely $P_{\>^Y}$, is equivalent to the union of $P'_Y$, $P_{\>}$, and the set of proposals $m$ makes to the women in $Y$ (\Cref{lem:ProposalSetEquivalence}).

\begin{restatable}{lemma}{ProposalSetEquivalence}
Let $Y \subseteq W^\textsc{NR}$. Then, $P_{\>^Y} = P'_Y \cup P_{\>} \cup \{(m, w_j) : w_j \in Y\}$.
\label{lem:ProposalSetEquivalence}
\end{restatable}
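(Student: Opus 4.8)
The plan is to prove the set equality by establishing the two inclusions separately, using \Cref{lem:NoRegretSubset} at the outset to get $\mu^Y(m) = \mu(m)$ (i.e., $m$ does not incur regret under $\>^Y$), together with one elementary fact about the \DA{} algorithm: in any run, the set of women a man proposes to is \emph{exactly} the prefix of his list up to and including his final partner. Indeed, a man proposes down his list, pausing whenever tentatively accepted and resuming when later rejected; since a woman never reconsiders a man she has rejected, a man whose final partner is his $k$-th choice must have been rejected by choices $1,\dots,k-1$ and proposes to no woman below his $k$-th choice.

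First I would record the two consequences of this fact that drive the argument. Under $\>$, man $m$'s final partner is $\mu(m)$, so the set of women to whom $m$ proposes during \DA{} on $\>$ is precisely $\>^L_m \cup \{\mu(m)\}$ (the women ranked weakly above $\mu(m)$ in $\>_m$). Under $\>^Y$, \Cref{lem:NoRegretSubset} gives $\mu^Y(m)=\mu(m)$, and since $Y$ consists of women ranked below $\mu(m)$ in $\>_m$, the women ranked weakly above $\mu(m)$ in $\>_m^{Y\uparrow}$ are exactly $\>^L_m \cup Y \cup \{\mu(m)\}$ (a disjoint union with $Y$); hence this is precisely the set of women to whom $m$ proposes during \DA{} on $\>^Y$.

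For the inclusion $\supseteq$: $P'_Y \subseteq P_{\>^Y}$ holds by the very definition of $P'_Y$; $P_{\>} \subseteq P_{\>^Y}$ holds by \Cref{prop:PushUpProposals}, which applies because $m$ does not incur regret under $\>^Y$; and $\{(m,w_j) : w_j \in Y\} \subseteq P_{\>^Y}$ holds because $Y$ is contained in the proposal set $\>^L_m \cup Y \cup \{\mu(m)\}$ identified above. For the inclusion $\subseteq$: take any $(m_i,w_j)\in P_{\>^Y}$. If $m_i \neq m$, then either $(m_i,w_j)\in P_{\>}$, or else $(m_i,w_j)\in P_{\>^Y \setminus \>}$ with $m_i \neq m$, which means $(m_i,w_j)\in P'_Y$ by definition. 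If $m_i = m$, then by the proposal-set description $w_j \in \>^L_m \cup Y \cup \{\mu(m)\}$; if $w_j \in \>^L_m \cup \{\mu(m)\}$ then $(m,w_j)\in P_{\>}$, and otherwise $w_j \in Y$, so $(m,w_j) \in \{(m,w_k):w_k\in Y\}$. Since all cases land in $P'_Y \cup P_{\>} \cup \{(m,w_j):w_j\in Y\}$, the two inclusions together give the claimed equality.

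The only delicate point—the ``main obstacle''—is the prefix characterization of a man's proposal set, which relies on the monotonicity of rejections in \DA{}; once that is in place, the rest is bookkeeping with the definitions of $P_{\>}$, $P_{\>^Y}$, and $P'_Y$, plus the already-established no-regret conclusion of \Cref{lem:NoRegretSubset} and the containment of proposals from \Cref{prop:PushUpProposals}.
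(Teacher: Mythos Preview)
Your proof is correct and follows essentially the same route as the paper: both directions of the inclusion are established using \Cref{lem:NoRegretSubset} (to get no regret) and \Cref{prop:PushUpProposals} (to get $P_{\>} \subseteq P_{\>^Y}$), together with the definition of $P'_Y$. The only difference is one of explicitness: where the paper asserts $P'_Y = P_{\>^Y} \setminus (P_{\>} \cup \{(m,w_j): w_j\in Y\})$ ``by definition'' to obtain the $\subseteq$ direction, you unpack this via the prefix characterization of a man's proposal set, which is precisely what is needed to see that the accomplice's \emph{new} proposals under $\>^Y$ are exactly those to women in $Y$---so your version is slightly more careful on a point the paper leaves implicit.
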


\begin{proof}
By definition of $P'_Y$, we know that $P'_Y = P_{\>^Y} \setminus (P_{\>} \cup \{(m, w_j) : w_j \in Y\})$ which implies that $P_{\>^Y} \subseteq P'_Y \cup P_{\>} \cup \{(m, w_j) : w_j \in Y\}$. It is also easy to see that $P'_Y \cup \{(m, w_j) : w_j \in Y\} \subseteq P_{\>^Y}$. From \Cref{lem:NoRegretSubset}, we know that $\>^Y$ is a no-regret profile, and thus, by \Cref{prop:PushUpProposals}, $P_{\>} \subseteq P_{\>^Y}$. Combining this with the above observation, we get $P'_Y \cup P_{\>} \cup \{(m, w_j) : w_j \in Y\} \subseteq P_{\>^Y}$. Since we have shown containment in both directions, the two sets must be the same, i.e., $P_{\>^Y} = P'_Y \cup P_{\>} \cup \{(m, w_j) : w_j \in Y\}$.
\end{proof}

Before stating the next intermediate result (\Cref{lem:chain_equivalence}), it would be helpful to define a combinatorial structure that we will refer to as ``chain of proposals.'' We note that \citet{A15susceptibility} study a similar construct which they call ``chains of rejections'' in the context of a many-to-one matching model.

\paragraph{Chains of proposals} Given an arbitrary woman $w_x \in W^\textsc{NR}$, let $\>^x$ be the corresponding profile where $m$ pushes up $w_x$ in his preference list without incurring regret. Additionally, let $\sigma_{\>^x}$ be a permutation of the sets of proposals that occur under profile $\>^x$. It is guaranteed by \Cref{prop:PushUpProposals} that $P_{\>} \subseteq P_{\>^x}$. Due to this, we suppose all proposals in $P_{\>}$ are at the top of $\sigma_{\>^x}$, followed by $p^0_x \coloneqq (m, w_x)$. We define $C_x$, consisting of the proposals $p_x^1, p_x^2, \dots, p_x^k$ (in that order), as a \emph{chain of proposals} that follows $p^0_x$ in $\sigma_{\>^x}$. 

Let $p_x^i$ be the $i$-th proposal in the chain, and let $p_x^{i-1} \coloneqq (m_x^{i-1}, w_x^{i-1})$ be the proposal that comes directly before $p_x^i$ in $\sigma_{\>^{x}}$. Since $P_{\>}$ yields a complete matching and we have assumed that all proposals in $P_{\>}$ take place before $C_x$ according to $\sigma_{\>^{x}}$, we infer that $w_x^{i-1}$ must be tentatively matched to some man, say $m'_x$, prior to $p_x^{i-1}$.
In other words, proposal $(m'_x, w_x^{i-1})$ is above $p_x^{i-1}$ in $\sigma_{\>^{x}}$, and $m'_x$ is $w_x^{i-1}$'s favorite proposer before $p_x^{i-1}$. Depending on who $w_x^{i-1}$ prefers, there are two cases we must consider.

\textbf{\underline{Case A}} (when $m'_x \>_{w_x^{i-1}} m_x^{i-1}$): Then, $w_x^{i-1}$ rejects $m_x^{i-1}$ in favor of $m'_x$, and we define $p_x^i$ to be $m_x^{i-1}$'s proposal to the next woman in his preference list, say $w_x^{i}$ (i.e., $p_x^i \coloneqq (m_x^{i-1}, w_x^i)$).

\textbf{\underline{Case B}} (when $m_x^{i-1} \>_{w_x^{i-1}} m'_x$): Then, $w_x^{i-1}$ rejects $m'_x$ in favor of $m_x^{i-1}$, and we define $p_x^i$ to be $m'_x$'s proposal to the next woman in his preference list, say $w'_{i}$ (i.e., $p_x^i \coloneqq (m'_x, w'_{i})$).

Regardless of who $w_x^{i-1}$ prefers, we later show in \Cref{lem:chain_equivalence} that $p_x^i \in P'_x$, and thus $p_x^i$ does indeed occur during the \DA{} execution on $\>^x$.

We define $p_x^k$ to be the first proposal after $(m, w_x)$ in $\sigma_{\>^{x}}$ such that some man $m_x \>_{w_x} m$ proposes to $w_x$. As stated above, $p_x^k$ is the last proposal in $C_x$. We use \Cref{eg:chains-of-proposals} to help illustrate the structure of these chains.
  
\begin{example}[\textbf{Chains of proposals}]\label{eg:chains-of-proposals}
Consider the following preference profile where the \DA{} outcome is underlined. 

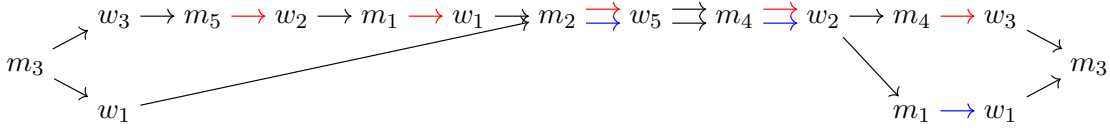
\begin{figure*}[t]
    \centering
    \begin{tikzcd}[cramped, row sep=tiny, column sep=small]
        & 
        w_3 \arrow[r] & 
        m_5 \arrow[r, red] & 
        w_2 \arrow[r] & 
        |[alias=m1]| m_1 \arrow[r, red] & 
        w_1 \arrow[r] & 
        |[alias=m2]| m_2 \arrow[r, shift left, red] \arrow[r, shift right, blue] & 
        w_5 \arrow[r, shift left] \arrow[r, shift right] & 
        m_4 \arrow[r, shift left, red] \arrow[r, shift right, blue] & 
        w_2 \arrow[r] \arrow[to=m11] & 
        m_4 \arrow[r, red] &
        w_3 \arrow[dr] & & [1.5em]\\
        m_3 \arrow[ur] \arrow[dr] & & & & & & & & & & & &
        |[alias=m3]| m_3\\
        & w_1 \arrow[to=m2] & & & & & & & & & 
        |[alias=m11]| m_1 \arrow[r, blue] &
        w_1 \arrow[to=m3]
    \end{tikzcd}
    \caption{Illustration of the sequence of events surrounding the chains of proposals, $C_1$ and $C_3$, as described in \Cref{eg:chains-of-proposals}. An edge from $m \in M$ to $w \in W$ represents $m$ proposing to $w$; an edge from $w \in W$ to $m \in M$ represents $w$ rejecting $m$. The bottom sequence represents the chain $C_1$, and the blue edges indicate proposals that belong to $C_1$. Similarly, the top sequence represents the chain $C_3$, and the red edges indicate proposals that belong to $C_3$ (e.g., $(m_5, w_2) \in C_3$).}
    \label{fig:chains-of-proposals}
\end{figure*}

\begin{table}[H]
    \centering
    \begin{tabularx}{0.7\linewidth}{XXXXXXXXXXXXXXX}
            ${m_1}\colon$ & $w_5$ & $\underline{w_2}$ & $w_1^{*,\dagger}$ & $w_4$ & $w_3$ && ${w_1}\colon$ & $m_1^{*,\dagger}$ & $m_5$ & $m_3$ & $m_4$ & $\underline{m_2}$\\
            ${m_2}\colon$ & $\underline{w_1}$ & $w_5^{*,\dagger}$ & $w_3$ & $w_2$ & $w_4$ && ${w_2}\colon$ & $m_5^*$ & $m_4^{\dagger}$ & $\underline{m_1}$ & $m_2$ & $m_3$\\
            $\boldsymbol{\textcolor{blue}{{m_3}}}\colon$ & $w_2$ & $\underline{w_4^{*}}^{,\dagger}$ & $w_3$ & $w_1$ & $w_5$ && ${w_3}\colon$ & $m_4^*$ & $m_3$ & $m_1$ & $m_2$ & $\underline{m_5}$\\
            ${m_4}\colon$ & $\underline{w_5}$ & $w_2^{\dagger}$ & $w_3^*$ & $w_1$ & $w_4$ && ${w_4}\colon$ & $m_5$ & $m_4$ & $\underline{m_3^{*}}^{,\dagger}$ & $m_1$ & $m_2$\\
            ${m_5}\colon$ & $\underline{w_3^{\dagger}}$ & $w_2^*$ & $w_4$ & $w_1$ & $w_5$ && ${w_5}\colon$ & $m_3$ & $m_2^{*,\dagger}$ & $\underline{m_4}$ & $m_1$ & $m_5$
        \end{tabularx}
\end{table}

Suppose the accomplice is $m_3$. The \DA{} matchings after $m_3$ submits the lists $\succ^1_{m_3} = w_1 \> w_2 \> w_4 \> w_3 \> w_5$ and $\>^3_{m_3} \coloneqq w_3 \> w_2 \> w_4 \> w_1 \> w_5$ are marked by $*$ and $\dagger$, respectively. Notice that $\>^1_{m_3}$ and $\>^3_{m_3}$ are derived from $\>_{m_3}$ by pushing up $w_1$ and $w_3$, respectively. 

Under both $\>^1 \coloneqq \{ \>_{-m_3}, \>^1_{m_3} \}$ and $\>^3 \coloneqq \{ \>_{-m_3}, \>^3_{m_3} \}$, there are corresponding chains of proposals, $C_1$ and $C_3$. Figure \ref{fig:chains-of-proposals} illustrates the sequence of events that surround these two chains.

Let us walk through chain $C_1$ as an example. Recall that the set $P_{\>}$ given by
$$P_{\>} = \{(m_1, w_5), (m_2, w_1), (m_3, w_2), (m_4, w_5),\\(m_5, w_3), (m_1, w_2), (m_3, w_4) \}$$ and proposal $(m_3, w_1)$ occur prior to the chain. Upon receiving $m_3$'s proposal, $w_1$ rejects $m_2$, who then proposes to the next woman in his preference list, $w_5$. Thus, $(m_2, w_5)$ is the first proposal in $C_1$. Because $w_5$ prefers $m_2$ over her original \DA{} match, $m_4$, she rejects $m_4$ who, as a result, must propose to the next woman in his preference list, $w_2$. Then, $(m_4, w_2)$ is the second proposal in $C_1$. Next, $w_2$ rejects $m_1$ in favor of $m_4$, and $m_1$ proposes to $w_1$. Since $(m_1, w_1)$ is the first proposal such that some man above $m_3$, the accomplice, in $w_1$'s preference list proposes to $w_1$. Thus, $(m_1, w_1)$ is the third and last proposal in $C_1$.\qed
\end{example}

Now that we have clearly defined what a chain of proposals $C_x$ is, we show that the set of proposals belonging to the chain is equivalent to the set $P'_x$ (\Cref{lem:chain_equivalence}).

\begin{lemma}
The set of proposals in $C_x$ is equivalent to the set $P'_x$.
\label{lem:chain_equivalence}
\end{lemma}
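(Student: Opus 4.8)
The plan is to establish the identity $C_x = P_{\>^x}\setminus\bigl(P_{\>}\cup\{(m,w_x)\}\bigr)$; since $w_x\in W^\textsc{NR}$, \Cref{lem:ProposalSetEquivalence} applied with $Y=\{w_x\}$ gives $P_{\>^x}=P'_x\cup P_{\>}\cup\{(m,w_x)\}$ with the three sets pairwise disjoint (note $(m,w_x)\notin P_{\>}$ because $w_x$ lies below $\mu(m)$ in $\>_m$, and $P'_x$ excludes proposals by $m$), so this identity is exactly $C_x=P'_x$. I would first dispose of the degenerate case $\mu(w_x)\>_{w_x} m$: here \Cref{prop:WeakPushUp} with $X=\{w_x\}$ yields $\mu^x=\mu$, hence (all men other than $m$ being truthful and proposing in decreasing order of preference) $P'_x=\emptyset$; and since $w_x$ still holds $\mu(w_x)\>_{w_x} m$ when $p^0_x=(m,w_x)$ is processed, no rejection is triggered and the chain is empty, so both sides vanish. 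From now on assume $m\>_{w_x}\mu(w_x)$, so that $w_x$ switches to holding $m$ immediately after $p^0_x$ and her former partner $\mu(w_x)$ is displaced (the first chain step, of ``Case B'' type).

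For the inclusion $C_x\subseteq P'_x$, I would show by induction along the chain that for every $i\ge 1$ the proposal $p^i_x=(m^*,w^*)$ satisfies: (i) $p^i_x\in P_{\>^x}$, i.e.\ it is an actual proposal of $\DA$ on $\>^x$; (ii) $w^*$ lies strictly below $\mu(m^*)$ in $m^*$'s true list, whence $p^i_x\notin P_{\>}$; and (iii) $m^*\neq m$. Claim (ii) is the engine of the argument: the man issuing $p^i_x$ is always a man who, just beforehand, was holding his \emph{own} $\mu$-partner --- true at $i=1$ since $w_x$ is the $\mu$-partner of $\mu(w_x)$, and propagated through both Case A (where the same man simply continues one woman further down) and Case B (where the newly displaced man $m'_x$, who under $\>$ stopped proposing at his $\mu$-partner $w_x^{i-1}$, now proposes below it) --- so his ``next woman'' always yields a brand-new proposal. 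For (iii), the only way $m$ could re-issue a proposal inside the chain is by being displaced from a woman he holds; once $p^0_x$ has been processed $m$ holds only $w_x$ and $\mu(m)$ (the offer to $\mu(m)$ being kept dormant), and he is displaced from $w_x$ exactly at the terminal step $p^k_x$, which by construction is by a man $m_x\>_{w_x} m$, hence $\neq m$; crucially, the no-regret hypothesis $w_x\in W^\textsc{NR}$ (so $\mu^x(m)=\mu(m)$) rules out $m$ being displaced from $\mu(m)$, since that would push him strictly below $\mu(m)$, i.e.\ cause regret. For (i) I would make the ordering $\sigma_{\>^x}$ --- all of $P_{\>}$, then $p^0_x$, then the chain --- rigorous through order-independence of the $\DA$ proposal set: the prefix $P_{\>}\cdot p^0_x\cdot(p^1_x,\dots,p^i_x)$ is realizable as a genuine partial execution of $\DA$ on $\>^x$ in which $m$'s offer to $\mu(m)$ is merely held in abeyance, so each proposal it lists belongs to $P_{\>^x}$; finiteness of the chain (and thus the existence of $p^k_x$) then follows either from (ii) --- finitely many $(\text{man},\text{woman-below-}\mu\text{-partner})$ pairs exist --- or from $\mu^x(w_x)\>_{w_x} m$, which holds by stability of $\mu^x$ with respect to $\>^x$ (otherwise $(m,w_x)$ would block $\mu^x$).

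For the reverse inclusion $P'_x\subseteq C_x$, it suffices to show that after the terminal proposal $p^k_x$ the run of $\DA$ on $\>^x$ makes no further proposals outside $P_{\>}\cup\{(m,w_x)\}\cup C_x$, so $C_x$ already exhausts $P_{\>^x}\setminus(P_{\>}\cup\{(m,w_x)\})$. At $p^k_x$ woman $w_x$ rejects $m$; $m$ then turns to $\mu(m)$, and since $\mu^x(m)=\mu(m)$, woman $\mu(m)$ must ultimately keep $m$, so she prefers $m$ to everyone who reached her during the cascade and displaces nobody on $m$'s behalf --- the cascade closes up. Combining the two inclusions yields $C_x = P_{\>^x}\setminus(P_{\>}\cup\{(m,w_x)\}) = P'_x$.

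The main obstacle is item (i): turning the bookkeeping device $\sigma_{\>^x}$ into a genuine statement about the actual execution of $\DA$ on $\>^x$, reconciling the fact that under $\>^x$ man $m$ really proposes to $w_x$ \emph{before} $\mu(m)$ with the device of grafting the $w_x$-offer onto the finished $\>$-run while the $\mu(m)$-offer stays dormant. Once this faithfulness is secured --- which is exactly where order-independence of the proposal set together with the no-regret hypothesis $w_x\in W^\textsc{NR}$ do the work --- properties (ii), (iii), termination, and the ``cascade closes up'' step are routine consequences of the deferred-acceptance proposal mechanics.
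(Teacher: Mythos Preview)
Your approach is essentially correct and genuinely different from the paper's. The paper proves both inclusions by contradiction: for $P'_x\subseteq C_x$ it takes the first proposal $p=(m',w')\in P'_x\setminus C_x$ in the actual \DA{} run on $\>^x$, traces it back to the proposal $p'=(m',w'')$ that preceded it and to the proposal $p''=(m'',w'')$ that caused $m'$'s rejection at $w''$, and then runs a four-way case split on whether each of $p',p''$ lies in $P_{\>}\cup\{(m,w_x)\}$ or in the chain; for $C_x\subseteq P'_x$ it assumes a first chain proposal outside $P_{\>^x}$ and shows the chain rules force it to occur. Your forward induction with the explicit invariant (ii), ``each chain proposal lies strictly below the proposer's $\mu$-partner,'' is more transparent and eliminates the four-case analysis; and your ``cascade closes up'' argument for the reverse inclusion is cleaner than the paper's indirect route, which leans on the already-proved direction together with $\sigma_{\>^x}$ being a permutation of $P_{\>^x}$.

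There is, however, one real slip in your justification of (ii) for Case~B. You write that the displaced man $m'_x$ ``under $\>$ stopped proposing at his $\mu$-partner $w_x^{i-1}$,'' i.e.\ that $m'_x=\mu(w_x^{i-1})$. This need not hold: the woman $w_x^{i-1}$ may have been visited earlier in the chain, accepted some chain-proposer $m_1$, displaced $\mu(w_x^{i-1})$ at that point, and now be holding $m_1$; a later Case~B step at $w_x^{i-1}$ then displaces $m_1$, not $\mu(w_x^{i-1})$. The repair is immediate --- since $(m_1,w_x^{i-1})$ is itself an earlier chain proposal, the induction hypothesis already places $w_x^{i-1}$ below $\mu(m_1)$, so $m_1$'s next proposal is still below $\mu(m_1)$ --- but your stated invariant (``just beforehand holding his own $\mu$-partner'') is too strong; the correct invariant is exactly (ii), and the inductive step for Case~B must distinguish whether the displaced man is $\mu(w_x^{i-1})$ or a prior chain-proposer.
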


\begin{proof}
We first show that $P'_x$ is contained in $\{p_x^1, p_x^2, \dots, p_x^k\}$, which is the set of proposals that belong to $C'_x$. Suppose, for contradiction, that $p \coloneqq (m', w')$ is the \emph{first} proposal during the \DA{} execution on $\>^x$ that belongs to $P'_x \setminus \{p_x^1, p_x^2, \dots, p_x^k\}$. Note that for this proof we are not assuming the order defined by $\sigma_{\>^x}$ and are instead assuming that proposals occur in the order in which the \DA{} algorithm traditionally executes.

Since men propose in decreasing order of their preference and it is assumed that $p \in P_{\>^x}$, it must be that $m'$ was rejected by some woman, say $w''$, under $\>^x$ before proposing to $w'$. Notice that $w'$ cannot be the first woman in $m'$'s list, otherwise it must be that $p \in P_{\>} \cup \{(m, w_x)\}$, which would contradict our assumption that $p \in P'_x$.
Therefore, without loss of generality, we assume $w''$ is the woman directly above $w'$ in $\>^x_{m'}$. It must also be that $w''$ receives a proposal from some other man, say $m''$, such that $m'' \>_{w''} m'$. Clearly, $p' \coloneqq (m', w'')$ and $p'' \coloneqq (m'', w'')$ both occur before $p$ during the \DA{} execution on $\>^x$.

Because we assumed $p$ to be the first proposal during the $\DA$ execution on $\>^x$ to belong to $P'_x \setminus \{p_x^1, p_x^2, \dots, p_x^k\}$, we must have that $p' \notin P'_x \setminus \{p_x^1, p_x^2, \dots, p_x^k\}$. This implies that $p' \notin P'_x$ or $p' \in \{p_x^1, p_x^2, \dots, p_x^k\}$ (or both). The same can be said for proposal $p''$. 
Given these possibilities, we use case analysis to proceed with the proof.\\

\textbf{\underline{Case I}} (when $p' \notin P'_x$ and $p'' \notin P'_x$): Then, $p', p'' \in P_{\>} \cup \{(m, w_x)\}$ due to \Cref{lem:ProposalSetEquivalence}. If $p' = (m, w_x)$, then $p = (m, \mu(m))$, because the woman directly below $w_x$ in $\>^x_m$ is, without loss of generality, assumed to be $\mu(m)$. Recall that due to \Cref{prop:Permuting_Falsified_Lists}, we are able to assume any order of the women above $\mu(m)$ in $\>^x_m$. However, this would mean that $p \in P_{\>}$, posing a contradiction to our assumption that $p \in P'_x$. Therefore, we have that $p' \neq (m, w_x)$, and thus $p' \in P_{\>}$.

Now, we show that $p'' \neq (m, w_x)$. Suppose the contrary (i.e., $p'' = (m, w_x)$) is true. Because $m' \neq m''$, it must be that $m' \neq m$, and thus $p' \in P_{\>}$ due to our previous claim that $p', p'' \in P_{\>} \cup \{(m, w_x)\}$. If $m' = \mu(w_x)$ (i.e., $p' = (\mu(w_x), w_x)$), then by definition of $C_x$, we have that $p = (\mu(w_x), w') = p_x^1$, where $w'$ is the woman directly below $w_x$ in $\mu(w_x)$'s preference list. However, this poses a contradiction to our assumption that $p \in P'_x \setminus \{p_x^1, p_x^2, \dots, p_x^k\}$. If $m' \neq \mu(w_x)$, on the other hand, then it must be that $w'$ is below $w_x$ in $\>_{m'}^x$, because $p' \in P_{\>}$ and men propose in decreasing order of their preference under the \DA{} algorithm. This would imply that $p \in P_{\>}$, and therefore $p$ does not belong to the set $P'_x$, which again poses a contradiction. Thus, we have shown that $p'' \neq (m, w_x)$, which implies that $p'' \in P_{\>}$. 

Given that $p', p'' \in P_{\>}$, it can be shown that $p \in P_{\>}$. Indeed, if both $m'$ and $m''$ propose to $w''$ under $\>$, then $w''$ would reject $m'$ in favor of $m''$ under $\>$, because $m'' \>_{w''} m'$. In response, $m'$ would have to propose to the next woman in $\>_{m'}$, say $\bar{w}$. If $m'$ is truthful (i.e., $\>^x_{m'} \ = \ \>_{m'}$), then clearly $\bar{w} = w'$, and thus $p \in P_{\>}$. On the other hand, if $m'$ is the accomplice and $\bar{w} \neq w'$, then it must be that $w' = w_x$ or $w'' = w_x$, because we assume that the relative ordering of all women in $W \setminus \{w_x\}$ is the same in $\>^x_m$ and $\>_m$. Above, we showed that $p' \neq (m, w_x)$, which means $w'' \neq w'$. If $w' = w_x$, then it is easy to see that $p = (m, w_x)$, which would contradict our assumption that $p \in P'_x$. Thus, $\bar{w} = w'$ even if $m'$ is not truthful, which indicates that $p = (m', w')$ occurs under $P_{\>}$. However, our assumption was that $p' \notin P'_x$, giving us a contradiction, as desired.\\

\textbf{\underline{Case II}} (when $p' \notin P'_x$ and $p'' \in \{p_x^1, p_x^2, \dots, p_x^k\}$): Then, $p' \in P_{\>} \cup \{(m, w_x)\}$ due to \Cref{lem:ProposalSetEquivalence}. The proposals in $P_{\>} \cup \{(m, w_x)\}$ are above the set $\{p_x^1, p_x^2, \dots, p_x^k\}$ in the ordered list $\sigma_{\>^x}$, as previously defined. Therefore, $p'$ is above $p''$ in $\sigma_{\>^x}$. Without loss of generality, we say that $m'$ is the man who $w''$ is (temporarily) matched with when $m''$ proposes under $\sigma_{\>^x}$. 
Additionally, note that $p'' \neq p^k_x$, otherwise it can be shown that $p' = (m', w_x)$. It follows from the same arguments for Case I that $p' \neq (m', w_x)$. 
Then, according to Case B in our definition of the chain $C_x$, proposal $p$ must come directly after $p''$ in $C_x$.
However, this poses a contradiction, because we assumed $p \in P'_x \setminus \{p_x^1, p_x^2, \dots, p_x^k\}$.\\

\textbf{\underline{Case III}} (when $p' \in \{p_x^1, p_x^2, \dots, p_x^k\}$ and $p'' \notin P'_x$): Then, $p'' \in P_{\>} \cup \{(m, w_x)\}$ due to \Cref{lem:ProposalSetEquivalence}. The proposals in $P_{\>} \cup \{(m, w_x)\}$ are above the set $\{p_x^1, p_x^2, \dots, p_x^k\}$ in the ordered list $\sigma_{\>^x}$, as previously defined. Therefore, $p''$ is above $p'$ in $\sigma_{\>^x}$. Without loss of generality, we say that $m''$ is the man who $w''$ is (temporarily) matched with when $m'$ proposes under $\sigma_{\>^x}$. 
Additionally, note that $p' \neq p^k_x$, otherwise it can be shown that $p' = (m, \mu(m))$, which is not possible given our assumption that $p \in P_{\>}$.
Then, according to Case A in our definition of the chain $C_x$, proposal $p$ must come directly after $p'$ in $C_x$.
However, this poses a contradiction, because we assumed $p \in P'_x \setminus \{p_x^1, p_x^2, \dots, p_x^k\}$.\\

\textbf{\underline{Case IV}} (when $p' \in \{p_x^1, p_x^2, \dots, p_x^k\}$ and $p'' \in \{p_x^1, p_x^2, \dots, p_x^k\}$): If $p'$ is above $p''$ (respectively, $p''$ is above $p'$) in $\sigma_{\>^x}$, then the same arguments from Case II (respectively, Case III) imply that proposal $p$ belongs to the set $\{p_x^1, p_x^2, \dots, p_x^k\}$, posing a contradiction, as desired.\\

The four cases above cover all possible situations. Therefore, we have shown that $P'_x$ is contained in $\{p_x^1, p_x^2, \dots, p_x^k\}$. Now, we must show that the reverse is also true (i.e., $\{p_x^1, p_x^2, \dots, p_x^k\} \subseteq P'_x$). From \Cref{lem:ProposalSetEquivalence}, we know that $P_{\>^x} = P'_x \cup P_{\>} \cup \{(m, w_x)\}$. If $\{p_x^1, p_x^2, \dots, p_x^k\} \setminus P'_x \neq \emptyset$, then there must be a proposal in $\{p_x^1, p_x^2, \dots, p_x^k\}$ that does not belong to $P_{\>^x}$, since $P'_x \subseteq \{p_x^1, p_x^2, \dots, p_x^k\}$. Without loss of generality, suppose, for contradiction, that $p_x^i \coloneqq (m_x^i, w_x^i)$ is the \emph{first} proposal in $\{p_x^1, p_x^2, \dots, p_x^k\}$ to not belong to $P_{\>^x}$. Recall that $P_{\>}$ and $(m, w_x)$ are the only proposals before the set $\{p_x^1, p_x^2, \dots, p_x^k\}$ in $\sigma_{\>^x}$. Because $P_{\>} \cup \{(m, w_x)\} \subseteq P_{\>^x}$, it must be that $p_x^i$ is the first proposal in $\sigma_{\>^x}$ to not occur under $\>^x$.

Let $p^{i-1}_x \coloneqq (m_x^{i-1}, w_x^{i-1})$ be the proposal directly above $p^i_x$ in $\sigma_{\>^x}$. Note that it must be that $w_x^i$ is the woman directly below $w_x^{i-1}$ in $m_x^{i-1}$'s preference list given our definition of $\sigma_{\>^x}$ and $C_x$. Additionally, let proposal $\bar{p} \coloneqq (\bar{m}, w_x^{i-1})$ be such that $\bar{m}$ is the man who $w_x^{i-1}$ is (temporarily) matched with when $m_x^{i-1}$ proposes under $\sigma_{\>^x}$. If $m_x^{i-1} = m_x^i$, then, according to Case A in our definition of the chain $C_x$, it must be that $\bar{m} \>_{w_x^{i-1}} m_x^{i-1}$. Since $\bar{p}$ and $p^{i-1}_x$ are both above $p^{i}_x$ in $\sigma_{\>^x}$, they must belong to $P_{\>^x}$. However, this would imply that $w_x^{i-1}$ rejects $m_x^{i-1} = m_x^i$ in favor of $\bar{m}$ under $\>^x$, and thus $m_x^{i-1}$ would have to propose to $w_x^{i}$, giving us a contradiction, as desired.
If $m_x^{i-1} \neq m_x^i$, on the other hand, then, according to Case B in our definition of the chain $C_x$, it must be that $m_x^{i-1} \>_{w_x^{i-1}} \bar{m}$. Since $\bar{p}$ and $p^{i-1}_x$ are both above $p^{i}_x$ in $\sigma_{\>^x}$, they must belong to $P_{\>^x}$. However, this would imply that $w_x^{i-1}$ rejects $\bar{m} = m_x^{i}$ in favor of $m_x^{i-1}$ under $\>^x$, and thus $m_x^{i-1}$ would have to propose to $w_x^{i}$, again giving us a contradiction, as desired. Thus, we have shown that the set $\{p_x^1, p_x^2, \dots, p_x^k\}$ is contained in $P'_x$. 

Now that we have shown containment of $P'_x$ and $\{p_x^1, p_x^2, \dots, p_x^k\}$ in both directions, our proof is complete.
\end{proof}

We use the previous result to prove that for any pair of women, $w_x$ and $w_y$, who belong to $W^\textsc{NR}$, it must be that $P'_x$ and $P'_y$ are disjoint, or one is a subset of the other (\Cref{lem:Proposals_Subset_or_Disjoint}). More specifically, we show that if $P'_x$ and $P'_y$ are not disjoint (i.e., $P'_x \cap P'_y \neq \emptyset$), then there is either a proposal $(m_x, w_x) \in P_{\>^y}$ such that $m_x \>_{w_x} m$ or there is a proposal $(m_y, w_y) \in P_{\>^x}$ such that $m_y \>_{w_y} m$, which would indicate that $P'_x \subseteq P'_y$ or $P'_y \subseteq P'_x$.

\begin{restatable}{lemma}{ProposalsSubsetOrDisjoint}
For any pair $w_x, w_y \in W^\textsc{NR}$, at least one of the following must be true: 1) $P'_x \subseteq P'_y$, 2) $P'_y \subseteq P'_x$, or 3) $P'_x \cap P'_y = \emptyset$.
 \label{lem:Proposals_Subset_or_Disjoint}
\end{restatable}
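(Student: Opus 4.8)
The plan is to argue by contradiction: assume $P'_x \cap P'_y \neq \emptyset$ yet $P'_x \not\subseteq P'_y$ and $P'_y \not\subseteq P'_x$. First I would extract the consequence of \Cref{lem:Proposals_Set_Subset}: since $P'_x \not\subseteq P'_y$, there is no proposal $(m_x,w_x) \in P_{\>^y}$ with $m_x \>_{w_x} m$ (else we would get $P'_x \subseteq P'_y$), i.e.\ no man ranked above $m$ in $w_x$'s list ever proposes to $w_x$ during the \DA{} execution on $\>^y$; symmetrically, no man ranked above $m$ in $w_y$'s list proposes to $w_y$ under $\>^x$. On the other hand, since $w_x \in W^\textsc{NR}$ the profile $\>^x$ is a no-regret profile, so the accomplice $m$ proposes to $w_x$ and is rejected, meaning some $m_x \>_{w_x} m$ proposes to $w_x$ under $\>^x$; by \Cref{lem:chain_equivalence} the terminal proposal $p^k_x$ of the chain $C_x$ has exactly this form, $p^k_x = (m_x,w_x)$ with $m_x \>_{w_x} m$, and $p^k_x \in P'_x$. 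Likewise $p^{k'}_y = (m_y,w_y) \in P'_y$ with $m_y \>_{w_y} m$. The standing assumption then forces $p^k_x \notin P'_y$ and $p^{k'}_y \notin P'_x$.

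Next, using \Cref{lem:chain_equivalence} again, I would identify $P'_x$ with the ordered chain $C_x = (p^1_x,\dots,p^k_x)$ and $P'_y$ with $C_y = (p^1_y,\dots,p^{k'}_y)$ and exploit the deterministic structure of chains. Pick a proposal common to the two chains, and let $p^a_x = p^b_y = (m^\ast,w^\ast)$ be the one occurring earliest in $C_x$. The proposal immediately following $(m^\ast,w^\ast)$ in each chain is determined by $w^\ast$'s favourite proposer among $\mu(w^\ast)$ together with the men appearing in the portion of that chain preceding $(m^\ast,w^\ast)$; I would show that as long as these ``current best proposers'' coincide, the two chains advance in lockstep, and that at the first index where they differ it is because one chain --- say $C_y$ --- has already produced, in its earlier portion, a proposal to the diverging woman that $C_x$ only reaches later. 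Tracking this (a case analysis in the spirit of the proof of \Cref{lem:chain_equivalence}, split on whether the diverging woman rejects the incoming or the incumbent man, and ordering events via the auxiliary permutation $\sigma$), I expect to conclude that every proposal in $C_x$ already occurs somewhere in $C_y$; in particular the terminal proposal $p^k_x = (m_x,w_x)$ of $C_x$ occurs under $\>^y$.

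That conclusion contradicts the first paragraph, which said no man above $m$ in $w_x$'s list proposes to $w_x$ under $\>^y$. Hence the assumption fails and $P'_x \subseteq P'_y$ or $P'_y \subseteq P'_x$ or $P'_x \cap P'_y = \emptyset$, as claimed. The main obstacle is precisely the middle step: the \DA{} runs on $\>^x$ and $\>^y$ are genuinely different, so the chains need not keep agreeing after a shared proposal, and one must show every divergence is ``benign'' --- it happens only when the diverging woman already holds an offer from a man who appeared earlier in the other chain, so the ``missing'' proposal is recovered elsewhere in that chain. Making this precise is the meticulous ordering argument that carries essentially all the weight; the two appeals to \Cref{lem:Proposals_Set_Subset} and \Cref{lem:chain_equivalence} at either end are routine once it is in place.
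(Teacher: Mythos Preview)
Your setup is essentially the paper's: argue by contradiction, identify $P'_x$ and $P'_y$ with the chains $C_x$ and $C_y$ via \Cref{lem:chain_equivalence}, pick a common proposal, and invoke \Cref{lem:Proposals_Set_Subset} at the right moment. Where you diverge is in the target of the middle step, and that is where a gap appears.

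You aim to show that \emph{every} proposal of $C_x$ lies in $C_y$ (and in particular the terminal one $p^k_x$). But your own choice of starting point---the common proposal \emph{earliest in $C_x$}, say at index $a$---already defeats this: by construction $p^1_x,\dots,p^{a-1}_x \notin C_y$, so $C_x \subseteq C_y$ cannot hold unless $a=1$, which you have no way to force under the contradiction hypothesis. A forward lockstep argument starting at index $a$ can at best control the tail of $C_x$, not its head. You could try to salvage this by arguing only about the tail (so that either $p^k_x$ lands in $C_y$ or $p^{k'}_y$ lands in $C_x$), but you would still owe a proof that divergences are ``benign,'' and you acknowledge this is the part you have not carried out.

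The paper sidesteps the containment question entirely with a cleaner finish. From any common proposal $p'=(m',w')$, let $m''$ be $w'$'s favourite proposer whose proposal to $w'$ lies above $p'$ in \emph{both} orderings $\sigma_{\>^x}$ and $\sigma_{\>^y}$ (such $m''$ exists because $(\mu(w'),w')\in P_{\>}$ sits above everything in both). If $m''=m$ one checks $w'\in\{w_x,w_y\}$ and \Cref{lem:Proposals_Set_Subset} immediately gives one of the containments, contradicting the hypothesis. Otherwise, whichever of $m',m''$ is rejected by $w'$ is truthful, so his next proposal is the same in both runs and is again a (new, distinct) element of $C_x\cap C_y$. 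Iterating yields infinitely many distinct elements of the finite set $C_x\cap C_y$, which is the contradiction. No tracking of divergences, no containment claim, and \Cref{lem:Proposals_Set_Subset} is used only once inside the iteration rather than as the endgame.
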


\begin{proof}
Suppose, for contradiction, that $P'_x \setminus P'_y \neq \emptyset$, $P'_y \setminus P'_x \neq \emptyset$, and $P'_x \cap P'_y \neq \emptyset$.
Let $\sigma_{\>^x}$ and $\sigma_{\>^y}$ be permutations of the sets of proposals that occur under profiles $\>^x$ and $\>^y$, respectively, and let $C_x$ and $C_y$ be the respective chains of proposals, as defined earlier.

Our previous result (\Cref{lem:chain_equivalence}), combined with our contradiction assumption that $P'_x \cap P'_y \neq \emptyset$, implies there is at least some proposal, say $p' \coloneqq (m', w')$, in both $C_x$ and $C_y$ that belongs to $P'_x \cap P'_y$. 
Note that $m' \neq m$ (i.e., $m'$ is truthful), because $m$ does not make any proposals during $C_x$ and $C_y$. Indeed, all of his proposals are above the chains $C_x$ and $C_y$ in $\sigma_{\>^x}$ and $\sigma_{\>^y}$, respectively.

Since it is assumed that the set $P_{\>}$ occurs before the chains $C_x$ and $C_y$ and we know that $P_{\>}$ yields a complete matching, it must be that $w'$ has a proposal from some man, say $m''$ above $p'$ in both $\sigma_{\>^x}$ and $\sigma_{\>^y}$. Without loss of generality, suppose $m''$ is $w'$'s favorite proposer such that $(m'', w')$ is above $p'$ in both chains $C_x$ and $C_y$.

If $m'' \>_{w'} m'$, then $w'$ must reject $m'$ under both chains. Since $m' \neq m$, we know that $m'$'s preference list is the same under both instances. The woman he proposes to after $w'$, say $\bar{w}$, is therefore also the same. Thus, the proposal $(m', w'')$ also occurs under both $C_x$ and $C_y$. 

On the other hand, if $m' \succ_{w'} m''$, then $m'$ must reject $m''$ under both chains. If $m'' = m$, then it must be that $w' \in \{w_x, w_y\}$, and by \Cref{lem:Proposals_Set_Subset}, $P'_x \subseteq P'_y$ or $P'_y \subseteq P'_x$. However, this would contradict our assumption that $P'_x \setminus P'_y \neq \emptyset$ and $P'_y \setminus P'_x \neq \emptyset$. Therefore, $m''$ is truthful; his preference list is the same under both instances and the woman he proposes to after $w'$, say $w''$, is also the same. Thus, the proposal $(m'', w'')$ also occurs under both $C_x$ and $C_y$. 

For both cases, we have identified another proposal that occurs under both $C_x$ and $C_y$. We can repeat our analysis for this new proposal and so on. It follows that after each proposal, there must exist another unique proposal that occurs under both $C_x$ and $C_y$. However, we know that there are only a finite number of proposals that can occur under $C_x$ and $C_y$, thus posing a contradiction.
\end{proof}

We are now ready to prove \Cref{lem:PairwiseDisjointSets}.

\PairwiseDisjointSets*

\begin{proof}
Suppose, for contradiction, that $P'_{y_i} \cap P'_{y_j} \neq \emptyset$ for some pair $w_{y_i}, w_{y_j} \in Y$. Then, by \Cref{lem:Proposals_Subset_or_Disjoint}, either $P'_{y_i} \subseteq P'_{y_j}$ or $P'_{y_j} \subseteq P'_{y_i}$. Without loss of generality, suppose $P'_{y_i} \subseteq P'_{y_j}$. Let $P' \coloneqq P'_{y_1} \cup P'_{y_2} \cup P'_{y_3} \cup \dots$ and $P'' \coloneqq P' \setminus P'_{y_i}$. Then, it is clear that $P'' = P'$.

The profile $\>^Y$ can be created through no-regret push up operations on $\>^{y_1}$, $\>^{y_2}$, $\>^{y_3}, \dots$. Thus, \Cref{prop:PushUpProposals} shows that $P_{\>^{y_1}} \cup P_{\>^{y_2}} \cup P_{\>^{y_3}} \cup \dots \subseteq P_{\>^Y}$, and \Cref{prop:PushUpPropsalsContainment} implies that $P_{\>^Y} \subseteq P_{\>^{y_1}} \cup P_{\>^{y_2}} \cup P_{\>^{y_3}} \cup \dots$. From these observations, we observe that $P_{\>^Y} = P_{\>^{y_1}} \cup P_{\>^{y_2}} \cup P_{\>^{y_3}} \cup \dots$. Further, from \Cref{lem:ProposalSetEquivalence}, we have that $P_{\>^{y_k}} = P'_{y_k} \cup P_{\>} \cup \{(m,w_{y_k})\}$ for all $w_{y_k} \in Y$. This implies that $P_{\>^Y} = P' \cup P_{\>} \cup \{ (m, w_j) : w_j \in Y \}$. Further, if we denote $Y' \coloneqq Y \setminus \{w_{y_i}\}$, then a similar line of reasoning shows that $P_{\>^{Y'}} = P'' \cup P_{\>} \cup \{ (m, w_j) : w_j \in Y' \}$. Since $P' = P''$ and $Y' = Y \setminus \{w_{y_i}\}$, we infer that $P_{\>^Y \setminus \>^{Y'}} = \{(m, w_{y_i})\}$ and $P_{\>^{Y'} \setminus \>^Y} = \emptyset$.

Thus, the only difference between the proposal structure under $\>^Y$ and $\>^{Y'}$ is the proposal $(m, w_{y_i})$ by the accomplice. From \Cref{lem:NoRegretSubset}, we know that $m$ matches with $\mu(m)$ under $\>^Y$ and $\>^{Y'}$, which implies that no other agent's match is affected by this additional proposal, and therefore $\mu^Y = \mu^{Y'}$. However, this poses a contradiction to our assumption that $Y$ is minimal. Thus, the sets $P'_{y_1}, P'_{y_2}, P'_{y_3}, \dots$ must be pairwise disjoint.
\end{proof}

\subsection{Proof of Corollary~\ref{cor:PolyAlgorithm_MinimumPushUpSet}}

\MinNoRegretPushUpPolyTime*
\begin{proof}
(sketch) Recall that $W^\textsc{NR} \coloneqq \{w \in W \, : \, \>' \coloneqq \{ \>_{-m}, \>_m^{w \uparrow} \} \text{ is a no-regret profile}\}$ is the set of all women who do not cause $m$ to incur regret when pushed up individually. 
By \cref{thm:OptimalStrategy_MultipleWomen}, the optimal one-for-all strategy of the accomplice can be computed by promoting $W^\textsc{NR}$, which can be computed in $\O(n^3)$ according to \cref{cor:PolyAlgorithm_OptimalStrategy}.
It follows from \Cref{thm:No-regret_Minimal_Is_Minimum} that we can compute a minimum no-regret push up set by finding a minimal push up set.
Algorithm \ref{alg:MinimizePushUpSet} starts with the minimal no-regret set $W^\textsc{NR}$ as its current guess. In each iteration, it removes an agent from the current no-regret set and checks if the resulting matching changes. If the matching does not change, the guess is updated by removing the said agent and the check is repeated for the reduced set.
The maximum number of iterations is $\O(n)$ and the number of checks in each iteration (i.e., the size of the set $W^\textsc{NR}$) is also at most $\O(n)$.
For each check, running the \DA{} algorithm takes $\O(n^2)$ time, resulting in the total running time of $\O(n^4)$.
\end{proof}

\subsection{Tight Bound on the Size of Minimum No-Regret Push Up Sets} %
\label{sec:Tight_Bounds_Appendix}

In this section, we will provide an upper bound on the cardinality of a minimum no-regret push up set~(\Cref{prop:No-regret_MinimumSet_UpperBound}) and subsequently show that this bound is, in fact, tight (\Cref{eg:TightBound}).

First, we recall some notation that was introduced earlier for the proof of \Cref{thm:No-regret_Minimal_Is_Minimum}.
Given any profile $\>$, let $P_{\>}$ denotes the set of all proposals that occur in the execution of the \DA{} algorithm on the profile $\>$. Formally, for any man $m_i \in M$ and woman $w_j \in W$, the ordered pair $(m_i,w_j)$ belongs to the set $P_{\>}$ if $m_i$ proposes to $w_j$ during the execution of \DA{} algorithm on the profile~$\>$. Additionally, let $P_{\> \setminus \>'} \coloneqq P_{\>} \setminus P_{\>'}$ denote the set of proposals that occur under the profile $\>$ but not under~$\>'$.

Given a woman $w_x$ and a profile $\>^x \coloneqq \{ \>_{-m}, \>^{w_x \uparrow}_m \}$, let $P'_x \coloneqq \{ (m_i, w_j) \in P_{\>^x \setminus \>} : m_i \neq  m \}$ denote the set of proposals that occur under $\>^x$ but not $\>$ made by any man other than the accomplice. Similarly, given a set of women $Y$ and a corresponding profile $\>^Y \coloneqq \{ \>_{-m}, \>^{Y \uparrow}_m \}$, let $P'_Y \coloneqq \{ (m_i, w_j) \in P_{\>^Y \setminus \>} : m_i \neq  m \}$. 

\begin{restatable}[\textbf{Upper bound}]{proposition}{NoRegretMinimumSetUpperBound}
Let $Y \subseteq W^\textsc{NR}$ be a minimum no-regret push up set. Then, $|Y| \leq \lfloor \frac{n-1}{2} \rfloor$.
\label{prop:No-regret_MinimumSet_UpperBound}
\end{restatable}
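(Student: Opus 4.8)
The plan is to produce $2|Y|$ \emph{distinct} women, each of whom is strictly better off in the matching $\mu^{W^\textsc{NR}}\coloneqq\DA(\{\succ_{-m},\succ_m^{W^\textsc{NR}\uparrow}\})$, and then observe that the accomplice's partner $\mu(m)$ is not one of them; since $|W|=n$, this gives $2|Y|\le n-1$, i.e.\ $|Y|\le\lfloor\frac{n-1}{2}\rfloor$.

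First I would record a few reductions. By \Cref{thm:No-regret_Minimal_Is_Minimum} we may treat $Y$ as a \emph{minimal} no-regret push-up set, and we may assume $Y\neq\emptyset$ (otherwise the bound is trivial) and that $Y$ contains no woman ranked at or above $\mu(m)$ in $\succ_m$ (promoting such a woman is vacuous, so it could be dropped without affecting $\mu^Y$, contradicting minimality). Write $Y=\{w_{y_1},\dots,w_{y_t}\}$ and $\mu^{y_k}\coloneqq\DA(\{\succ_{-m},\succ_m^{w_{y_k}\uparrow}\})$. By minimality together with \Cref{lem:PushUp_Meet} (which identifies $\mu^{W^\textsc{NR}}(w)$ with $w$'s favorite among $\{\mu^{w'}(w):w'\in W^\textsc{NR}\}$, and where $\mu$ is woman-pessimal), each $\mu^{y_k}\neq\mu$. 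Since $\succ^{y_k}_m$ ranks $w_{y_k}$ first, $m$ proposes to $w_{y_k}$ and, by no-regret, is rejected; hence $a_k\coloneqq\mu^{y_k}(w_{y_k})$ satisfies $a_k\succ_{w_{y_k}}m\succ_{w_{y_k}}\mu(w_{y_k})$ and $a_k\neq m$, and \Cref{lem:PushUp_Meet} gives $\mu^{W^\textsc{NR}}(w_{y_k})\succeq_{w_{y_k}}a_k\succ_{w_{y_k}}\mu(w_{y_k})$, so $w_{y_k}$ strictly improves. Put $v_k\coloneqq\mu(a_k)$. When $w_{y_k}$ is pushed up, $a_k$ leaves $v_k$; by \Cref{prop:PushUp} all women weakly improve under a no-regret push-up, and $v_k$'s partner changed, so $v_k$ strictly improves too; also $v_k\neq w_{y_k}$ since $a_k\neq\mu(w_{y_k})$.

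It remains to show the $2t$ women $w_{y_1},\dots,w_{y_t},v_1,\dots,v_t$ are pairwise distinct; since $v_i=\mu(a_i)$, this amounts to showing the men $a_1,\dots,a_t$ are distinct and that no $a_i$ equals any $\mu(w_{y_j})$. This is where \Cref{lem:PairwiseDisjointSets} (the sets $P'_{y_1},\dots,P'_{y_t}$ are pairwise disjoint) and \Cref{lem:Proposals_Set_Subset} do the work. The recurring device: if for some $i\neq j$ some man ranked above $m$ by $w_{y_j}$ proposed to $w_{y_j}$ during the $w_{y_i}$-cascade, then \Cref{lem:Proposals_Set_Subset} forces $P'_{y_j}\subseteq P'_{y_i}$, contradicting disjointness; hence $\mu^{y_i}(w_{y_j})\prec_{w_{y_j}}m$ for all $i\neq j$, and so by \Cref{lem:PushUp_Meet} applied to $\{y_i,y_j\}$ we get $\mu^{\{y_i,y_j\}}(w_{y_j})=\mu^{y_j}(w_{y_j})=a_j$ (and symmetrically for $w_{y_i}$). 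If $a_i=a_j$ with $i\neq j$, this yields $\mu^{\{y_i,y_j\}}(w_{y_i})=a_i=a_j=\mu^{\{y_i,y_j\}}(w_{y_j})$, impossible. If $a_i=\mu(w_{y_j})$ with $i\neq j$, one tracks the man $a_i$ through $\succ^{\{y_i,y_j\}}$—its partner there is the man-worst of $\{w_{y_i},\mu^{y_j}(a_i)\}$—and compares with the partners forced on $w_{y_i}$ and $w_{y_j}$: each resulting sub-case closes either by exhibiting a proposal lying in both $P'_{y_i}$ and $P'_{y_j}$ (e.g.\ $(a_i,w_{y_i})$, which belongs to $P'_{y_i}$ as the chain-ending proposal of $C_{y_i}$, also showing up in $C_{y_j}$), or by forcing some woman to be matched to two distinct men. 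I expect this last case—ruling out the "collisions" $a_i=\mu(w_{y_j})$—to be the main obstacle, since it needs a somewhat delicate sub-case analysis built on the disjointness of the $P'$-sets and on \Cref{lem:Proposals_Set_Subset}; the rest of the argument is bookkeeping.

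Granting distinctness, $\{w_{y_1},\dots,w_{y_t},v_1,\dots,v_t\}$ is a set of $2t$ distinct women who are all strictly better off under $\mu^{W^\textsc{NR}}$. Since $m$ incurs no regret, $\mu^{W^\textsc{NR}}(\mu(m))=m=\mu(\mu(m))$, so $\mu(m)$ is not strictly better off and hence lies outside this set. Therefore $2t\le n-1$, i.e.\ $|Y|=t\le\lfloor\frac{n-1}{2}\rfloor$.
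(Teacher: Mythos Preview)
Your approach is the same as the paper's in spirit—count at least two ``changed'' agents per element of $Y$, use the pairwise disjointness of the $P'_{y_k}$ to show these don't overlap across $k$, and observe that the accomplice's partner is unchanged so the total is at most $n-1$. The difference is in execution: you try to explicitly name two improving \emph{women} $w_{y_k}$ and $v_k=\mu(a_k)$ per index and then argue distinctness, which forces you into the case analysis you flag as the ``main obstacle''. The paper instead counts \emph{men} whose partner changes: it sets $P^*_{y_k}\coloneqq\{(m',\mu^{y_k}(m')): \mu^{y_k}(m')\neq\mu(m')\}$, observes $P^*_{y_k}\subseteq P'_{y_k}$ (so these sets are pairwise disjoint by \Cref{lem:PairwiseDisjointSets}), invokes \Cref{prop:StrictPushUp} to get $|P^*_{y_k}|\ge 2$, shows $P^*_Y=\bigcup_k P^*_{y_k}$ via \Cref{lem:PushUp_Meet}, and concludes $n-1\ge|P^*_Y|\ge 2|Y|$. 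This sidesteps the collision analysis entirely.

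Your gap can be closed, though. Suppose $a_i=\mu(w_{y_j})$ with $i\neq j$. Under $\succ^{y_i}$, $a_i$ ends at $w_{y_i}$, so he was rejected by his $\mu$-partner $w_{y_j}$ and hence proposes to the next woman $w'$ on his list; since $a_i\neq m$ and $(a_i,w')\notin P_\succ$, we get $(a_i,w')\in P'_{y_i}$. Under $\succ^{y_j}$, the proposal $(a_i,w_{y_j})\in P_\succ\subseteq P_{\succ^{y_j}}$ occurs (by \Cref{prop:PushUpProposals}), and since $a_j\succ_{w_{y_j}}m\succ_{w_{y_j}}a_i$ the man $a_i$ is rejected and again proposes to $w'$, giving $(a_i,w')\in P'_{y_j}$. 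Thus $(a_i,w')\in P'_{y_i}\cap P'_{y_j}$, contradicting \Cref{lem:PairwiseDisjointSets}. So the collision cannot occur and your $2|Y|$ women are indeed distinct. Still, the paper's ``count the pairs in $P^*_{y_k}$'' route is shorter and avoids naming $a_k$, $v_k$ at all.
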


\begin{proof}
Suppose, by way of contradiction, that there exists some instance where $|Y| > \lfloor \frac{n-1}{2} \rfloor$. Let $Y \coloneqq \{ w_{y_1}, w_{y_2}, w_{y_3}, \dots \}$, and let $P^*_Y \coloneqq \{(m_i, w_j) \in P_{\>^Y} : w_j = \mu^Y(m_i) \neq \mu(m_i) \}$ denote the man-woman pairs who are matched under $\mu^Y$ but not under $\mu$ (i.e., the pairs in $\mu^Y \setminus \mu$).
Similarly, for all $w_{y_k} \in Y$, let $P^*_{y_k} \coloneqq \{(m_i, w_j) \in P_{\>^{y_k}} : w_j = \mu^{y_k}(m_i) \neq \mu(m_i) \}$, and let $P^* \coloneqq P^*_{y_1} \cup P^*_{y_2} \cup P^*_{y_3} \cup \dots$.

We first show that $P^*_Y = P^*$. Suppose there exists a pair $(m', w') \in P^*_Y \setminus P^*$. Then, $\mu^Y(m') \notin \{ \mu^{y_1}(m'), \mu^{y_2}(m'), \mu^{y_3}(m'), \dots \}$, posing a contradiction to \Cref{lem:PushUp_Meet}, which states that $\mu^Y = \mu^{y_1} \wedge \mu^{y_2} \wedge \dots$. Now, suppose there exists a proposal $(m', w') \in P^* \setminus P^*_Y$. Then, it must be the case that $\mu^{y_k}(m') \neq \mu^Y(m')$ for some $w_{y_k} \in Y$. From \Cref{lem:PushUp_Meet}, we then infer that $\mu^{y_k}(m') \>_{m'} \mu^{y_j}(m')$ for some $w_{y_j} \in Y \setminus \{w_{y_k}\}$. Note that $m' \neq m$ (i.e., $m'$ is truthful), because the accomplice's partner is unchanged between $\mu^Y$ and $\mu^{y_k}$ due to \Cref{lem:NoRegretSubset} and the no-regret assumption. Additionally, recall that $P'_{y_k} \coloneqq \{ (m_i, w_j) \in P_{\>^{y_k} \setminus \>} : m_i \neq m \}$ denotes the set of proposals that occur under $\>^{y_k}$ but not $\>$ made by any man other than the accomplice. Since men propose in decreasing order of their preference and $\>^{y_k}_{m'} \ = \ \>^{y_j}_{m'} \ = \ \>_{m'}$,
we infer that proposal $(m', \mu^{y_k}(m')) \in P'_{y_k} \cap P'_{y_j}$ 
However, this poses a contradiction to \Cref{lem:PairwiseDisjointSets}, which requires $P'_{y_k}$ and $P'_{y_j}$ to be pairwise disjoint. Thus, we must have that $P^*_Y = P^* = P^*_{y_1} \cup P^*_{y_2} \cup P^*_{y_3} \cup \dots$.

From \Cref{prop:PushUp}, it follows that $\mu(m) \succeq_{m} \mu^{y_k}(m)$ for all $m \in M$ and all $w_{y_k} \in Y$.
Recall that for every man $m' \in M$ such that $(m', w') \in P^*_{y_k}$ for some woman $w' \in W$, we have $\mu(m') \neq \mu^{y_k}(m')$, and thus $\mu(m') \>_{m'} \mu^{y_k}(m')$.
Additionally, we have that $m' \neq m$ (i.e., $m'$ is truthful) because $m$ matches with $\mu(m)$ under $\>^{y_k}$ due to the no-regret assumption of $w^{y_k} \in Y$. Since men propose in decreasing order of their preference and $\>^{y_k}_{m'} \ = \ \>_{m'}$, we infer that for all $w_{y_k} \in Y$, $P^*_{y_k} \subseteq P_{\>^{y_k} \setminus \>}$, where the man-woman pairs in $P^*_{y_k}$ translate to proposals in $P_{y_k \setminus \>}$. We also know that $P^*_{y_k}$ does not contain any pairs made by the accomplice $m$, which means that for all $w_{y_k} \in Y$, $P^*_{y_k} \subseteq P'_{y_k}$. This, combined with \Cref{lem:PairwiseDisjointSets}, which says that the sets $P'_{y_1}, P'_{y_2}, P'_{y_3}, \dots$ are pairwise disjoint, shows that the sets $P^*_{y_1}, P^*_{y_2}, P^*_{y_3}, \dots$ are also pairwise disjoint. Further, since $P^*_Y = P^*_{y_1} \cup P^*_{y_2} \cup P^*_{y_3} \dots$, we get that the sets $P^*_{y_1}, P^*_{y_2}, P^*_{y_3}, \dots$ constitute a partition of the set $P^*_Y$.

Since the accomplice $m$ matches with $\mu(m)$ under $\>^{y_k}$ for all $w_{y_k} \in Y$, it follows from \Cref{prop:StrictPushUp} that $\mu(m') \>_{m'} \mu^{y_k}(m')$ for at least two distinct men $m' \in M \setminus \{m\}$. This implies that for all $w_{y_k} \in Y$, $|P^*_{y_k}| \geq 2$. The number of $P^*_{y_1}, P^*_{y_2}, P^*_{y_3}, \dots$ sets is clearly $|Y|$, which we assumed, for contradiction, to be greater than $\lfloor \frac{n-1}{2} \rfloor$. Note that this equivalent to saying $|Y| > \frac{n-1}{2}$, because $|Y|$ must be a non-negative integer.
Since $P^*_{y_1}, P^*_{y_2}, P^*_{y_3}, \dots$ are pairwise disjoint sets, this implies that $|P^*_Y| = |P^*| > \frac{n-1}{2} \times 2 = n-1$. 
However, from \Cref{lem:NoRegretSubset}, we know that $\>^Y$ is a no-regret profile for the accomplice $m$, which implies that $P^*_Y$ can contain at most $n-1$ pairs. Thus, we obtain a contradiction.
\end{proof}

In \Cref{eg:TightBound}, we show that the bound in \Cref{prop:No-regret_MinimumSet_UpperBound} is tight by presenting a family of instances for which the cardinality of the minimum no-regret set is $\lfloor \frac{n-1}{2} \rfloor$.

\begin{example}[\textbf{Lower bound}] 
Consider the following family of preference profiles where the \DA{} outcome is underlined. The notation ``\dots'' denotes that the order of that portion of the preference list may be arbitrary. 

\begin{table}[H]
    \centering
    \begin{tabularx}{0.75\linewidth}{XXXXXXXXXXXXXXX}
            $\boldsymbol{\textcolor{blue}{{m_1}}}\colon$ & $\underline{w_1^*}$ & $\dots$ & && ${w_1}\colon$ & $\underline{m_1^*}$ & $\dots$\\
            ${m_2}\colon$ & $\underline{w_2}$ & $w_3^*$ & $\dots$ && ${w_2}\colon$ & $m_3^*$ & $m_1$ & $\underline{m_2}$ & $\dots$\\
            ${m_3}\colon$ & $\underline{w_3}$ & $w_2^*$ & $\dots$ && ${w_3}\colon$ & $m_2^*$ & $\underline{m_3}$ & $\dots$\\
            $\vdots$ & & & && $\vdots$\\
            ${m_i}\colon$ & $\underline{w_i}$ & $w_{i+1}^*$ & $\dots$ && ${w_i}\colon$ & $m_{i+1}^*$ & $m_1$ & $\underline{m_i}$ & $\dots$\\
            ${m_{i+1}}\colon{}$ & $\underline{w_{i+1}}$ & $w_i^*$ & $\dots$ && ${w_{i+1}}\colon$ & $m_i^*$ & $\underline{m_{i+1}}$ & $\dots$\\
            $\vdots$ & & & && $\vdots$\\
            ${m_{n-1}}\colon$ & $\underline{w_{n-1}}$ & $w_n^*$ & $\dots$ && ${w_{n-1}}\colon$ & $m_n^*$ & $m_1$ & $\underline{m_{n-1}}$ & $\dots$\\
            ${m_n}\colon$ & $\underline{w_n}$ & $w_{n-1}^*$ & $\dots$ && ${w_n}\colon$ & $m_{n-1}^*$ & $\underline{w_n}$ & $\dots$\\
        \end{tabularx}
\end{table}

Suppose the accomplice is $m_1$. We first consider the case where $n$ is odd. The \DA{} matching after $m_1$ pushes up the set $Y \coloneqq \{w_i \in W : i \text{ is even}\}$ is marked by $*$. Clearly, the resulting matching is optimal for $W$, since all women are matched with their top choice partners. Further, we can show that $Y$ is a \emph{minimum} no-regret push up set.

First, we show that no woman in the set $\bar{Y} \coloneqq \{w_i \in W : i \neq 1 \text { and $i$ is odd}\}$ belongs in a minimum push up set. Suppose, for contradiction, that this is not true. Then, let $Z$ be a minimum push up set that contains a woman, say $w_k$, in $\bar{Y}$, and let $Z' \coloneqq Z \setminus {w_k}$. Additionally, let $\>^Z$ and $\>^{Z'}$ be the profiles after $m_1$ pushes up the sets $Z$ and $Z'$, respectively. Since $Z \subseteq W^\textsc{NR}$ (due to our assumption that $Z$ is a minimum push up set and thus by definition is also a no-regret set) and $Z' \subseteq Z$, \Cref{lem:NoRegretSubset} implies that $\>^{Z}$ and $\>^{Z'}$ are both no-regret profiles. Therefore, it follows from \Cref{prop:PushUp} that $\mu^{Z'} \succeq_W \mu$, where $\mu^{Z'} \coloneqq \DA(\>^{Z'})$ and $\mu$ is the underlined matching in the profile above. Then, it must be that $\mu^{Z'}(w_k) \succ_{w_k} m_1$, because $\mu(w_k) \>_{w_k} m_1$ due to our assumption that $w_k \in \bar{Y}$. Since profile $\>^{Z}$ can be obtained from $\>^{Z'}$ via a no-regret push up of $w_k$, \Cref{prop:WeakPushUp} indicates that $\mu^{Z} = \mu^{Z'}$, which poses a contradiction to our assumption that $Z$ is a minimum push up set. Thus, we have shown no woman in the set $\bar{Y} \coloneqq \{w_i \in W : i \neq 1 \text { and $i$ is odd}\}$ belongs in a minimum push up set. Given this and that $w_1$ cannot belong in a push up set (since $\mu(m_1) = w_1$), if $Y$ is not a minimum push up set, then it must be that $Y$ is not a minimal push up set. However, it can easily be seen that removing any woman from $Y$ would be suboptimal. Indeed, if $m_1$ were to push up $Y \setminus \{w_{i}\}$, for instance, then $w_i$ and $w_{i+1}$ would both be strictly worse off than if he were push up the entire set $Y$. Thus, we have shown that $Y$ is a minimum push up set.

We also consider the case where $n$ is even. If we add ``dummy'' agents $m_0$ and $w_0$ such that $\>_{m_0} \coloneqq w_0 \dots$ and $\>_{w_0} \coloneqq m_0 \dots$ to the profile instance above, then the same arguments from the case where $n$ is odd hold.\qed
\label{eg:TightBound}
\end{example}

\section{Omitted Material from Section~\ref{sec:experiments}}
\label{sec:Additional_Experiments}

\paragraph{Comparing the quality of partners.}
In this section, we will compare the quality of partners under the one-sided and two-sided manipulation strategies for both the one-for-all and two-for-one models.

Recall that we generated $1\,000$ preference profiles uniformly at random  for each value of $n \in \{2, 4, \dots, 20\}$ (where $n$ is the number of men/women). For the one-for-all model, we choose the two-sided and one-sided strategies that maximize the sum of improvement in ranks of the matched partners of all women for each instance. For the two-for-one model, on the other hand, we only consider the instances where one-sided (respectively, two-sided) manipulation is better than truth telling.

\Cref{fig:two_sided_quality_partners_detailed} presents the results of this experiment along with the additional information about outliers.
Note that we count a value as an outlier if it is more than 1.5 times the interquartile range from either the lower or upper quartile.

\begin{figure*}%
    \centering
    \begin{minipage}{0.45\linewidth}
        \centering
        \includegraphics[width=\textwidth]{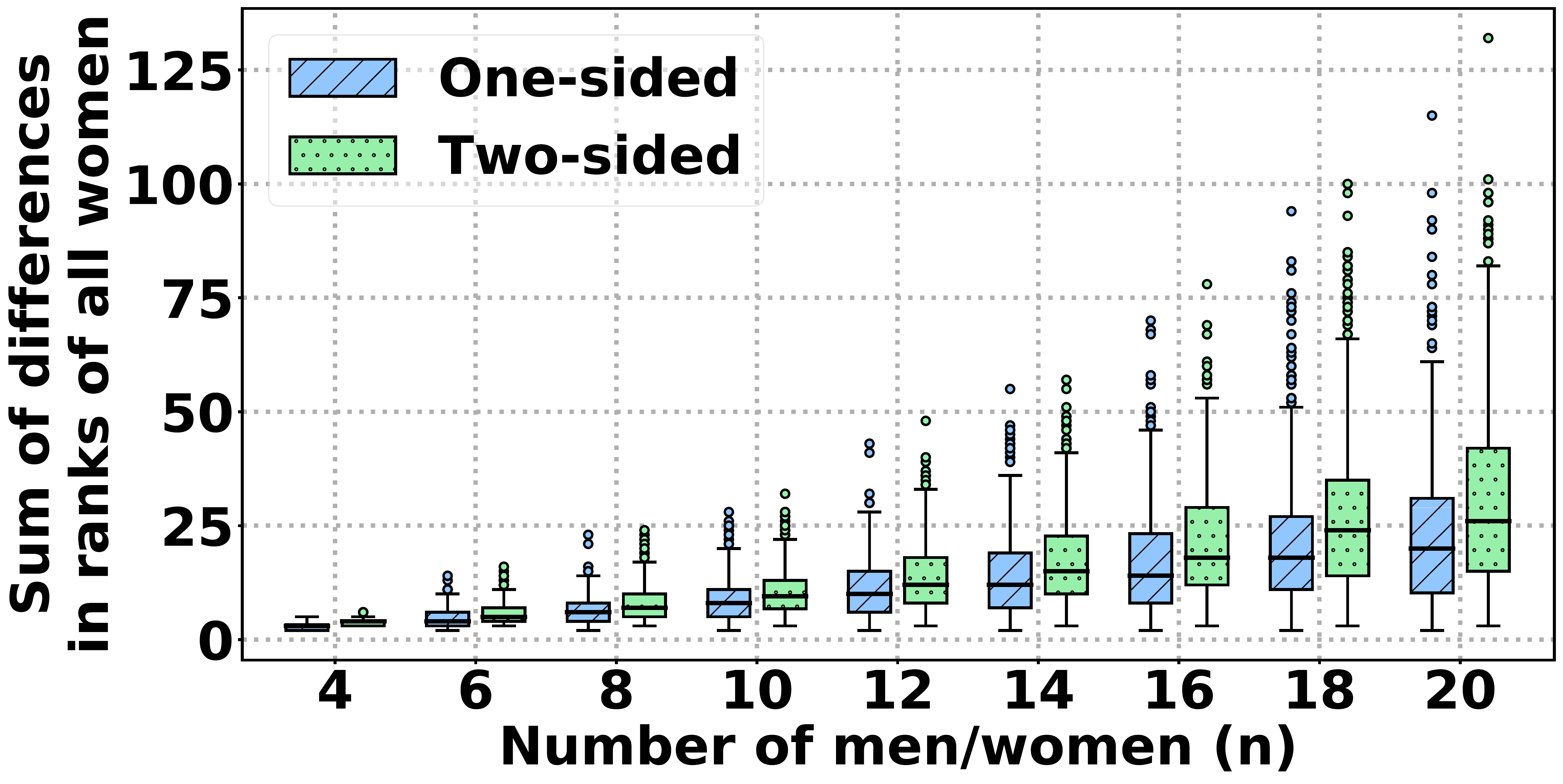}
    \end{minipage}
    \hfill
    \begin{minipage}{0.45\linewidth}
        \centering
        \includegraphics[width=\textwidth]{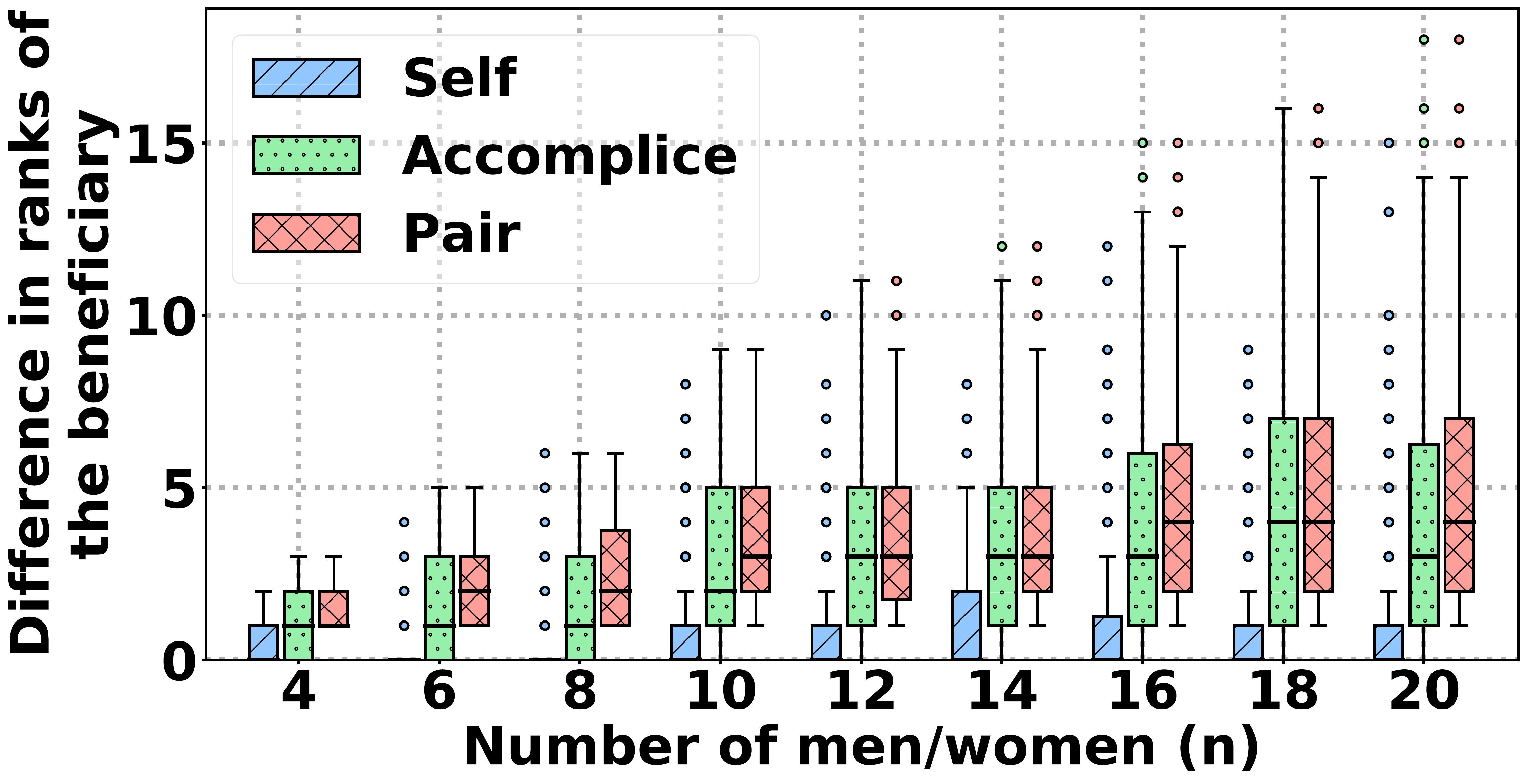}
    \end{minipage}
    \caption{Comparing one-sided and two-sided strategies against truthful reporting when \emph{all} women are beneficiaries (left) and comparing pair, self, and accomplice manipulation strategies for a \emph{single} beneficiary $w$ (right) in terms of improvement in ranks of their matched partners. The solid bars, whiskers, and dots denote the interquartile range, range excluding outliers, and outliers, respectively.}
     \label{fig:two_sided_quality_partners_detailed}
\end{figure*}

\Cref{fig:two_sided_quality_partners_detailed} shows that two-sided strategies result in \emph{better matched partners} than one-sided ones under both coalitional generalizations. The fact that two-sided outperforms one-sided under the two-for-one setting is not surprising (since the strategy space is strictly larger); however, our results highlight that the gains are non-trivial.

\paragraph{Comparing one-for-all strategies for larger numbers of agents.}
We recreate the experiments that compare the frequency of availability as well as the quality of partners of one-sided and two-sided strategies in the one-for-all model for larger values of $n$.\footnote{Note that due to the larger run-time for the pair manipulation algorithm (\Cref{thm:PairManipulation}), we did not do the same for our two-for-one experiments.}
More specifically, we generate $1\,000$ uniformly random preference profiles for each value of $n \in \{10, 20, \dots, 100\}$.
\Cref{fig:OneForAll_n=100} (left) illustrates the fractions of instances where some man can misreport in a way that is beneficial for women as well as the analogous fractions where some woman can help all women.%

We additionally show the distributions of improvement (in terms of the sum of rank differences for all women) for the one-sided (respectively, two-sided) strategy that maximizes rank improvement. The one-sided (respectively, two-sided) manipulation boxplots in \Cref{fig:OneForAll_n=100} (middle) only reflect the data for when one-sided (respectively, two-sided) manipulation is successful. 

Further, we show the distributions of women strictly better off after the one-sided (respectively, two-sided) strategy that maximizes the number of women who improve their match. Again, the one-sided (respectively, two-sided) manipulation boxplots in \Cref{fig:OneForAll_n=100} (right) only reflect the data for when one-sided (respectively, two-sided) manipulation is successful. 

It is evident that two-sided manipulation is not only more prevalent than one-sided manipulation but also, in expectation, is more likely to help more woman and get them matched with better quality partners.

\begin{figure*}%
    \centering
    \begin{minipage}{0.32\linewidth}
        \centering
        \includegraphics[width=\textwidth]{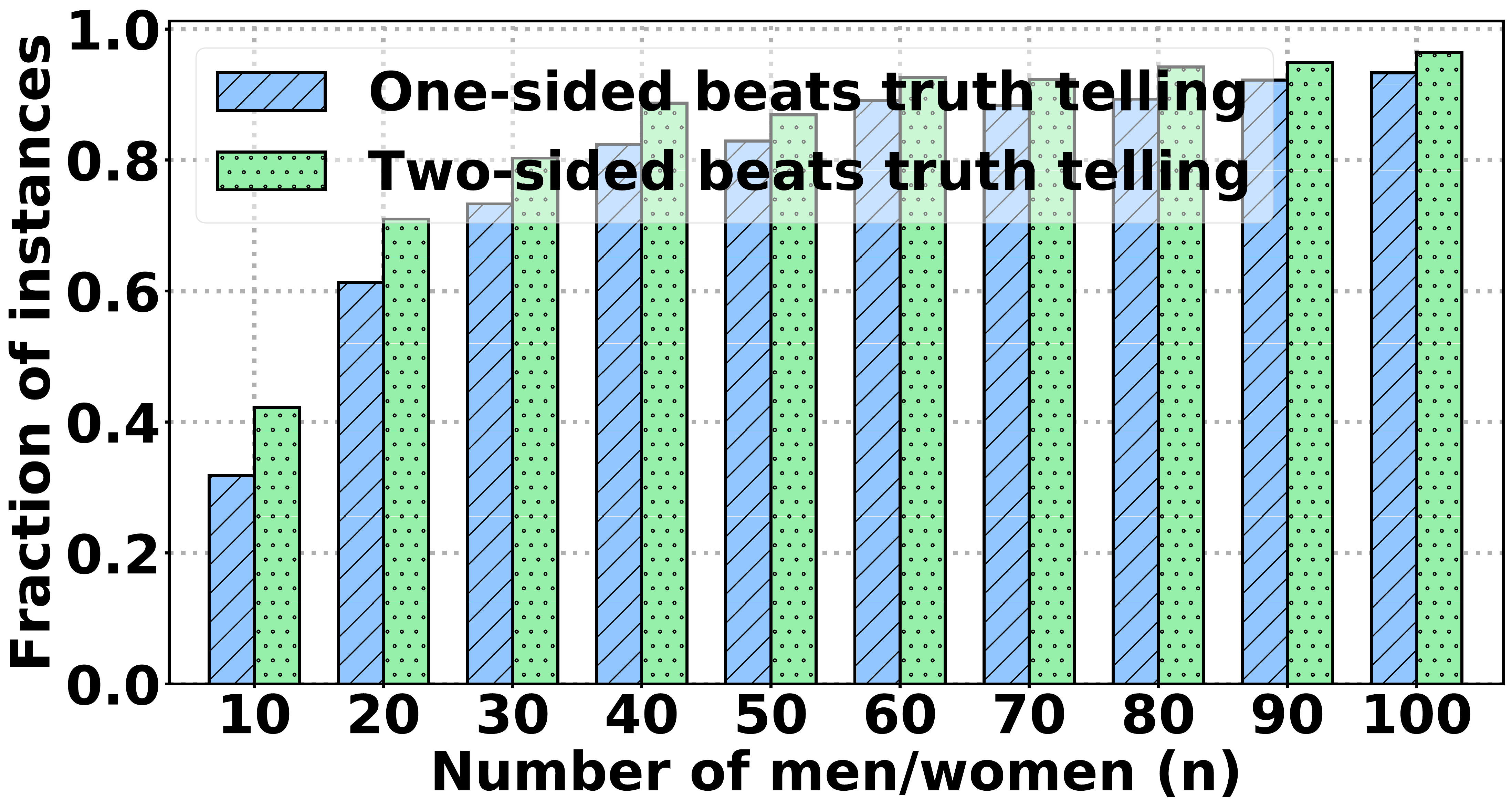}
    \end{minipage}
    \hfill
    \begin{minipage}{0.32\linewidth}
        \centering
        \includegraphics[width=\textwidth]{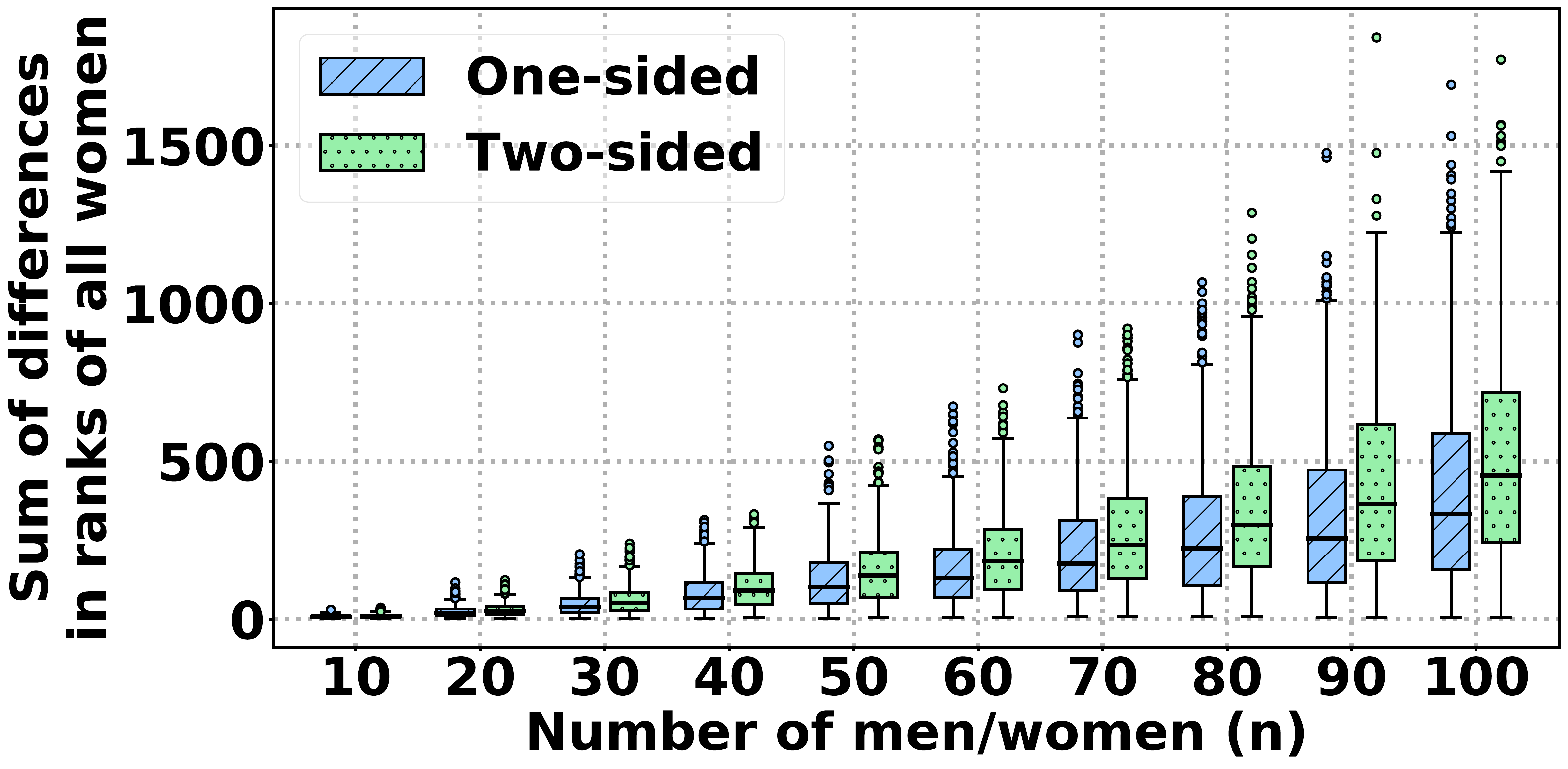}
    \end{minipage}
    \hfill
    \begin{minipage}{0.32\linewidth}
        \centering
        \includegraphics[width=\textwidth]{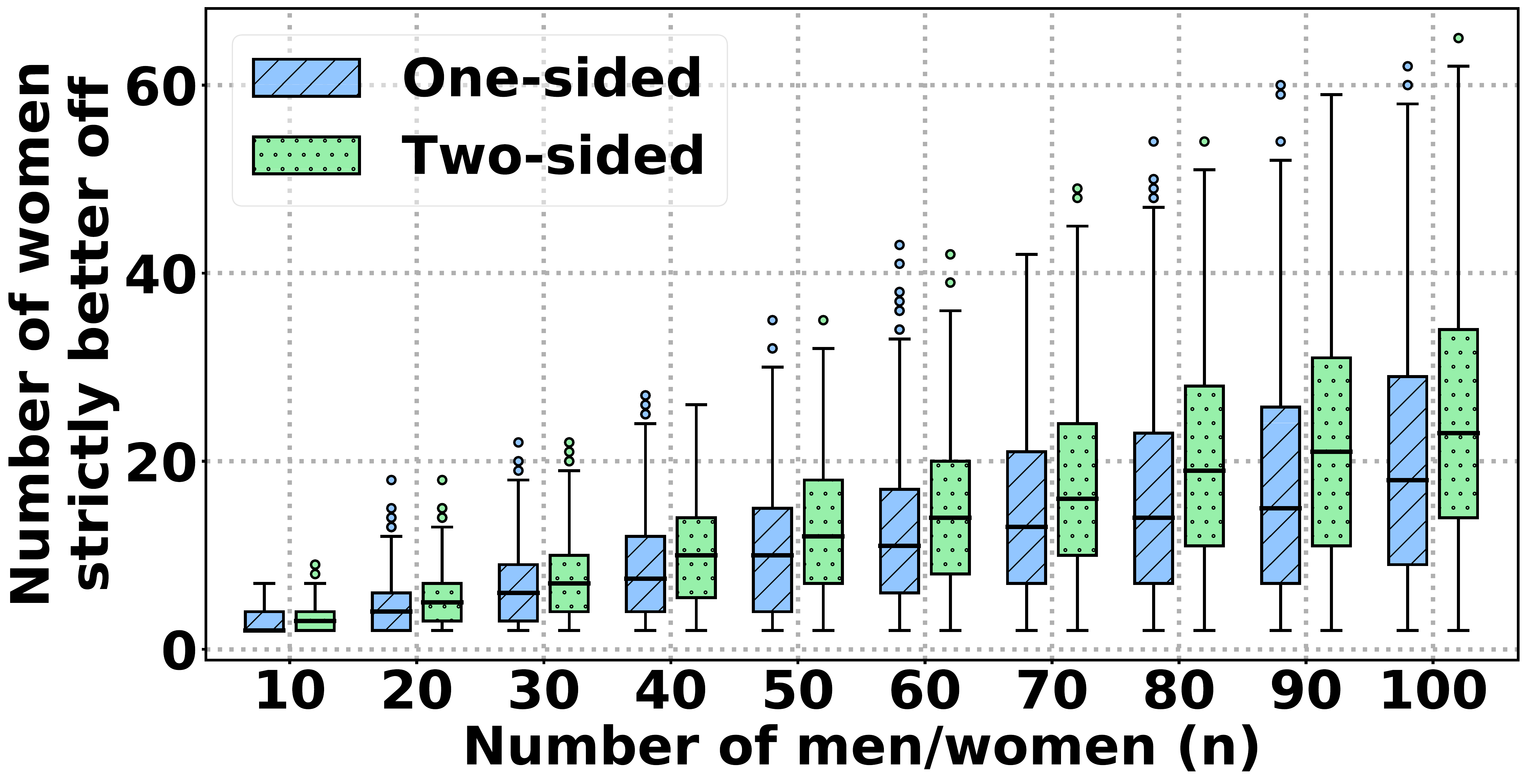}
    \end{minipage}
    \caption{Comparing one-sided and two-sided strategies against truthful reporting for $n = 100$ when \emph{all} women are beneficiaries in terms of 1) frequency of occurrence (left), 2) sum of improvement in the ranks of the new partners of all women (middle), and 3) number of women strictly better off after manipulation. The solid bars, whiskers, and dots denote the interquartile range, range excluding outliers, and outliers, respectively.}
     \label{fig:OneForAll_n=100}
\end{figure*}

\paragraph{How conspicuous is the accomplice's one-for-all list?}
Although optimal one-for-all accomplice manipulation is not guaranteed to be inconspicuous (\Cref{eg:inconspicuous_is_suboptimal}), we show that, in expectation, the manipulated list of the accomplice may not look very different from the true list. For $10\,000$ uniformly random preference profiles with 20 men and 20 women, we tabulate the minimum number of women who must be pushed up to create an optimal strategy (computed using \Cref{alg:MinimizePushUpSet}) for some fixed accomplice $m$ in \Cref{tab:WomenWhoImprove}.\footnote{Unlike for our previous experiments, we ran $10\,000$ instances here in order to identify a clearer pattern in the data.}

\begin{table}[h]
\centering
\begin{tabular}{l|cccc} 
    Size of minimum push up set & 0 & 1 & 2 & 3\\
    \hline%
    No. of instances & 7952 & 1947 & 100 & 1\\
\end{tabular}%
\caption{The number of instances (out of $10,000$) where there are varying numbers of women in the minimum-sized no-regret push up set when $n$ = 20.}
\label{tab:WomenWhoImprove}
\end{table}

We similarly computed the size of the minimum push up set for some fixed man $m$ for each of $1\,000$ preference profiles generated uniformly at random for each value of $n \in \{10, 20, \dots, 100\}$. \Cref{fig:PushUpSetSize} illustrates the distributions of the size of the sets out of only those instances where manipulation is strictly better than truthful reporting. The results suggest that the expected size of the push up set does not grow at the same rate as the number of agents.

\begin{figure}[ht]
    \centering
    \includegraphics[width=0.5\linewidth]{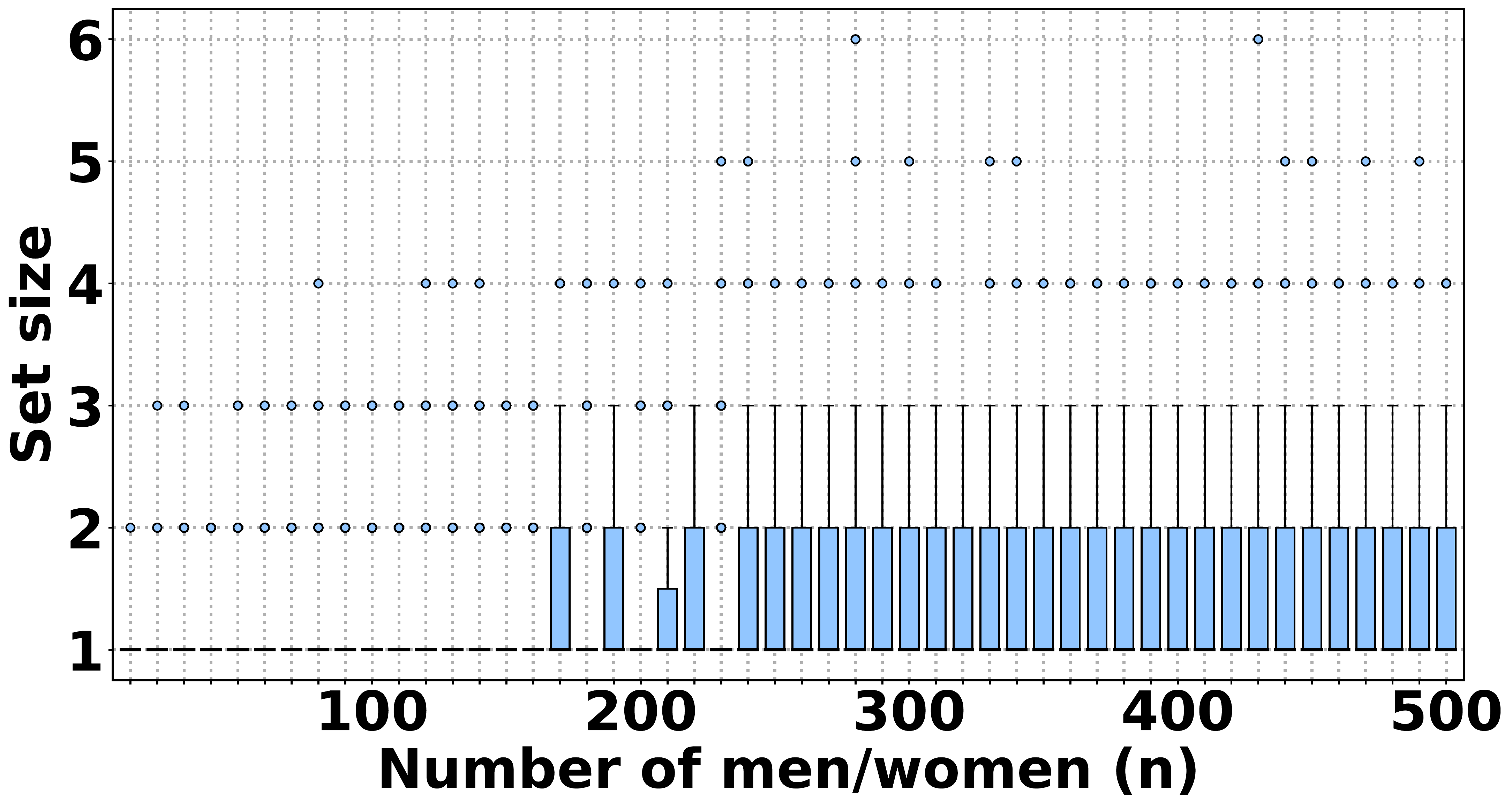}
    \caption{Distributions of the size of minimum no-regret push up sets. The solid bars, whiskers, and dots denote the interquartile range, excluding outliers, and outliers, respectively.}
\label{fig:PushUpSetSize}
\end{figure}

\end{document}